\documentclass[11pt,a4paper]{article}

\usepackage[margin=1in]{geometry}
\usepackage[main=english, russian]{babel}
\usepackage{amsmath,amssymb,amsthm,mathtools}
\usepackage{mathrsfs}
\usepackage{enumitem}
\usepackage{hyperref}
\usepackage{microtype}
\usepackage{tikz-cd}
\usetikzlibrary{quotes} 
\usetikzlibrary{cd}
\usepackage{cleveref}
\usepackage{epigraph}

\usepackage[OT2, T1]{fontenc}

\hypersetup{
colorlinks=true,
linkcolor=blue,
citecolor=blue,
urlcolor=blue
}

\newcommand{\Diff}{\operatorname{Diff}}
\newcommand{\Cartan}{\operatorname{Cartan}}
\newcommand{\Dens}{\operatorname{Dens}}
\newcommand{\End}{\operatorname{End}}
\newcommand{\Hom}{\operatorname{Hom}}

\newcommand{\Bc}{{\mbox{\selectlanguage{russian}B}}}
\newcommand{\bc}{{\mbox{\selectlanguage{russian}b}}}

\newtheorem{theorem}{Theorem}[section]
\newtheorem{lemma}[theorem]{Lemma}
\newtheorem{proposition}[theorem]{Proposition}
\newtheorem{corollary}[theorem]{Corollary}
\theoremstyle{definition}
\newtheorem{definition}[theorem]{Definition}
\newtheorem{remark}[theorem]{Remark}
\newtheorem{notation}[theorem]{Notation}
\newtheorem{example}[theorem]{Example}

\title{\textbf{
Batalin--Vilkovisky algebras from \v{S}evera bicomplexes
}}

\author{
  Eugenia Boffo\thanks{Mathematical Institute, Faculty of Mathematics and Physics, Charles University, Sokolovsk\'a 49/83 -- 18675 Prague -- Czechia -- {\tt boffo@karlin.mff.cuni.cz}} \and
  Domenico Fiorenza\thanks{Dipartimento di Matematica ``Guido Castelnuovo'' --
Sapienza Universit\`a di Roma --
P.le Aldo Moro, 2 --
00185 --
Roma --
Italy -- {\tt domenico.fiorenza@uniroma1.it}
}}

\begin{document}

\maketitle

\epigraph{{\small \emph{Nella parte del mondo in cui sono nato, in cui sono nato\\
Tutto \`{e} gi\`{a} stato detto\\
Tutto \`{e} gi\`{a} stato pensato}}}{{\small I Cani}}

\begin{abstract}
A \v{S}evera bicomplex is a bicomplex $(M,d_0,d_1)$ whose differentials $d_i$ are Grothendieck differential operators of order $\leq i$ with respect to a given $A$-module structure on $M$, where $A$ is a graded commutative algebra. One also requires that $(M,d_0)$ admits a contracting homotopy $K$ that is a differential operator of the same order as $d_0$. Under suitable assumptions, the differential $d_2$ of the second page of the spectral sequence associated with $(M,d_0,d_1)$ is a Batalin--Vilkovisky operator on $E_2$, so that when $E_2$ is a free rank 1 $A$-module, the graded commutative algebra structure of $A$ is enhanced to a Gerstenhaber algebra structure, and the choice of a $d_2$-closed basis element for $E_2$ further enhances this to a Batalin--Vilkovisky algebra structure. The prototypical example of this construction is \v{S}evera’s description of the Batalin--Vilkovisky algebra structure on the algebra of smooth functions of an odd symplectic manifold. Throughout the whole article we look at Batalin-Vilkovisky algebras through the lenses of Cartan calculus. Indications of a generalization to derived Cartan calculus are briefly discussed in the concluding section.

\end{abstract}

\tableofcontents

\section{Grothendieck differential operators}
\label{sec:1}

Let $k$ be a characteristic zero field. let $A$ be a graded commutative unital $k$-algebra and let $M,N$ be graded $A$-modules.
We begin with the usual definition of differential operators in the
sense of Grothendieck \cite{Grothendieck_EGA4_1}, see also \cite{Berthelot1978NotesOC}.

For $a\in A$, denote by
\[
\rho_a^M\colon M\longrightarrow M,
\qquad
\rho_a^M(m)=am
\]
the multiplication operator by $a$. When there is no danger of
confusion, we simply write $\rho_a$.

\begin{definition}[Grothendieck, EGA IV]\label{def:diff}
Set
\[
\Diff_A^{\leq -1}(M,N):=0.
\]
For $n\geq 0$, define inductively
\[
\Diff_A^{\leq n}(M,N)
:=
\left\{
\Bc\in\Hom_{\mathbb Z}(M,N)
\mid
[\Bc,\rho_a]\in
\Diff_A^{\leq n-1}(M,N)
\text{ for all }a\in A
\right\},
\]
where
\[
[\Bc,\rho_a]
=\Bc\circ\rho_a^M-(-1)^{\deg(\Bc)\deg(\rho_a)}\rho_a^N\circ \Bc=
\Bc\circ\rho_a^M-(-1)^{\deg(\Bc)\deg_A(a)}\rho_a^N\circ \Bc,
\]
and set 
\[
\Diff_A(M,N)=\bigcup_n \Diff_A^{\leq n}(M,N).
\]
\end{definition}
\begin{remark}
By construction,  $\Diff_A(M,N)$ is a filtered graded $A$-module, where $A$ acts by left multiplication. 
\end{remark}
\begin{remark}\label{rem:order-zero}
For $n=0$, the definition gives
\begin{align*}
\Diff_A^{\leq0}(M,N)
& =
\left\{
\Bc\in\Hom_{\mathbb Z}(M,N)
\mid
\Bc(am)=(-1)^{\deg(\Bc)\deg_A(a)} a\Bc(m)
\text{ for all }a\in A \text{ and }m\in M 
\right\} \\
& 
=
\Hom_A(M,N).
\end{align*}
Thus degree zero
operators of order at most zero are precisely $A$-module morphisms from $M$ to $N$ (of arbitrary degree).
\end{remark}
\begin{remark}\label{rem:derived-bracket}
One easily sees inductively that an operator $\Bc$ has order at most $n$ precisely when its $(n+1)$-fold
iterated commutators with multiplication operators vanish, i.e.,  for $\Bc\in\Hom_{\mathbb Z}(M,N)$, the following are equivalent:
\[
\Bc\in\Diff_A^{\leq n}(M,N)
\]
and
\[
[\cdots[[\Bc,\rho_{a_0}],\rho_{a_1}],\ldots,\rho_{a_n}]=0
\]
for every $a_0,\ldots,a_n\in A$. Equivalently, recalling Remark \ref{rem:order-zero}, operators of order at most $n$ are precisely those whose $n$-fold
iterated commutators with multiplication operators is an $A$-module morphism from $M$ to $N$:
\[
[\cdots[[\Bc,\rho_{a_1}],\rho_{a_2}],\ldots,\rho_{a_n}]\in \Hom_{A} (M,N).
\]
 We then have a natural map
\[
\Diff_A^{\leq n}(M,N)\otimes A^{\otimes n}\to \Hom_A(M,N),
\]
and so every element $\Bc$ in $\Diff_A^{\leq n}(M,N)$ defines an operator 
\[
\{-\}_{\Bc}\colon A^{\otimes n}\to \Hom_A(M,N)[\deg\Bc]
\]
given by
\[
\{a_1,\dots,a_n\}_{\Bc}=[\cdots[[\rho_{a_1},\Bc],\rho_{a_2}],\ldots,\rho_{a_n}]=-(-1)^{\deg_A(a_1)\deg(\Bc)}[\cdots[[\Bc,\rho_{a_1}],\rho_{a_2}],\ldots,\rho_{a_n}].
\]
This can be thought of as a derived bracket with values in $\Hom_A(M,N)$. 
We will come back to this in what follows. The choice of the prefactor $-(-1)^{\deg_A(a)\deg(\Bc)}$, i.e., of having the commutator $[\Bc, \rho_{a_1}]$ instead of the more natural $[\Bc,\iota]$, is dictated by the request of having better behaved signs later, see Example \ref{ex:order2} and the subsequent Lemmas.

\end{remark}
\begin{remark}\label{rem:sequence}
Up to the prefactor $-(-1)^{\deg_A(a_1)\deg(\Bc)}$, the bracket $\{-\}_{\Bc}$ from Remark \ref{rem:derived-bracket} is actually the $n$-th in a sequence of morphisms of $k$-vector spaces
\[
{\lambda^{(i)}_\Bc} \colon A^{\otimes i}\to \Diff_A^{\leq n-i}(M,N), \qquad \text{for }i=0,\dots,n,
\]
where ${\lambda^{(0)}_\Bc}=\Bc$ and 
\[
{\lambda^{(i+1)}_\Bc}(a_1,\dots,a_i,a_{i+1})=[{\lambda^{(i)}_\Bc}(a_1,\dots,a_i),\rho_{a_{i+1}}].
\]
Since $\Diff_A^{\leq n-i}(M,N)\subseteq \Hom_k(M,N)$, we can think of all of the $\lambda^{(i)}_\Bc$ as taking values in $\Hom_k(M,N)$. Doing this one easily sees that an element in $\Diff_A^{\leq n}(M,N)$ of degree $k$ is equivalent to a sequence of 
morphisms of $k$-vector spaces of degree $k$
\[
\lambda^{(i)}\colon A^{\otimes i}\to \Hom_k(M,N[k]), \qquad \text{for }i=0,\dots,n
\]
such that
\begin{itemize}
    \item $\lambda^{(i+1)}(a_1,\dots,a_i,a_{i+1})=[\lambda^{(i)}(a_1,\dots,a_i),\rho_{a_{i+1}}]$ for any $i=0,\dots, n-1$;
    \item $\lambda^{(n)}(a_1,\dots,a_n)\in \Hom_{A}(M,N[k])$, for any $a_1,\dots, a_n$ in $A$.
\end{itemize}
In agreement with the notation in Remark \ref{rem:derived-bracket}, we write
\[
\mathcal{L}^{(i)}(a_1,\dots,a_i)=-(-1)^{\deg_A(a_1)k}\lambda^{(i)}(a_1,\dots,a_i)
\]
for $i\geq 1$, will write simply $\mathcal{L}$ for $\mathcal{L}^{(1)}$, and will write such a sequence as
\[
(\Bc,\mathcal{L},\mathcal{L}^{(2)},\dots, \mathcal{L}^{n-1}, \{-\}_n).
\]
Notice that for $i\geq 1$ we have 
\[
\mathcal{L}^{(i+1)}(a_1,\dots,a_i,a_{i+1})=[\mathcal{L}^{(i)}(a_1,\dots,a_i),\rho_{a_{i+1}}].
\]
\end{remark}

\begin{proposition}\label{prop:multiplicative-filtration}
The Grothendieck filtration is multiplicative: if $M,N$ and $P$ are graded $A$-modules, composition of morphisms of $k$-vector spaces induces a composition
\[
\circ\colon \Diff_A^{\leq m}(N,P)
\otimes
\Diff_A^{\leq n}(M,N)
\subseteq
\Diff_A^{\leq m+n}(M,P).
\]
In particular, when $M=N$ one sees that $\Diff_A(M,M)$ is a filtered graded algebra.
\end{proposition}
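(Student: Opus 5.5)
We argue by induction on $m+n$, using only Definition \ref{def:diff} and the graded Leibniz rule for commutators with multiplication operators. Let $\Bc'\in\Diff_A^{\leq m}(N,P)$ and $\Bc\in\Diff_A^{\leq n}(M,N)$ be homogeneous; bilinearity then gives the general case. The base of the induction is handled by the convention $\Diff_A^{\leq -1}=0$. If $m=-1$ or $n=-1$, the composite is zero and lies in $\Diff_A^{\leq m+n}$, since that space contains $0$. (One could instead start at $m=n=0$: a composite of $A$-linear maps is $A$-linear, by Remark \ref{rem:order-zero}.)

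For the inductive step, fix $a\in A$. The key identity is the graded Leibniz rule for the commutator with $\rho_a$:
\[
[\Bc'\circ\Bc,\rho_a]=\Bc'\circ[\Bc,\rho_a]+(-1)^{\deg(\Bc)\deg_A(a)}[\Bc',\rho_a]\circ\Bc .
\]
Here $\rho_a$ denotes $\rho_a^M$, $\rho_a^N$ or $\rho_a^P$ as appropriate. The identity follows by expanding both sides, which gives
\[
\Bc'\Bc\rho_a^M-(-1)^{(\deg\Bc'+\deg\Bc)\deg_A(a)}\rho_a^P\Bc'\Bc
\]
on each side; the middle terms $\pm\Bc'\rho_a^N\Bc$ cancel. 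This sign check is the only real computation, and the main thing to get right is the Koszul sign attached to the second summand.

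By definition, $[\Bc,\rho_a]\in\Diff_A^{\leq n-1}(M,N)$ and $[\Bc',\rho_a]\in\Diff_A^{\leq m-1}(N,P)$. The induction hypothesis applies to the pairs $(m,n-1)$ and $(m-1,n)$, whose sums are smaller than $m+n$. It shows that both summands lie in $\Diff_A^{\leq m+n-1}(M,P)$, and this space is closed under sums and scalar signs. Since $a$ was arbitrary, $\Bc'\circ\Bc\in\Diff_A^{\leq m+n}(M,P)$.

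For the last assertion, take $M=N=P$. The inclusion just proved shows that composition respects the filtration, so $\Diff_A(M,M)=\bigcup_n\Diff_A^{\leq n}(M,M)$ is closed under composition. The identity map has order $0$, so this union is a filtered graded subalgebra of $\Hom_{\mathbb Z}(M,M)$. Associativity and the grading are inherited from $\Hom_{\mathbb Z}(M,M)$, with $\deg(\Bc'\circ\Bc)=\deg\Bc'+\deg\Bc$.
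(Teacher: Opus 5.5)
Your proof is correct and follows the paper's argument: induction on $m+n$ via the graded Leibniz identity $[\Bc'\circ\Bc,\rho_a]=\Bc'\circ[\Bc,\rho_a]\pm[\Bc',\rho_a]\circ\Bc$, with the cases of negative order trivial. The only differences are minor. You pin down and correctly verify the Koszul sign $(-1)^{\deg(\Bc)\deg_A(a)}$, which the paper leaves unspecified. You also use the convention $\Diff_A^{\leq -1}=0$ as the base of the induction, where the paper starts explicitly from $m=n=0$.
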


\begin{proof}
If $\min(m,n)<0$ the claim holds trivially, so we may assume $m,n\geq 0$. We proceed by induction on $m+n$. When $m+n=0$ we have $m=n=0$ and the statement reduces to the fact that the composition of $A$-module morphisms is an $A$-module morphism. Next, assume $\Bc\in \Diff_A^{\leq m}(N,P)$ and $\Psi\in \Diff_A^{\leq n}(M,N)$. Then, up to signs that we do not specify since they are not relevant here,
\[
[\Bc\circ \Psi,\rho_a]=\Bc\circ[ \Psi,\rho_a]\pm [\Bc,\rho_a]\circ \Psi.
\]
We have $[ \Psi,\rho_a]\in \Diff_A^{\leq n-1}(M,N)$ and $[\Bc,\rho_a]\in \Diff_A^{\leq m-1}(N,P)$, and the conclusion follows by induction.
\end{proof}
\begin{remark}
When $M=N$, the commutator bracket on $\End_k(M)$ induces a commutator bracket on $\Diff_A(M,M)$. Moreover, if the total Grothendieck order is strictly positive, the commutator bracket lowers it by one: 
\[
[\,,\,]\colon \Diff_A^{\leq m}(M,M)
\otimes
\Diff_A^{\leq n}(M,M)
\subseteq
\Diff_A^{\leq m+n-1}(M,M), \qquad \text{if } m+n>0.
\]
This is an immediate consequence of the Jacobi identity. Indeed, in the same notation as in the proof of Proposition \ref{prop:multiplicative-filtration}, we have:
\[
[[\Bc,\Psi],\rho_a]=[[\Bc,\rho_a],\Psi]\pm [\Bc,[\Psi,\rho_a]].
\]
Since $[\Bc,\rho_a]\in \Diff_A^{\leq m-1}$, $[\Psi,\rho_a]\in \Diff_A^{\leq n-1}$, we inductively have
$[[\Bc,\rho_a],\Psi], [\Bc,[\Psi,\rho_a]],\in \Diff_A^{\leq m+n-2}(M,M)$, and so $[[\Bc,\Psi],\rho_a]\in \Diff_A^{\leq m+n-2}(M,M)$, i.e., $[\Bc,\Psi]\in \Diff_A^{\leq m+n-1}(M,M)$. The base of the induction is the case $m=0$ or $n=0$. It is not restrictive to assume $m>0$ and $n=0$. In this case we find 
\[
[[\Bc,\Psi],\rho_a]=[[\Bc,\rho_a],\Psi]=[\Bc,\rho_a]\circ \Psi\pm \Psi\circ [\Bc,\rho_a],
\]
and this is in $\Diff_A^{\leq m-1}$ by Proposition \ref{prop:multiplicative-filtration}, since  $[\Bc,\rho_a]$ is in $\Diff_A^{\leq {m-1}}$ and $\Psi$ is in $\Diff_A^{\leq 0}$.
As a consequence $\Diff^{\leq 1}_A(M,M)$ is a graded Lie subalgebra of the graded Lie algebra $(\Diff^{\leq 1}_A(M,M), [\,,\,])$.
\end{remark}

\begin{proposition}\label{prop:subquotients}
The Grothendieck filtration is compatible with subquotient constructions: if $M''\subseteq M'\subseteq M$ and $N''\subseteq N'\subseteq N$ are inclusions of graded $A$-modules, $\Hom_k(M^\bullet,N^\bullet)$ denotes the $A$-module of $k$-linear morphisms from $M$ to $N$ mapping $M'$ to $N'$ and $M''$ to $N''$ and
we denote by $\Diff_A^{\leq n}(M^\bullet,N^\bullet)$ the intersection
\[
\Diff_A(M^\bullet,N^\bullet)=\Diff_A^{\leq n}(M,N)\cap \Hom_k(M^\bullet,N^\bullet),
\]
then the natural morphism $\Hom_k(M^\bullet,N^\bullet)\to \Hom_k(M'/M'',N'/N'')$ induces a morphism of filtered $A$-modules
\[
\Diff_A(M^\bullet,N^\bullet)\to \Diff_A(M'/M'',N'/N'').
\]
\end{proposition}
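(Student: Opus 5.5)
The plan is to argue by induction on the order $n$, after first making the map explicit. Any $\Bc\in\Hom_k(M^\bullet,N^\bullet)$ maps $M'$ into $N'$ and $M''$ into $N''$. It therefore induces a well-defined $k$-linear map $\overline{\Bc}\colon M'/M''\to N'/N''$, $\overline{\Bc}(m+M'')=\Bc(m)+N''$. The assignment $\Bc\mapsto\overline{\Bc}$ is $k$-linear, preserves degree, and is compatible with left multiplication by $A$, since $\overline{\rho_a^N\circ\Bc}=\rho_a^{N'/N''}\circ\overline{\Bc}$. It remains to show that it sends $\Diff_A^{\leq n}(M^\bullet,N^\bullet)$ into $\Diff_A^{\leq n}(M'/M'',N'/N'')$ for every $n$.

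The key observation is that multiplication operators also preserve the flags. Since $M'$, $M''$, $N'$, $N''$ are $A$-submodules, $\rho_a^M$ maps $M'$ to $M'$ and $M''$ to $M''$, and similarly on $N$. The induced maps are exactly the multiplication operators on the quotients: $\overline{\rho_a^M}=\rho_a^{M'/M''}$ and $\overline{\rho_a^N}=\rho_a^{N'/N''}$. Passing to the induced map is also compatible with composition of flag-preserving maps. Hence for $\Bc\in\Hom_k(M^\bullet,N^\bullet)$ we get $[\Bc,\rho_a]\in\Hom_k(M^\bullet,N^\bullet)$, and
\[
\overline{[\Bc,\rho_a]}=[\overline{\Bc},\rho_a^{M'/M''}],
\]
with the same sign $(-1)^{\deg(\Bc)\deg_A(a)}$ on both sides, because passing to the quotient preserves degrees.

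The induction then runs as follows. For $n=-1$ both sides are zero. Suppose the claim holds for $n-1$, and let $\Bc\in\Diff_A^{\leq n}(M,N)\cap\Hom_k(M^\bullet,N^\bullet)$. Then $[\Bc,\rho_a]$ lies in $\Diff_A^{\leq n-1}(M,N)$ by definition, and in $\Hom_k(M^\bullet,N^\bullet)$ by the observation above. By the inductive hypothesis, $\overline{[\Bc,\rho_a]}=[\overline{\Bc},\rho_a]$ lies in $\Diff_A^{\leq n-1}(M'/M'',N'/N'')$ for all $a\in A$, so $\overline{\Bc}\in\Diff_A^{\leq n}(M'/M'',N'/N'')$. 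Taking the union over $n$ gives the filtered morphism $\Diff_A(M^\bullet,N^\bullet)\to\Diff_A(M'/M'',N'/N'')$.

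There is no real obstacle here. The only point needing care is to set up the induction over the intersection $\Diff_A^{\leq n}(M,N)\cap\Hom_k(M^\bullet,N^\bullet)$ rather than over $\Diff_A^{\leq n}(M,N)$ alone. For this, one needs the commutators $[\Bc,\rho_a]$ to stay flag-preserving, which is exactly where the hypothesis that the $M'$, $M''$, $N'$, $N''$ are $A$-submodules, and not just graded subspaces, is used.
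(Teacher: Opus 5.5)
Your proof is correct and is the paper's argument: induction on $n$, using $\overline{\rho_a^M}=\rho_a^{M'/M''}$ and $\overline{\rho_a^N}=\rho_a^{N'/N''}$ to get $\overline{[\Bc,\rho_a]}=[\overline{\Bc},\rho_a]$. You also make explicit two points the paper leaves implicit: that $[\Bc,\rho_a]$ again preserves the flags, so the inductive hypothesis applies to it, and that $\Bc\mapsto\overline{\Bc}$ is $A$-linear. Starting the induction at $n=-1$ rather than $n=0$ is an inessential difference.
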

\begin{proof}
    We have to show that, if $\Bc\in \Diff_A^{\leq n}(M^\bullet,N^\bullet)$, then the induced operator $\overline{\Bc}\in \Hom_k(M'/M'',N'/N'')$ is an element in $\Diff_A^{\leq n}(M'/M'',N'/N'')$. We argue by induction on $n$. For $n=0$, the morphism $\Bc$ is an element in $\Hom_A(M^\bullet,N^\bullet)$ and so $\overline{\Bc}\in \Hom_A(M'/M'',N'/N'')=\Diff_A^{\leq 0}(M'/M'',N'/N'')$. If $n>0$, we use the  fact that the multiplication by $a\in A$ on the quotient $M'/M''$ is induced by the multiplication by $a$ on $M$, so that $\rho^{M'.M''}_a=\overline{\rho^M_a}$, and similarly for $N$. Therefore,
    \[
    [\overline{\Bc},\rho_a]=[\overline{\Bc},\overline{\rho_a}]=\overline{[\Bc,\rho_a]},
    \]
    and by the inductive assumption the last term is in $\Diff_A^{\leq n-1}(M'/M'',N'/N'')$. Hence, $\overline{\Bc}\in \Diff_A^{\leq n}(M'/M'',N'/N'')$.
\end{proof}

\begin{corollary}
Let $(M,d_M)$ and $(N,d_N)$ be cochain complexes, and let $\Bc\colon M\to N$ a morphism of complexes. If $\Bc$ is a Grothendieck differential operator of order less than or equal to $n$, then the induced operator in cohomology $H(\Bc)\colon H(M)\to H(N)$ is again  Grothendieck differential operator of order less than or equal to $n$.   
\end{corollary}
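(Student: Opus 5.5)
The plan is to deduce the statement directly from Proposition \ref{prop:subquotients}, taking the cocycles and coboundaries as the submodule chains. The implicit assumption, as in the setting of the paper, is that $(M,d_M)$ and $(N,d_N)$ are complexes of graded $A$-modules whose differentials are $A$-linear, i.e.\ $d_M\in\Diff_A^{\leq 0}(M,M)$ and similarly for $d_N$. Under this assumption $Z(M)=\ker d_M$ and $B(M)=\operatorname{im} d_M$ are graded $A$-submodules with $B(M)\subseteq Z(M)\subseteq M$, and likewise for $N$. Then $H(M)=Z(M)/B(M)$ and $H(N)=Z(N)/B(N)$ are $A$-modules with the induced action, exactly as in Proposition \ref{prop:subquotients}.

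The key steps, in order, are as follows.

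\emph{Step 1.} Since $\Bc$ is a morphism of complexes, $d_N\Bc=\pm\Bc d_M$. Hence $\Bc$ maps $Z(M)$ into $Z(N)$. Moreover $\Bc(d_Mm)=\pm d_N\Bc(m)\in B(N)$, so $\Bc$ maps $B(M)$ into $B(N)$. This shows that $\Bc\in\Hom_k(M^\bullet,N^\bullet)$ for the chains $M''=B(M)\subseteq M'=Z(M)\subseteq M$ and $N''=B(N)\subseteq N'=Z(N)\subseteq N$.

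\emph{Step 2.} Since $\Bc\in\Diff_A^{\leq n}(M,N)$, it lies in $\Diff_A^{\leq n}(M^\bullet,N^\bullet)$. Proposition \ref{prop:subquotients} then gives that the induced map $\overline{\Bc}\colon Z(M)/B(M)\to Z(N)/B(N)$ lies in $\Diff_A^{\leq n}(H(M),H(N))$.

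\emph{Step 3.} Observe that $\overline{\Bc}$ is by definition $H(\Bc)$.

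The only point needing care, which I expect to be the main obstacle, is the $A$-module structure on cohomology. One must check that $Z$ and $B$ are $A$-submodules, which uses the $A$-linearity of $d$: $d(am)=(-1)^{\deg d\,\deg a}a\,dm$. If one only assumed that $d$ is a derivation-type (order $\leq1$) operator, $Z$ and $B$ would not be submodules and the statement would not even make sense. So I would state explicitly that the differentials are $A$-linear, i.e.\ of order zero, which is the situation of $d_0$ in the \v{S}evera setting.

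Given that assumption, the inductive step of Proposition \ref{prop:subquotients} needs nothing further. The operators $[\Bc,\rho_a]$ automatically preserve $Z$ and $B$, because both $\Bc$ and $\rho_a$ do. Thus the iterated commutators remain in $\Hom_k(M^\bullet,N^\bullet)$, and the induction runs unchanged.
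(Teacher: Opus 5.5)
Your proposal is correct and follows the same route the paper intends: the paper gives no separate proof, since the corollary is Proposition \ref{prop:subquotients} applied to the flags $B(M)\subseteq Z(M)\subseteq M$ and $B(N)\subseteq Z(N)\subseteq N$, exactly as in your Steps 1--3. Your remark that $d_M$ and $d_N$ must be $A$-linear, so that $H(M)$ and $H(N)$ are $A$-modules at all, spells out a hypothesis the paper leaves implicit. So does your observation that $[\Bc,\rho_a]$ still preserves the flags, which the induction in Proposition \ref{prop:subquotients} tacitly uses.
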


\begin{notation}
Let $A$ be a graded commutative algebra. We denote by $\mathfrak{g}_A$ the graded vector space $\mathfrak{g}_A=A[-1]$. If $M$ is a graded $A$-module, the module structure map
\[
\rho\colon A\otimes M \to M
\]
can be equivalently seen as a map
\[
\iota\colon \mathfrak{g}_A\otimes M\to M[-1].
\]
\end{notation}
\begin{remark}\label{rem:from-rho-to-iota}
Since $\iota\colon \mathfrak{g}_A\to \mathrm{End}_k(M)$ has degree $-1$, we have
\[
\deg(\iota_a)=\deg_{\mathfrak{g}_A}(a)-1=\deg_A(a).
\]
Therefore we see that definition \ref{def:diff} can be equivalently rephrased as
\[
\Diff_A^{\leq n}(M,N)
:=
\left\{
\Bc\in\Hom_{\mathbb Z}(M,N)
\mid
[\Bc,\iota_a]\in
\Diff_A^{\leq n-1}(M,N)
\text{ for all }a\in \mathfrak{g}_A
\right\}.
\]
The derived bracket of Remark \ref{rem:derived-bracket} takes the form
\[
\{-\}_{\Bc}\colon \mathfrak{g}_A^{\otimes n}\to \Hom_A(M,N)[\deg\Bc-n].
\]

\end{remark}

\section{The Cartan differential operator algebra}\label{sec:cartan}

Assume now $M$ is a faithful graded $A$-module, i.e., that the map
\[
\rho\colon A\longrightarrow\End_k(M)
\]
is injective, thus identifying $A$ with its image in $\End_k(M)$. Under this identification, since $A$ is graded commutative, we have $A\subseteq \Hom_A(M,M)$, and so $A\subseteq \Diff_A^{\leq 0}(M,M)$. Taking into account Remark \ref{rem:from-rho-to-iota}, we can then give the following definition
\begin{definition}
The \emph{Cartan differential operator algebra} of the faithful
$A$-module $M$ is the filtered algebra
\[
\Cartan_A(M,M)
=
\bigcup_{n}
\Cartan^{\leq n}(M,M)
\]
defined inductively by
\[
\Cartan_A^{\leq-1}(M,M)=0;\qquad \Cartan^{\leq0}(M,M)=A
\]
and
\[
\Cartan_A^{\leq n}(M,M)
=
\left\{
\Bc\in\End_k(M)
\mid
[\Bc,\iota_a]\in
\Cartan_A^{\leq n-1}(M,M)
\quad\forall a\in \mathfrak{g}_A
\right\}.
\]
\end{definition}
\begin{lemma}
$\Cartan_A(M,M)$ is a filtered subalgebra of $\Diff_A(M,M)$.    
\end{lemma}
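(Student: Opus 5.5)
Two things need to be shown: first, $\Cartan_A^{\leq n}(M,M)\subseteq \Diff_A^{\leq n}(M,M)$ for every $n$; second, $\Cartan_A(M,M)$ is closed under composition, with $\Cartan_A^{\leq m}\circ\Cartan_A^{\leq n}\subseteq\Cartan_A^{\leq m+n}$, and it contains the identity. Both arguments are inductions mirroring those already given for $\Diff_A$.

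For the inclusion we induct on $n$. The case $n=-1$ is trivial. For $n=0$ we have $\Cartan_A^{\leq 0}(M,M)=A$. Since $A$ is graded commutative, $[\rho_b,\rho_a]=0$ for all $a,b$, so $A\subseteq\Hom_A(M,M)=\Diff_A^{\leq 0}(M,M)$ by Remark \ref{rem:order-zero}. For $n\geq 1$, suppose $\Bc\in\Cartan_A^{\leq n}(M,M)$. Then $[\Bc,\iota_a]\in\Cartan_A^{\leq n-1}\subseteq\Diff_A^{\leq n-1}$ by the inductive hypothesis. By the reformulation of Definition \ref{def:diff} in Remark \ref{rem:from-rho-to-iota}, this is exactly the statement $\Bc\in\Diff_A^{\leq n}$. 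This also shows the filtrations are compatible.

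For multiplicativity we copy the proof of Proposition \ref{prop:multiplicative-filtration}, inducting on $m+n$. In the base case $m=n=0$, the product of two multiplication operators is $\rho_a\rho_b=\rho_{ab}\in A$, and the unit $\mathrm{id}=\rho_1$ lies in $\Cartan_A^{\leq0}$. For the inductive step take $\Bc\in\Cartan^{\leq m}$ and $\Psi\in\Cartan^{\leq n}$. The graded Leibniz rule gives
\[
[\Bc\circ\Psi,\iota_a]=\Bc\circ[\Psi,\iota_a]\pm[\Bc,\iota_a]\circ\Psi .
\]
The first summand is a composite of an element of $\Cartan^{\leq m}$ with an element of $\Cartan^{\leq n-1}$, and the second is a composite of an element of $\Cartan^{\leq m-1}$ with an element of $\Cartan^{\leq n}$. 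By induction both lie in $\Cartan^{\leq m+n-1}$. Hence $\Bc\circ\Psi\in\Cartan^{\leq m+n}$.

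The one point needing care is the edge cases where one of the orders is zero, so that a term like $[\Bc,\iota_a]$ has order $-1$. Here we must use that this commutator is literally $0$. This holds because $\Cartan^{\leq -1}=0$, and also directly: for $\Bc\in A$ one has $[\rho_b,\iota_a]=0$ by graded commutativity of $A$. Once this is checked, the induction closes. Finally, $\Cartan_A(M,M)$ is a union of subspaces nested in $n$, so it is a filtered subalgebra of $\Diff_A(M,M)$.
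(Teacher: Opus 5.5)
Your proof is correct and follows the paper's argument. The inclusion goes by induction from $A\subseteq \End_A(M)=\Diff_A^{\leq 0}(M,M)$, and multiplicativity repeats the induction on $m+n$ from Proposition \ref{prop:multiplicative-filtration}, with base case $\rho_a\rho_b=\rho_{ab}\in A$. Your explicit check that $[\rho_b,\iota_a]=0$ in the order-zero edge cases is a detail the paper leaves implicit; it is needed because $\Cartan_A^{\leq 0}(M,M)$ is set equal to $A$ rather than defined recursively.
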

\begin{proof}
 Since $\Cartan_A^{\leq 0}(M,M)=A\subseteq \mathrm{End}_A(M)=\Diff^{\leq 0}_A(M,M)$, one inductively sees that $\Cartan_A(M,M)$ is a filtered submodule of $\Diff_A(M,M)$. To see it is a filtered subalgebra, we have to show that the composition of endomorphisms of the $k$-vector space $M$ induces a composition    
 \[
\circ\colon \Cartan_A^{\leq m}(M,M)
\otimes
\Cartan_A^{\leq n}(M,M)
\subseteq
\Cartan_A^{\leq m+n}(M,M).
\]
This is proven inductively verbatim following the argument in the proof of Proposition \ref{prop:multiplicative-filtration}. The basis of the induction is the fact that the composition of two Cartan differential operators of order less than or equal to zero is the multiplication of two elements of $A$ and so it is again an element in $A$.
\end{proof}

\begin{remark}Note that one cannot even define a notion of $\Cartan(M,N)$, contrary to $\Diff(M,N)$.
\end{remark}
\begin{lemma}
    If $\varphi\colon M\to N$ is an isomorphism of graded $A$-modules, then conjugation with $\varphi$ induces an isomorphism of filtered algebras between $\Cartan_A(M,M)$ and $\Cartan_A(N,N)$. 
\end{lemma}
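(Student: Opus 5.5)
The plan is to check directly that the conjugation map
\[
c_\varphi\colon \End_k(M)\to \End_k(N),\qquad c_\varphi(\Bc)=\varphi\circ\Bc\circ\varphi^{-1},
\]
restricts to a filtration-preserving bijection $\Cartan_A(M,M)\to\Cartan_A(N,N)$. Since $c_\varphi$ is already an isomorphism of graded algebras $\End_k(M)\to\End_k(N)$, with inverse $c_{\varphi^{-1}}$, compatibility with composition and with commutator brackets is automatic. The whole content of the lemma is therefore the filtration statement
\[
c_\varphi\bigl(\Cartan_A^{\leq n}(M,M)\bigr)\subseteq \Cartan_A^{\leq n}(N,N)\quad\text{for all } n.
\]
Applying the same argument to $\varphi^{-1}$ then gives equality.

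The key preliminary observation is that $c_\varphi$ intertwines the multiplication operators. Since $\varphi$ is $A$-linear (we take it of degree zero, as an isomorphism of graded modules), we have $\varphi\circ\rho^M_a=\rho^N_a\circ\varphi$, hence
\[
c_\varphi(\rho^M_a)=\rho^N_a \quad\text{and}\quad c_\varphi(\iota^M_a)=\iota^N_a \quad\text{for all } a\in\mathfrak g_A.
\]
In particular $N$ is faithful whenever $M$ is, so $\Cartan_A(N,N)$ is defined. Moreover $c_\varphi$ maps $A\subseteq\End_k(M)$ bijectively onto $A\subseteq\End_k(N)$, compatibly with the identifications of $A$ with its images.

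I then argue by induction on $n$. The cases $n=-1$ and $n=0$ are the previous paragraph. For the inductive step, take $\Bc\in\Cartan_A^{\leq n}(M,M)$. Because $c_\varphi$ is an algebra morphism of degree zero, it preserves graded commutators, so
\[
[c_\varphi(\Bc),\iota^N_a]=c_\varphi([\Bc,\iota^M_a]).
\]
The inductive hypothesis places the right-hand side in $\Cartan_A^{\leq n-1}(N,N)$. Hence $c_\varphi(\Bc)\in\Cartan_A^{\leq n}(N,N)$.

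There is no serious obstacle here. The only point needing care is the sign in the graded commutator, and it is harmless because $\varphi$ has degree zero, so conjugation introduces no Koszul signs.
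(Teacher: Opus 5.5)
Your proposal is correct and follows the paper's proof: induction on $n$, with the base case coming from $\varphi\circ\rho_a\circ\varphi^{-1}=\rho_a$ and the inductive step from $[\varphi\circ\Bc\circ\varphi^{-1},\iota_a]=\varphi\circ[\Bc,\iota_a]\circ\varphi^{-1}$. The paper writes a $\pm$ in that identity, which your degree-zero assumption on $\varphi$ removes, and your remarks that $N$ is then faithful and that applying the argument to $\varphi^{-1}$ yields equality and an algebra isomorphism make explicit points the paper leaves implicit.
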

\begin{proof}
 We have to show that $\varphi\circ\Cartan_A^{\leq n}(M,M)\circ \varphi^{-1}\subseteq  \Cartan_A^{\leq n}(N,N)$. The proof is by induction on $n$. When $n=0$, the fact that, by definition, morphisms of $A$-modules commute with multiplications by elements in $A$ gives  $\varphi\circ A\circ \varphi^{-1}=A$. Let now $\Bc\in \Cartan_A^{\leq n}(M,M)$ with $n>0$. Since $[\varphi,\iota_a]=[\varphi^{-1},\iota_a]=0$, we have
 \[
 [\varphi\circ\Bc\circ \varphi^{-1},\iota_a]=\pm \varphi\circ [\Bc,\iota_a]\circ \varphi^{-1}.
 \]
Now, $[\Bc,\iota_a]\in \Cartan_A^{\leq n-1}(M,M)$ and the statement follows by induction.
\end{proof}
\begin{example}\label{example:for-bv}
Let $M$ be a free rank 1 $A$-module. Then the choice of a basis element\footnote{Although it is implied by the defining property of a basis element, let us explicitly remark that a basis element for a graded $A$-module has necessarily degree zero.} $\zeta$ determines a distinguished isomorphism of graded $A$-modules $\varphi_\zeta\colon M\to A$, defined as the unique morphism of $A$-modules from $M$ to $A$ mapping $\zeta$ to $1$. The choice of $\zeta$ then determines an isomorphism of filtered (and graded) algebras 
\[
\Cartan_A(M,M)\xrightarrow{\sim}\Cartan_A(A,A).
\]
Denoting by $\Delta_{\Bc,\zeta}$ the image of $\Bc$ via this isomorphism, the Cartan differential operator $\Delta_{\Bc,\zeta}\colon A\to A$ is characterized by the defining equation
\begin{equation}\label{eq:defining}
\Delta_{\Bc,\zeta}(a)\, \zeta= \Bc(a\zeta).
\end{equation}
\end{example}
\begin{remark}\label{rem:non-canonical}
Since $\Bc$ is generally not $A$-linear, and not even $A^0$-linear, where $A^0\subseteq A$ is the subalgebra of degree zero elements, the operator $\Delta_{\Bc,\zeta}$ generally depends on the choice of the basis element $\zeta$. 
On the other hand, since $\Bc$ is a $k$-linear operator, equation \eqref{eq:defining} implies $\Delta_{\Bc,x\zeta}=\Delta_{\Bc,\zeta}$ for any $x\in k^\times$.
\end{remark}
From Remark \ref{rem:derived-bracket} we immediately have the following characterization.
\begin{lemma}
Under the identification of $\mathfrak{g}_A$ with its image in $\mathrm{End}_A(M)$ via $\iota$, we have
\[
\Cartan_A^{\leq n}(M,M)
=
\left\{
\Bc\in\End_k(M)
\mid
[\cdots[[\Bc,\iota_{a_1}],\iota_{a_2}],\ldots,\iota_{a_n}]\in \mathfrak{g}_A
\quad\forall a_1,\dots, a_n\in \mathfrak{g}_A
\right\}.
\]
Every element $\Bc$ in $\Cartan_A^{\leq n}(M,M)$ defines a derived bracket 
\[
\{-\}_{\Bc}\colon \mathfrak{g}_A^{\otimes n}\to \mathfrak{g}_A[\deg\Bc+1-n]
\]
given by
\[
[\cdots[[\iota_{a_1},\Bc],\iota_{a_2}],\ldots,\iota_{a_n}]=\iota_{\{a_1,\dots,a_n\}_{\Bc}}.
\]
\end{lemma}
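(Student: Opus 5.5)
The plan is to unwind the inductive definition of $\Cartan_A^{\leq n}(M,M)$ in the same way Remark \ref{rem:derived-bracket} unwinds Definition \ref{def:diff}. The only change is that the base case is $\Cartan_A^{\leq 0}(M,M)=A$ rather than $\Hom_A(M,M)$. Here $A$ is identified with $\mathfrak{g}_A$ (up to the shift) via the injective map $a\mapsto \iota_a=\rho_a$, which is injective because $M$ is faithful.

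I will prove the set-theoretic characterization by induction on $n$, claiming for each $n\geq 0$ that
\[
\Bc\in\Cartan_A^{\leq n}(M,M)\iff [\cdots[[\Bc,\iota_{a_1}],\iota_{a_2}],\ldots,\iota_{a_n}]\in \iota(\mathfrak{g}_A)\ \ \forall a_1,\dots,a_n\in\mathfrak{g}_A .
\]
For $n=0$ the iterated commutator is just $\Bc$, and the condition $\Bc\in\iota(\mathfrak g_A)=\rho(A)$ is literally the definition $\Cartan^{\leq 0}=A$. For the inductive step, $\Bc\in\Cartan^{\leq n}$ means by definition that $[\Bc,\iota_{a_1}]\in \Cartan^{\leq n-1}$ for all $a_1$. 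By the inductive hypothesis applied to $[\Bc,\iota_{a_1}]$, this holds iff all $(n-1)$-fold further commutators of $[\Bc,\iota_{a_1}]$ with $\iota_{a_2},\dots,\iota_{a_n}$ lie in $\iota(\mathfrak g_A)$. That is exactly the $n$-fold condition.

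The bracket is then constructed as follows. For $\Bc\in\Cartan^{\leq n}$, the element $[\cdots[[\iota_{a_1},\Bc],\iota_{a_2}],\ldots,\iota_{a_n}]$ differs from the one above only by the sign $-(-1)^{\deg(\iota_{a_1})\deg\Bc}$, since $[\iota_{a_1},\Bc]=\pm[\Bc,\iota_{a_1}]$ by graded antisymmetry. So it also lies in $\iota(\mathfrak g_A)$. Injectivity of $\iota$ (faithfulness) then defines a unique element $\{a_1,\dots,a_n\}_\Bc\in\mathfrak g_A$ with $\iota_{\{a_1,\dots,a_n\}_\Bc}$ equal to it. Multilinearity over $k$ follows from the bilinearity of the commutator and the linearity of $\iota$. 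This agrees with the $\Hom_A(M,N)$-valued bracket of Remark \ref{rem:derived-bracket}, restricted to values in $A\subseteq\Hom_A(M,M)$.

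The only step requiring care is the degree bookkeeping. Since $\deg\iota_a=\deg_{\mathfrak g_A}(a)-1$ by Remark \ref{rem:from-rho-to-iota}, the iterated commutator has degree $\deg\Bc+\sum_i(\deg_{\mathfrak g_A}(a_i)-1)$. Writing this as $\deg_{\mathfrak g_A}(x)-1$ with $x=\{a_1,\dots,a_n\}_\Bc$ gives $\deg_{\mathfrak g_A}(x)=\sum\deg(a_i)+\deg\Bc+1-n$. This is precisely the statement that $\{-\}_\Bc$ is a degree-zero map $\mathfrak g_A^{\otimes n}\to\mathfrak g_A[\deg\Bc+1-n]$. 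I expect no real obstacle; the main risk is an off-by-one in the shift convention, which this computation settles.
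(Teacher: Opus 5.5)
Your proposal is correct and follows the same route as the paper. The paper says the lemma is immediate from Remark \ref{rem:derived-bracket}, i.e.\ the inductive unwinding of the definition with base case $\Cartan_A^{\leq 0}(M,M)=A$ in place of $\Hom_A(M,M)$, with faithfulness of $M$ making $\iota$ injective; your sign relating $[\iota_{a_1},\Bc]$ to $[\Bc,\iota_{a_1}]$ and your computation of the shift $\deg\Bc+1-n$ both agree with the paper's conventions.
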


\begin{remark}\label{rem:locally-free}
The inclusion $\Cartan_A(M,M)\hookrightarrow \Diff_A(M,M)$ is generally proper: in order less than or equal to zero one has that $A\subseteq \mathrm{End}_A(M)$ is generally a proper inclusion. On the other hand, due to their inductive definition, as soon as $\rho\colon A\xrightarrow{\sim} \mathrm{End}_A(M)$ is an isomorphism, we have $\Diff_A(M,M)=\Cartan_A(M,M)$. This notably happens when $M$ is a free rank 1 $A$-module, or more generally, when $M$ is a finitely generated faithful projective $A$-module of constant local rank $1$.
\end{remark}

\begin{remark}\label{rem:derived-brackets-cartan}
By verbatim adapting Remark \ref{rem:sequence}, we see that a Cartan differential operator of order at most $n$ and degree $k$ on $M$ is equivalent to a sequence of 
morphisms of $k$-vector spaces of degree $k-i$
\[
\lambda^{(i)}\colon \mathfrak{g}_A^{\otimes i}\to \mathrm{End}_k(M)[k-i], \qquad \text{for }i=0,\dots,n
\]
such that
\begin{itemize}
    \item $\lambda^{(i+1)}(a_1,\dots,a_i,a_{i+1}) = [\lambda^{(i)}(a_1,\dots,a_i), \iota_{a_{i+1}}]$ for any $i=0,\dots, n-1$;
    \item $\lambda^{(n)}\colon \mathfrak{g}_A^{\otimes n} \to \mathfrak{g}_A[k+1-n]$.
\end{itemize}
As in Remark \ref{rem:sequence}, we will write $\Bc$ for $\lambda^{(0)}$, $\mathcal{L}^{(i)}$ for $-(-1)^{(\deg_{\mathfrak{g}_A}(a_1)-1)k}\lambda^{(i)}$, for $i=1,\dots,n$ and $\{-\}_n$, or simply $\{-\}$ when the arity is clear from the context, for $\mathcal{L}^{(n)}$, so to have
\[
[[\cdots[[\iota_{a_1},\Bc],\iota_{a_2}],\ldots],\iota_{a_n}]=[\cdots[[\mathcal{L}^{(1)}_{a_1},\iota_{a_2}],\ldots],\iota_{a_n}]=[\cdots[\mathcal{L}^{(2)}_{a_1,a_2},\ldots],\iota_{a_n}]=\cdots=\iota_{\{a_1,\dots,a_n\}_n}.
\]
\end{remark}
\begin{example}\label{ex:order2}
A Cartan differential operator of order at most 2 and degree $k$ on $M$ is a triple $(\Bc,\mathcal{L},\{-,-\})$, where 
\begin{itemize}
\item $\Bc\colon M\to M[k]$ is a morphism of graded $k$-vector spaces;
\item $\mathcal{L}\colon \mathfrak{g}_A\to \mathrm{End}_k(M)[k-1]$  is defined by
\[
\mathcal{L}_a=[\iota_a,\Bc]
\]
for any $a\in \mathfrak{g}_A$;
\item $\{-,-\}\colon \mathfrak{g}_A\otimes \mathfrak{g}_A\to \mathfrak{g}_A[k-1]$ satisfies
\[
[\mathcal{L}_a,\iota_b]=\iota_{\{a,b\}}
\]
for any $a,b\in \mathfrak{g}_A$.
\end{itemize}
Here we see two of the Cartan equations relating the contraction and the Lie derivative (with $\Bc$ playing the role of the de Rham differential). The third one, namely $[\iota_a,\iota_b]=0$, is an immediate consequence of the graded commutativity of $A$. Notice that when $\Bc$ has degree 1, exactly as in the case of the de Rham operator, both $\mathcal{L}$ and the bracket $\{-,-\}$ have degree zero. Moreover, when $\Bc$ has degree 1 (or more generally when it has odd degree), the bracket $\{-,-\}$ is antisymmetric. Indeed, from the graded Jacobi identity we have
\begin{align*}
\iota_{\{b,a\}}&=[[\iota_b,\Bc],\iota_a]=[\iota_b,[\Bc,\iota_a]]+(-1)^{\deg_{\mathfrak{g}_A}(b)}[\Bc,[\iota_b,\iota_a]]\\
&=-(-1)^{\deg_{\mathfrak{g}_A}(a)(\deg_{\mathfrak{g}_A}(b)-1)}
[[\Bc,\iota_a],\iota_b]\\
&=-(-1)^{\deg_{\mathfrak{g}_A}(a)\deg_{\mathfrak{g}_A}(b)}
[[\iota_a,\Bc],\iota_b]\\
&=-(-1)^{\deg_{\mathfrak{g}_A}(a)\deg_{\mathfrak{g}_A}(b)}\iota_{\{a,b\}},
\end{align*}
and the conclusion follows from the injectivity of $\iota$.
\end{example}
{
\begin{definition}
A Cartan differential operator $(\Bc,\mathcal{L},\{-,-\})$ of order at most 2 and degree $1$ on $M$ such that $\Bc^2=0$ will be called a \emph{$BV_0$-operator} on $M$. When the opposite gradings are taken on $A$ and $M$, so that $\Bc$ has degree $-1$, one calls $\Bc$ a $BV_2$- or \emph{Batalin-Vilkovisky operator} on $M$. 
\end{definition}
}
\begin{lemma}\label{lem:BcL}
Let  $(\Bc,\mathcal{L},\{-,-\})$ be a $BV_0$-operator on $M$. Then
\[
[\Bc,\mathcal{L}_a]=0.
\]
\end{lemma}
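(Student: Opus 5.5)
The identity $[\Bc,\mathcal{L}_a]=0$ follows directly from the graded Jacobi identity together with $\Bc^2=0$. The plan is to expand $[\Bc,\mathcal{L}_a]=[\Bc,[\iota_a,\Bc]]$ and recognise the result as a multiple of itself. No property of the Cartan filtration beyond the definition $\mathcal{L}_a=[\iota_a,\Bc]$ from Example \ref{ex:order2} is needed. All brackets are graded commutators in $\End_k(M)$. We have $\deg\Bc=1$ and $\deg\iota_a=\deg_A(a)=\deg_{\mathfrak{g}_A}(a)-1$, so $\deg\mathcal{L}_a=\deg\iota_a+1$.

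First, since $\Bc$ is odd, $[\Bc,\Bc]=2\Bc^2=0$. This is where $\Bc^2=0$ and characteristic zero enter, although only the identity $[\Bc,\Bc]=2\Bc^2$ is really used.

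Next, apply the graded Jacobi identity in the derivation form
\[
[\Bc,[\iota_a,\Bc]]=[[\Bc,\iota_a],\Bc]+(-1)^{\deg\iota_a}[\iota_a,[\Bc,\Bc]].
\]
The last term vanishes. By graded antisymmetry, $[\Bc,\iota_a]=-(-1)^{\deg\iota_a}[\iota_a,\Bc]$. Hence $[[\Bc,\iota_a],\Bc]=-(-1)^{\deg\iota_a}[\mathcal{L}_a,\Bc]$. Applying antisymmetry once more, $[\mathcal{L}_a,\Bc]=-(-1)^{\deg\mathcal{L}_a}[\Bc,\mathcal{L}_a]$. Since $\deg\mathcal{L}_a=\deg\iota_a+1$, the total sign is
\[
(-1)^{\deg\iota_a+\deg\iota_a+1}=-1,
\]
which gives $[\Bc,\mathcal{L}_a]=-[\Bc,\mathcal{L}_a]$. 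As $\mathrm{char}\,k=0$, we conclude $[\Bc,\mathcal{L}_a]=0$.

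A more transparent alternative expands the commutators as compositions. Writing $\mathcal{L}_a=\iota_a\Bc\mp\Bc\iota_a$, one finds $\Bc\mathcal{L}_a=\Bc\iota_a\Bc$ and $\mathcal{L}_a\Bc=\mp\Bc\iota_a\Bc$, using $\Bc^2=0$. Their graded commutator then cancels, and this can serve as a check on the signs.

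The only real obstacle is sign bookkeeping: the Koszul signs from $\deg\iota_a$ must combine to exactly $-1$, rather than to $+1$, which would give no information. I would therefore verify the final sign via the direct composition expansion above, which makes the cancellation explicit.
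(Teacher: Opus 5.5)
Your proof is correct and follows the paper's argument: expand $[\Bc,[\iota_a,\Bc]]$ by the graded Jacobi identity, discard the term containing $[\Bc,\Bc]=2\Bc^2=0$, and apply graded antisymmetry twice to obtain $[\Bc,\mathcal{L}_a]=-[\Bc,\mathcal{L}_a]$. The only difference is the order of the two sign flips: the paper flips the outer bracket first and you flip the inner one first, with the same net sign $-1$. Your composition check ($\Bc\mathcal{L}_a=\Bc\iota_a\Bc$ and $\mathcal{L}_a\Bc=-(-1)^{\deg\iota_a}\Bc\iota_a\Bc$) is also valid and needs no division by $2$. In the Jacobi route, characteristic $\neq 2$ is needed only at that final division, not to obtain $[\Bc,\Bc]=0$.
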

\begin{proof}
We have
\begin{align*}
[\Bc,\mathcal{L}_a]&=[\Bc,[\iota_a,\Bc]]=[[\Bc,\iota_a],\Bc]-(-1)^{\deg_{\mathfrak{g}_A}(a)}[\iota_a,[\Bc,\Bc]]\\
&=-(-1)^{\deg_{\mathfrak{g}_A}(a)}[\Bc,[\Bc,\iota_a]]=-[\Bc,[\iota_a,\Bc]]=-[\Bc,\mathcal{L}_a].
\end{align*}
since $[\Bc,\Bc]=2\Bc^2=0$. Hence $2[\Bc,\mathcal{L}_a]=0$ and so $[\Bc,\mathcal{L}_a]=0$.
\end{proof}
\begin{lemma}\label{lem:Lie}
     Let  $(\Bc,\mathcal{L},\{-,-\})$ be a $BV_0$-operator on $M$. Then
     \[
     \mathcal{L}_{\{a,b\}}=[\mathcal{L}_a,\mathcal{L}_b].
     \]
\end{lemma}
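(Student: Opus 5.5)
We compute $[\mathcal{L}_a,\mathcal{L}_b]$ directly by expanding $\mathcal{L}_b=[\iota_b,\Bc]$ and using the graded Jacobi identity in the graded Lie algebra $\End_k(M)$. The two inputs are Lemma \ref{lem:BcL}, which gives $[\Bc,\mathcal{L}_a]=0$ (equivalently $[\mathcal{L}_a,\Bc]=0$), and the defining relation $[\mathcal{L}_a,\iota_b]=\iota_{\{a,b\}}$ from Example \ref{ex:order2}.

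Since $\deg\Bc=1$, the operator $\mathcal{L}_a$ has degree $\deg_{\mathfrak{g}_A}(a)$, and $\iota_b$ has degree $\deg_{\mathfrak{g}_A}(b)-1$. The graded Jacobi identity (the derivation form: $[X,[Y,Z]]=[[X,Y],Z]+(-1)^{|X||Y|}[Y,[X,Z]]$) then gives
\[
[\mathcal{L}_a,\mathcal{L}_b]=[\mathcal{L}_a,[\iota_b,\Bc]]=[[\mathcal{L}_a,\iota_b],\Bc]+(-1)^{\deg_{\mathfrak{g}_A}(a)(\deg_{\mathfrak{g}_A}(b)-1)}[\iota_b,[\mathcal{L}_a,\Bc]].
\]
By Lemma \ref{lem:BcL}, $[\mathcal{L}_a,\Bc]=\pm[\Bc,\mathcal{L}_a]=0$, so the second term vanishes. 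The first term equals $[\iota_{\{a,b\}},\Bc]$, which by definition of $\mathcal{L}$ is $\mathcal{L}_{\{a,b\}}$. This would yield the claim directly.

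The only point needing care is the sign bookkeeping. First, $\mathcal{L}$ is defined as $[\iota_a,\Bc]$ and not $[\Bc,\iota_a]$, so the first term comes out with a $+$ sign and no extra prefactor, precisely because the bracket was placed with $\iota_{\{a,b\}}$ on the left; this is the convention adopted in Remark \ref{rem:derived-bracket}. Second, $\{a,b\}$ has degree $\deg a+\deg b$ in $\mathfrak{g}_A$, since the bracket has degree $k-1=0$, so the degrees are consistent. I would verify the Jacobi sign convention once against the computation in Example \ref{ex:order2} to make sure the same form of the identity is being used. I do not expect any real obstacle beyond this.
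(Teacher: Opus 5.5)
Your proposal is correct and is essentially the paper's proof: the same graded Jacobi expansion relating $[[\mathcal{L}_a,\iota_b],\Bc]$ and $[\mathcal{L}_a,[\iota_b,\Bc]]$, with the extra term killed by Lemma \ref{lem:BcL}. The only difference is direction: the paper starts from $\mathcal{L}_{\{a,b\}}$ and you start from $[\mathcal{L}_a,\mathcal{L}_b]$. Your sign $(-1)^{\deg_{\mathfrak{g}_A}(a)(\deg_{\mathfrak{g}_A}(b)-1)}$ agrees with the paper's.
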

\begin{proof}
    We have
    \begin{align*}
    \mathcal{L}_{\{a,b\}}&=[\iota_{\{a,b\}},\Bc]\\
    &=[[\mathcal{L}_a,\iota_b],\Bc]\\
    &=[\mathcal{L}_a,[\iota_b,\Bc]]-(-1)^{\deg_{\mathfrak{g}_A}(a)(\deg_{\mathfrak{g}_A}(b)-1)}[\iota_b,[\mathcal{L}_a,\Bc]]\\
    &=[\mathcal{L}_a,\mathcal{L}_b],
    \end{align*}
    where we used Lemma \ref{lem:BcL}.
\end{proof}
\begin{lemma}\label{lem:Lie-algebra}
 Let  $(\Bc,\mathcal{L},\{-,-\})$ be a $BV_0$-operator on $M$. Then, the bracket $\{-,-\}\colon \mathfrak{g}_A\otimes \mathfrak{g}_A\to \mathfrak{g}_A$ defines a Lie algebra structure on $\mathfrak{g}_A$.
 \end{lemma}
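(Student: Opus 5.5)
The plan is to reduce everything to identities among operators on $M$ and then use injectivity of $\iota$, exactly as in Example \ref{ex:order2}. Since $\Bc$ has degree $1$, the bracket $\{-,-\}$ has degree zero on $\mathfrak{g}_A$. Example \ref{ex:order2} already gives graded antisymmetry, $\{b,a\}=-(-1)^{\deg_{\mathfrak{g}_A}(a)\deg_{\mathfrak{g}_A}(b)}\{a,b\}$. What remains is the graded Jacobi identity, which I will prove in Leibniz form:
\[
\{a,\{b,c\}\}=\{\{a,b\},c\}+(-1)^{\deg_{\mathfrak{g}_A}(a)\deg_{\mathfrak{g}_A}(b)}\{b,\{a,c\}\}.
\]

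The key step is to compute $\iota_{\{a,\{b,c\}\}}$ using the defining relation $[\mathcal{L}_x,\iota_y]=\iota_{\{x,y\}}$:
\[
\iota_{\{a,\{b,c\}\}}=[\mathcal{L}_a,\iota_{\{b,c\}}]=[\mathcal{L}_a,[\mathcal{L}_b,\iota_c]].
\]
I then expand this with the graded Jacobi identity for the commutator in $\End_k(M)$. Here $\mathcal{L}_a$ has degree $\deg_{\mathfrak{g}_A}(a)$, because $\iota_a$ has degree $\deg_{\mathfrak{g}_A}(a)-1$ and $\Bc$ has degree $1$. The expansion gives
\[
[\mathcal{L}_a,[\mathcal{L}_b,\iota_c]]=[[\mathcal{L}_a,\mathcal{L}_b],\iota_c]+(-1)^{\deg_{\mathfrak{g}_A}(a)\deg_{\mathfrak{g}_A}(b)}[\mathcal{L}_b,[\mathcal{L}_a,\iota_c]].
\]
By Lemma \ref{lem:Lie}, $[\mathcal{L}_a,\mathcal{L}_b]=\mathcal{L}_{\{a,b\}}$, so the first term equals $[\mathcal{L}_{\{a,b\}},\iota_c]=\iota_{\{\{a,b\},c\}}$. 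The second term equals $(-1)^{\deg_{\mathfrak{g}_A}(a)\deg_{\mathfrak{g}_A}(b)}[\mathcal{L}_b,\iota_{\{a,c\}}]=(-1)^{\deg_{\mathfrak{g}_A}(a)\deg_{\mathfrak{g}_A}(b)}\iota_{\{b,\{a,c\}\}}$. Since $\iota$ is injective ($M$ is faithful), the Leibniz-form Jacobi identity on $\mathfrak{g}_A$ follows. Together with antisymmetry, this makes $(\mathfrak{g}_A,\{-,-\})$ a graded Lie algebra.

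The only point needing care is the sign bookkeeping. I must check that the degree of $\mathcal{L}_a$ used in the Jacobi expansion is $\deg_{\mathfrak{g}_A}(a)$, so that the Koszul sign matches the one required on $\mathfrak{g}_A$. I must also check that the sign conventions in $\mathcal{L}_a=[\iota_a,\Bc]$ and $[\mathcal{L}_a,\iota_b]=\iota_{\{a,b\}}$ introduce no extra factors. Neither involves any real difficulty beyond tracking parities. The essential input, that $\mathcal{L}$ is a Lie algebra morphism, is already supplied by Lemma \ref{lem:Lie}, which relies on $\Bc^2=0$ through Lemma \ref{lem:BcL}.
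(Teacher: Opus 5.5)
Your proposal is correct and follows essentially the same route as the paper's proof: antisymmetry is taken from Example \ref{ex:order2}. The Leibniz-form Jacobi identity is then obtained by expanding $\iota_{\{a,\{b,c\}\}}=[\mathcal{L}_a,[\mathcal{L}_b,\iota_c]]$ with the graded Jacobi identity in $\End_k(M)$, applying Lemma \ref{lem:Lie}, and using the injectivity of $\iota$.
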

\begin{proof}
We have already seen in Example \ref{ex:order2} that  $\{-,-\}$ is antisymmetric, so we are reduced to showing the graded Jacobi identity for it. We have
\begin{align*}
 \iota_{\{a,\{b,c\}\}}&=[\mathcal{L}_a,\iota_{\{b,c\}}]\\
 &=[\mathcal{L}_a,[\mathcal{L}_b,\iota_{c}]]\\
 &=[[\mathcal{L}_a,\mathcal{L}_b],\iota_c]+(-1)^{\deg_{\mathfrak{g}_{A}}(a)\deg_{\mathfrak{g}_A}(b)}[\mathcal{L}_b,[\mathcal{L}_a,\iota_c]]\\
 &=[\mathcal{L}_{\{a,b\}},\iota_c]+(-1)^{\deg_{\mathfrak{g}_{A}}(a)\deg_{\mathfrak{g}_A}(b)}[\mathcal{L}_b,\iota_{\{a,c\}}]\\
 &=\iota_{\{\{a,b\},c\}}+(-1)^{\deg_{\mathfrak{g}_{A}}(a)\deg_{\mathfrak{g}_A}(b)}\iota_{\{b,\{a,c\}\}},
\end{align*}
where we used Lemma \ref{lem:Lie}. Hence
\[
\{a,\{b,c\}\}=\{\{a,b\},c\}+(-1)^{\deg_{\mathfrak{g}_{A}}(a)\deg_{\mathfrak{g}_A}(b)}\{b,\{a,c\}\}.
\]
\end{proof}
A degree 1 operator $\Bc$ on $M$ with $\Bc^2=0$ makes $(M,\Bc)$ a cochain complex. Since $[\Bc,\Bc]=0$, the element $\Bc$ is also a Maurer--Cartan element in the differential graded Lie algebra $(\mathrm{End}_k(M),0,[\,,\,])$, where $0$ is the zero differential and $[\,,\,]$ is the commutator bracket. Hence we can use $\Bc$ to twist the differential of this dgla to get the dgla $(\mathrm{End}_k(M),[\Bc,\,],[\,,\,])$. Lemmas \ref{lem:BcL}--\ref{lem:Lie-algebra} together then give the following.
\begin{proposition}\label{prop:lie-algebra-morphism}
    Let  $(\Bc,\mathcal{L},\{-,-\})$ be a $BV_0$-operator on $M$.  Then 
    \[
    \mathcal{L}\colon (\mathfrak{g}_A,0,\{\,,\,\})\to (\mathrm{End}_k(M),[\Bc,\,],[\,,\,])
    \]
    is a morphism of differential graded Lie algebras.
\end{proposition}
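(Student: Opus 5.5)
The plan is to unpack what it means for $\mathcal{L}$ to be a morphism of differential graded Lie algebras and match each requirement to one of the preceding lemmas. Three things are needed. First, both sides must be dglas, and $\mathcal{L}$ must be a map of graded vector spaces of degree zero. Second, $\mathcal{L}$ must intertwine the differentials. Third, it must intertwine the brackets.

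For the first point, the source $(\mathfrak{g}_A,0,\{\,,\,\})$ is a dgla: Lemma \ref{lem:Lie-algebra} gives the graded Lie algebra structure, and the zero differential is trivially compatible with it. For the target, I will use the remark preceding the statement. Since $\Bc$ has degree $1$ and $[\Bc,\Bc]=2\Bc^2=0$, it is a Maurer--Cartan element of the graded Lie algebra $(\mathrm{End}_k(M),0,[\,,\,])$. Hence $[\Bc,\,]$ squares to zero, because $[\Bc,[\Bc,x]]=\tfrac12[[\Bc,\Bc],x]=0$ by the Jacobi identity. It is also a degree $1$ derivation of the commutator bracket, so the twisted triple is a dgla. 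As for degrees, Example \ref{ex:order2} notes that $\mathcal{L}_a=[\iota_a,\Bc]$ has degree $\deg(\iota_a)+1=\deg_{\mathfrak{g}_A}(a)$ when $\deg\Bc=1$. So $\mathcal{L}$ is a degree zero linear map $\mathfrak{g}_A\to\mathrm{End}_k(M)$.

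For compatibility with the differentials, we need $[\Bc,\mathcal{L}_a]=\mathcal{L}_{0(a)}=0$ for every $a$. This is exactly Lemma \ref{lem:BcL}. For compatibility with the brackets, we need $\mathcal{L}_{\{a,b\}}=[\mathcal{L}_a,\mathcal{L}_b]$, which is exactly Lemma \ref{lem:Lie}.

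There is essentially no obstacle here, since the statement is a repackaging of Lemmas \ref{lem:BcL}--\ref{lem:Lie-algebra}. The only point needing a moment's care is the sign conventions. One has to make sure that the sign convention for "morphism of dglas" is the one in which the degree zero map $\mathcal{L}$ is required to satisfy the two identities above with no extra signs. Because $\mathcal{L}$ has degree zero, no Koszul signs arise, and the identities proved in the lemmas are precisely these.
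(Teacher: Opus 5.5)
Your proposal is correct and matches the paper's argument. The target is a dgla because $\Bc$ is a Maurer--Cartan element of $(\mathrm{End}_k(M),0,[\,,\,])$ and one twists by it, the source is a dgla by Lemma \ref{lem:Lie-algebra}, and the conditions $[\Bc,\mathcal{L}_a]=0$ and $\mathcal{L}_{\{a,b\}}=[\mathcal{L}_a,\mathcal{L}_b]$ are Lemmas \ref{lem:BcL} and \ref{lem:Lie}; your extra checks (that $\mathcal{L}$ has degree zero, and that $[\Bc,\,]$ squares to zero by the Jacobi identity) just spell out what the paper leaves implicit.
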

We can look at the Lie bracket $\{-,-\}\colon \mathfrak{g}_A\otimes \mathfrak{g}_A\to \mathfrak{g}_A$ as a degree 1 bracket on the graded commutative algebra $A$. One can then wonder about the compatibility of this bracket with the multiplication of $A$. One has the following.
\begin{proposition}\label{prop:p0}
   Let  $(\Bc,\mathcal{L},\{-,-\})$ be a  $BV_0$-operator on $M$. Then $(A,\cdot, \{\,,\,\})$ is a $P_0$-algebra, i.e., a Poisson algebra with a bracket of degree $1$.\footnote{Generally, a $P_2$-algebra is a Poisson algebra with a bracket of degree $1=k$ so that, for instance, ordinary Poisson algebras are $P_1$-algebras and Gerstenhaber algebras are $P_2$-algebras.}
   
\end{proposition}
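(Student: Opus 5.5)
By Lemma \ref{lem:Lie-algebra}, $\{-,-\}$ is already a graded Lie bracket on $\mathfrak{g}_A=A[-1]$, i.e.\ a degree $1$ Lie bracket on $A$. So the only thing left to prove is the Leibniz rule: for each $a$, the operator $\{a,-\}$ is a derivation of the product of $A$, up to the appropriate Koszul sign. I will obtain this in two steps, using the Cartan-type identities, and then transport it to $A$ through the injectivity of $\iota$.

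\emph{Step 1: $\mathcal{L}_a$ is first order.} Since $\Bc$ has order at most $2$, the operator $\mathcal{L}_a=[\iota_a,\Bc]$ has order at most $1$. Indeed, by definition $[\mathcal{L}_a,\iota_b]=\iota_{\{a,b\}}$ lies in $\Cartan^{\leq 0}=A$, and every element of $A$ acting by multiplication is $A$-linear.

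\emph{Step 2: the Leibniz rule.} Under the identification of $\mathfrak{g}_A$ with $A$ as operators, the product $bc$ corresponds to the composite $\rho_b\rho_c$, since $\rho_{bc}=\rho_b\circ\rho_c$. The key step is the graded derivation property of the commutator:
\[
[\mathcal{L}_a,\rho_b\rho_c]=[\mathcal{L}_a,\rho_b]\rho_c+(-1)^{|\mathcal{L}_a||b|}\rho_b[\mathcal{L}_a,\rho_c].
\]
Each commutator $[\mathcal{L}_a,\rho_x]$ is the multiplication operator by $\{a,x\}$, with a sign coming from the shift $\iota$ versus $\rho$. So the left side is $\rho_{\{a,bc\}}$ and the right side is $\rho_{\{a,b\}c\pm b\{a,c\}}$. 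Injectivity of $\rho$ (faithfulness of $M$) then gives
\[
\{a,bc\}=\{a,b\}c+(-1)^{(\deg_A(a)+1)\deg_A(b)}b\{a,c\}.
\]
Here $|\mathcal{L}_a|=\deg_A(a)+1-1$ up to parity: $\mathcal{L}_a$ has degree $\deg_{\mathfrak{g}_A}(a)=\deg_A(a)+1$ in the appropriate convention, which is exactly the parity required for a degree-$1$ Poisson bracket. I should double-check this sign against the conventions fixed in Remark \ref{rem:derived-brackets-cartan}.

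Putting the two facts together, $(A,\cdot,\{\,,\,\})$ is a $P_0$-algebra. I expect the main obstacle to be purely sign bookkeeping: converting between $\iota_a$ (the element of $\mathfrak{g}_A$) and $\rho_a$ (the element of $A$) inside the commutators, and confirming that the resulting sign is the standard $P_0$ Leibniz sign $(-1)^{(|a|+1)|b|}$. Everything else follows from the earlier lemmas and the graded Jacobi identity for commutators.
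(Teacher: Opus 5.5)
Your proposal is correct and is essentially the paper's argument: expand $[\mathcal{L}_a,\iota_{bc}]=[\mathcal{L}_a,\iota_b\iota_c]$ by the graded derivation property of the commutator, then conclude by injectivity of $\iota$, with the Lie part supplied by Lemma \ref{lem:Lie-algebra}. The sign is $(-1)^{(\deg_A(a)+1)\deg_A(b)}$ because $\deg\mathcal{L}_a=\deg_{\mathfrak{g}_A}(a)=\deg_A(a)+1$; your intermediate ``$\deg_A(a)+1-1$'' is a slip, but your final sign is right, Step 1 is not needed, and there is no extra sign between $\iota_x$ and $\rho_x$ (they are the same multiplication operator, so $[\mathcal{L}_a,\rho_x]=\rho_{\{a,x\}}$ exactly).
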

\begin{proof}
    We have
    \begin{align*}
    \iota_{\{a,bc\}}&=[\mathcal{L}_a,\iota_{bc}]=[\mathcal{L}_a,\iota_b\iota_c]=[\mathcal{L}_a,\iota_b]\iota_c+(-1)^{\deg_{\mathfrak{g}_A}(a)(\deg_{\mathfrak{g}_A}(b)-1)}\iota_b[\mathcal{L}_a,\iota_c]\\
    &=\iota_{\{a,b\}}\iota_c+(-1)^{\deg_{\mathfrak{g}_A}(a)(\deg_{\mathfrak{g}_A}(b)-1)}
    \iota_b\iota_{\{a,c\}}\\
    &=\iota_{\{a,b\}c}+(-1)^{(\deg_{A}(a)+1)\deg_A(b)}
    \iota_{b\{a,c\}}\\
    &=\iota_{\{a,b\}c+(-1)^{(\deg_{A}(a)+1)\deg_A(b)}b\{a,c\}}
    \end{align*}
     Hence,
\begin{equation}\label{eq:poisson}
{\{a,bc\}}=\{a,b\}c+(-1)^{(\deg_{A}(a)+1)\deg_A(b)}b\{a,c\}.
\end{equation} 
\end{proof}
\begin{corollary}
    Let  $(\Bc,\mathcal{L},\{-,-\})$ be a $BV_0$-operator on $M$, and let $\mathcal{G}_A$ be the graded commutative algebra obtained from $A$ by taking the opposite grading. Then  $(\mathcal{G}_A,\cdot, \{\,,\,\})$ is a $P_2$-algebra, i.e., a Gerstenhaber algebra.
\end{corollary}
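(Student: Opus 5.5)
The corollary is a regrading of Proposition \ref{prop:p0}, so the plan is to transport the $P_0$-structure of $(A,\cdot,\{\,,\,\})$ to $\mathcal{G}_A$ and check that each axiom becomes the corresponding Gerstenhaber axiom. Set $\deg_{\mathcal{G}_A}(a)=-\deg_A(a)$. Negating degrees does not change parities, so $\mathcal{G}_A$ is again graded commutative. A homogeneous operation of degree $d$ for the grading of $A$ has degree $-d$ for the grading of $\mathcal{G}_A$. By Example \ref{ex:order2}, the bracket has degree zero as a map $\mathfrak{g}_A\otimes\mathfrak{g}_A\to\mathfrak{g}_A$. Since $\mathfrak{g}_A=A[-1]$, it therefore has degree $1$ as a map $A\otimes A\to A$, and hence degree $-1$ on $\mathcal{G}_A$. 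This matches the convention of the statement once the grading is reversed, in the same way that a $BV_0$-operator becomes a $BV_2$-operator under reversal. One could equally regard the bracket as degree $+1$ on the suspension $\mathcal{G}_A[1]$.

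Next I will check the three axioms.

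\emph{Graded antisymmetry and Jacobi.} These hold on $\mathfrak{g}_A$ by Example \ref{ex:order2} and Lemma \ref{lem:Lie-algebra}, with signs governed by $\deg_{\mathfrak{g}_A}(a)=\deg_A(a)+1$. The shifted degree of $a$ in $\mathcal{G}_A$ used in the Gerstenhaber axioms is $-\deg_A(a)-1\equiv\deg_{\mathfrak{g}_A}(a)\pmod 2$. Since all signs depend only on parities, the Koszul signs are literally the same and these two axioms carry over verbatim.

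\emph{Leibniz rule.} Equation \eqref{eq:poisson} reads
\[
\{a,bc\}=\{a,b\}c+(-1)^{(\deg_A(a)+1)\deg_A(b)}b\{a,c\}.
\]
Its sign is $(-1)^{(|a|-1)|b|}$ with $|\cdot|=\deg_{\mathcal{G}_A}$, which is exactly the Gerstenhaber–Leibniz sign $(-1)^{(|a|-1)|b|}$. So $\{a,-\}$ is a graded derivation of the product, of degree equal to the parity of $|a|-1$.

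The only real obstacle is bookkeeping. I must make sure that the degree conventions for $\mathfrak{g}_A$, $A$ and $\mathcal{G}_A$ line up, and that no sign depends on more than parity. Once this is confirmed, the corollary follows immediately from Proposition \ref{prop:p0} together with Lemma \ref{lem:Lie-algebra}.
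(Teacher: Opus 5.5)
Your proposal is correct and takes the paper's route. The paper states this corollary without a separate proof, as the regrading of Proposition \ref{prop:p0} together with the Lie structure from Example \ref{ex:order2} and Lemma \ref{lem:Lie-algebra}, and your parity checks for antisymmetry, Jacobi and the Leibniz sign $(-1)^{(|a|-1)|b|}$ are what that regrading needs. One harmless slip in an aside: on the suspension $\mathcal{G}_A[1]$ the bracket has degree $0$, not $+1$, which is exactly the shifted degree $|a|-1$ you correctly use in your sign check.
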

\begin{definition}
Let $A$ be a graded commutative algebra. A $P_0$-module for $A$ is a pair $(M,\Bc)$ where $M$ is a free $A$-module of rank 1 and $\Bc$ is a differential operator of order at most 2 and degree $1$ on $M$ such that $\Bc^2=0$.  
\end{definition}
{
\begin{remark}
Since for a free rank 1 $A$-module $M$ we have $\Diff_A(M,M)=\Cartan_A(M,M)$, one can equivalently define a  $P_0$-module for $A$ as a pair $(M,\Bc)$ where $M$ is a free $A$-module of rank 1 and $\Bc$ is a $BV_0$-operator on $M$.  
\end{remark}
}
\begin{lemma}\label{lem:p0-module}
A $P_0$-module for $A$ endows $A$ with the structure of $P_0$-algebra.
\end{lemma}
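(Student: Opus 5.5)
The plan is to reduce Lemma \ref{lem:p0-module} to Proposition \ref{prop:p0}; the only work is checking that its hypotheses hold. Let $(M,\Bc)$ be a $P_0$-module for $A$, so $M$ is free of rank 1 and $\Bc\in\Diff_A^{\leq 2}(M,M)$ has degree $1$ with $\Bc^2=0$.

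First, I will check that $M$ is faithful, which is needed for $\Cartan_A(M,M)$ to be defined. Choose a basis element $\zeta$. If $a\zeta=0$, then applying $\varphi_\zeta$ from Example \ref{example:for-bv} gives $a=0$, so $\rho\colon A\to\End_k(M)$ is injective. Next, $\rho\colon A\to\End_A(M)$ is in fact an isomorphism, since an $A$-linear endomorphism $f$ of $M$ is determined by $f(\zeta)=a_f\zeta$, giving $f=\rho_{a_f}$. Remark \ref{rem:locally-free} then yields $\Diff_A(M,M)=\Cartan_A(M,M)$. If one prefers not to rely on the inductive remark, the same identification can be checked directly: it holds in order $0$ by the previous sentence, and it propagates in each order because the defining conditions coincide.

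Hence $\Bc\in\Cartan_A^{\leq 2}(M,M)$. By Example \ref{ex:order2}, it determines a triple $(\Bc,\mathcal{L},\{-,-\})$ with $\mathcal{L}_a=[\iota_a,\Bc]$ and $[\mathcal{L}_a,\iota_b]=\iota_{\{a,b\}}$. Here the bracket is well defined because $[\mathcal{L}_a,\iota_b]$ is a Cartan operator of order $\leq 0$, i.e.\ lies in the image of $A$, and $\iota$ is injective by faithfulness. Together with $\deg\Bc=1$ and $\Bc^2=0$, this triple is a $BV_0$-operator on $M$.

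Proposition \ref{prop:p0} then makes $(A,\cdot,\{\,,\,\})$ a $P_0$-algebra. The Lie part comes from Lemma \ref{lem:Lie-algebra}, with antisymmetry from Example \ref{ex:order2}, and the Leibniz rule is \eqref{eq:poisson}.

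The main point requiring care is the identification of order $\leq 2$ Grothendieck operators with Cartan operators, namely that the order-zero commutator $[\mathcal{L}_a,\iota_b]$ is actually multiplication by an element of $A$. This is exactly where freeness of rank 1 is used. I would also remark that the resulting structure does not depend on the choice of $\zeta$, since the bracket is defined intrinsically through $\iota$.
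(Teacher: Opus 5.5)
Your proof is correct and takes the same route as the paper, which simply notes that $\Diff_A^{\leq 2}(M,M)=\Cartan_A^{\leq 2}(M,M)$ for a free rank 1 module and then applies Proposition \ref{prop:p0}. Your extra checks (faithfulness of $M$, $\rho\colon A\xrightarrow{\sim}\End_A(M)$, and the inductive identification of the two filtrations) fill in details that the paper leaves implicit.
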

\begin{proof}
  Since $M$ is a free $A$-module of rank 1, we have $\Diff_A^{\leq 2}(M,M)=\Cartan_A^{\leq 2}(M,M)$. The conclusion then follows from Proposition \ref{prop:p0}.  
\end{proof}
\begin{definition}\label{def:bv}
 A $BV_0$-algebra is a graded commutative algebra $A$ endowed with a second order differential operator $\Delta\in \Diff^{\leq 2}(A,A)$ of degree 1, such that $\Delta^2=0$ and $\Delta(1)=0$. $BV_2$-algebra (or simply a $BV$-algebra or a \emph{Batalin-Vilkovisky algebra}) is a $BV_0$-algebra with the opposite grading.   
\end{definition}
\begin{remark}\label{rem:bv-to-p0}
In Definition \ref{def:bv}, one looks at $A$ as a module over itself. Such a module is clearly free of rank 1, so $\Delta\in \Cartan^{\leq 2}(A,A)$, and all of the above considerations apply to $(A,\Delta)$.
Therefore, we can equivalently say that a $BV_0$-algebra is a graded commutative algebra $A$ endowed with a $BV_0$-operator $\Delta$ on $A$ seen as an $A$-module, such that $\Delta(1)=0$. 
In particular, every $BV_0$ algebra is a $P_0$ algebra and every Batalin-Vilkovisky algebra is a Gerstenhaber algebra.  
\end{remark}
\begin{lemma}\label{lem:bv-bracket}
Let $(A,\Delta)$ be a $BV_0$-algebra, and let $\{\,,\,\}$ be its $P_0$-bracket. Then we have
\begin{equation}\label{eq:Delta-and-bracket}
\Delta(ab)=\Delta(a)b+(-1)^{\deg_A(a)}a\Delta(b)-(-1)^{\deg_A(a)}\{a,b\}, 
\end{equation}
for any $a,b\in A$.
\end{lemma}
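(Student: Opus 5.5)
The identity will follow by unwinding the definition of the bracket through the Cartan-operator formalism for $M=A$, where $\iota_a=\rho_a$ is left multiplication by $a$. By Example \ref{ex:order2} and Remark \ref{rem:derived-brackets-cartan}, the bracket is defined by
\[
\iota_{\{a,b\}}=[[\iota_a,\Delta],\iota_b].
\]
Both sides are elements of $A\subseteq\End_k(A)$. The plan is to evaluate both sides on the unit $1\in A$ and then use $\Delta(1)=0$.

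\textbf{Step 1: expand the double commutator.} Write $\alpha=\deg_A(a)=\deg(\iota_a)$ and $\beta=\deg_A(b)$; recall that $\Delta$ has degree $1$. Then $[\iota_a,\Delta]=\iota_a\Delta-(-1)^{\alpha}\Delta\iota_a$. The outer commutator with $\iota_b$ carries the sign $(-1)^{(\alpha+1)\beta}$, since $[\iota_a,\Delta]$ has degree $\alpha+1$. Expanding gives four terms:
\[
[[\iota_a,\Delta],\iota_b]=\iota_a\Delta\iota_b-(-1)^{\alpha}\Delta\iota_a\iota_b-(-1)^{(\alpha+1)\beta}\iota_b\iota_a\Delta+(-1)^{(\alpha+1)\beta+\alpha}\iota_b\Delta\iota_a .
\]

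\textbf{Step 2: apply to $1$.} The left side gives $\{a,b\}\cdot 1=\{a,b\}$. On the right, the third term vanishes because $\Delta(1)=0$. The second term gives $-(-1)^\alpha\Delta(ab)$. The first term gives $a\Delta(b)$. The last term gives $(-1)^{(\alpha+1)\beta+\alpha}\,b\,\Delta(a)$. Using graded commutativity, $b\Delta(a)=(-1)^{\beta(\alpha+1)}\Delta(a)b$, since $\Delta(a)$ has degree $\alpha+1$. The last term therefore becomes $(-1)^\alpha\Delta(a)b$. So
\[
\{a,b\}=a\Delta(b)-(-1)^\alpha\Delta(ab)+(-1)^\alpha\Delta(a)b.
\]

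\textbf{Step 3: rearrange.} Multiplying by $(-1)^\alpha$ and solving for $\Delta(ab)$ gives
\[
\Delta(ab)=\Delta(a)b+(-1)^{\alpha}a\Delta(b)-(-1)^{\alpha}\{a,b\},
\]
which is \eqref{eq:Delta-and-bracket}.

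The only real obstacle is the sign bookkeeping, in particular checking the sign convention in $\iota_{\{a,b\}}=[\mathcal{L}_a,\iota_b]$ with $\mathcal{L}_a=[\iota_a,\Delta]$. I would double-check it on the simplest case, $a,b$ of degree $0$, where the identity reads $\Delta(ab)=\Delta(a)b+a\Delta(b)-\{a,b\}$.
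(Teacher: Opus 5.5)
Your proof is correct and is essentially the paper's own argument. You evaluate $[[\iota_a,\Delta],\iota_b]=\iota_{\{a,b\}}$ on $1$, use $\Delta(1)=0$, move $b$ past $\Delta(a)$ by graded commutativity, and rearrange, and your signs agree with the paper's at every step.
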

\begin{proof}
 Let $\mathcal{L}\colon (\mathfrak{g}_A,0,\{\,,\,\})\to (\mathrm{End}_k(A), [\Delta,-], [\,,\,])$ the dgla morphism associated to $\Delta$. We have $[\mathcal{L}_a,\iota_b]=\iota_{\{a,b\}}$. Applying this to the element $1\in A$ we obtain
 \[
 [[\iota_a,\Delta],\iota_b](1)=\{a,b\}.
 \]
 The left hand side is
 \begin{align*}
& \left((\iota_a\circ\Delta-(-1)^{\deg_A(a)}\Delta\circ\iota_a)\circ \iota_b-(-1)^{(\deg_A(a)+1)\deg_A(b)}\iota_b\circ (\iota_a\circ\Delta-(-1)^{\deg_A(a)}\Delta\circ\iota_a)\right)(1)\\
 &=
 a\Delta(b)-(-1)^{\deg_A(a)}\Delta(ab)-(-1)^{(\deg_A(a)+1)(\deg_A(b)+1)}b\Delta(a)\\
 &=
 a\Delta(b)-(-1)^{\deg_A(a)}\Delta(ab)+(-1)^{\deg_A(a)}\Delta(a)b.
 \end{align*}
Thus we obtain
\[
(-1)^{\deg_A(a)}\Delta(ab)=a\Delta(b)+(-1)^{\deg_A(a)}\Delta(a)b-\{a,b\}, 
\]
i.e.,
\[
\Delta(ab)=\Delta(a)b+(-1)^{\deg_A(a)}a\Delta(b)-(-1)^{\deg_A(a)}\{a,b\}. 
\]
\end{proof}
We have the following converse to Lemma \ref{lem:bv-bracket}, thus providing an equivalent definition of a $BV_0$-algebra (and so of a Batalin-Vilkovisky algebra).
\begin{lemma}
Let $(A,\cdot, \{\,,\,\})$ be a $P_0$-algebra, and let $\Delta\colon A\to A[1]$ a degree 1 linear operator such that $\Delta^2=0$, $\Delta(1)=0$, and 
\begin{equation}\label{eq:bv-bracket}
\Delta(ab)=\Delta(a)b+(-1)^{\deg_A(a)}a\Delta(b)-(-1)^{\deg_A(a)}\{a,b\}, 
\end{equation}
for any $a,b\in A$. Then $(A,\Delta)$ is a $BV_0$-algebra.
\end{lemma}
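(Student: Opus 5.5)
The only thing to prove is that $\Delta\in\Diff_A^{\leq 2}(A,A)$; the conditions $\Delta^2=0$ and $\Delta(1)=0$ are assumed. By Remark \ref{rem:derived-bracket}, it suffices to show that for every $a,b\in A$ the double commutator $[[\Delta,\rho_a],\rho_b]$ is an $A$-module morphism $A\to A$. I would first rewrite \eqref{eq:bv-bracket} as a formula for the first commutator. Evaluating on an element $c\in A$,
\[
[\Delta,\rho_a](c)=\Delta(ac)-(-1)^{\deg_A(a)}a\Delta(c)=\Delta(a)c-(-1)^{\deg_A(a)}\{a,c\}.
\]
This says $[\Delta,\rho_a]=\rho_{\Delta(a)}-(-1)^{\deg_A(a)}\{a,-\}$.

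Next I would commute this with $\rho_b$. The first term $\rho_{\Delta(a)}$ commutes (in the graded sense) with $\rho_b$, by graded commutativity of $A$, so it drops out. What remains, up to the sign $-(-1)^{\deg_A(a)}$, is $[\{a,-\},\rho_b]$, where $\{a,-\}$ has degree $\deg_A(a)+1$. Applying the Leibniz rule \eqref{eq:poisson} of the $P_0$-algebra to $\{a,bc\}$ gives
\[
\{a,bc\}-(-1)^{(\deg_A(a)+1)\deg_A(b)}b\{a,c\}=\{a,b\}c.
\]
Hence $[\{a,-\},\rho_b]=\rho_{\{a,b\}}$, and therefore
\[
[[\Delta,\rho_a],\rho_b]=-(-1)^{\deg_A(a)}\rho_{\{a,b\}}.
\]
Multiplication operators are $A$-linear by graded commutativity (Remark \ref{rem:order-zero}), so $\Delta\in\Diff_A^{\leq 2}(A,A)$. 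Together with $\Delta^2=0$ and $\Delta(1)=0$, this shows $(A,\Delta)$ is a $BV_0$-algebra.

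As a consistency check, the bracket this recovers matches the given one: since $\{a,b\}_\Delta$ is defined with the prefactor $-(-1)^{\deg_A(a)\deg\Delta}$, the derived bracket equals $\{a,b\}$, in agreement with Lemma \ref{lem:bv-bracket}. This is not needed for the statement itself.

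The only delicate point is the sign bookkeeping. I need to check that the Koszul sign in $[\{a,-\},\rho_b]$, namely $(-1)^{(\deg_A(a)+1)\deg_A(b)}$, is exactly the sign appearing in \eqref{eq:poisson}, so that the terms cancel exactly rather than up to sign. The computation above suggests that they match.
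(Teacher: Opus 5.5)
Your proof is correct and is essentially the paper's argument: the paper also computes the double commutator directly, applied to $c$, and reduces it to multiplication by $\{a,b\}$ using the relation \eqref{eq:bv-bracket} together with the Leibniz rule \eqref{eq:poisson}. The only difference is organizational. You first isolate $[\Delta,\rho_a]=\rho_{\Delta(a)}-(-1)^{\deg_A(a)}\{a,-\}$, which avoids the paper's expansion of $\Delta(abc)$ and $b\Delta(ac)$ and makes the sign cancellation against \eqref{eq:poisson} immediate.
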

\begin{proof}
We have to show that $\Delta$ is a second order operator. By example \ref{ex:order2}, this is equivalent to showing that    
$[\mathcal{L}_a,\iota_b]=\iota_{\{a,b\}}$,
with $\mathcal{L}_a=[\iota_a,\Delta]$. By the same computation as in the proof of Lemma \ref{lem:bv-bracket}, and writing $|a|$ for $\deg_A(a)$, we have
\[
[\mathcal{L}_a,\iota_b](c)=a\Delta(bc)-(-1)^{\deg_A(a)}\Delta(abc)-(-1)^{\deg_A(b)}ab\Delta(c)+(-1)^{\deg_A(ab)+\deg_A(a)+\deg_A(b)}b\Delta(ac).
\]
We now expand the  second and the fourth term on the right hand side by using \eqref{eq:bv-bracket}. We have, for the second term:
\[
\Delta(abc)=
    \Delta(a(bc))\\
    =  \Delta(a)bc + (-1)^{|a|} a\Delta(bc) - (-1)^{|a|} \{a,bc\}, 
\]
and for the fourth term:
\[
b\Delta(ac)=(-1)^{(|a|+1)|b|}\Delta(a)bc+(-1)^{|a|(|b|+1)}ab\Delta(c)-(-1)^{|a|}b\{a,c\}.
\]
Therefore,
\[
[\mathcal{L}_a,\iota_b](c)=\{a,bc\}
-(-1)^{|ab|+|b|}b\{a,c\}=\{a,b\}c=\iota_{\{a,b\}}(c),
\]
where we used \eqref{eq:poisson}.
\end{proof}

\begin{definition}
Let $A$ be a graded commutative algebra. A $BV_0$-module for $A$ is a triple $(M,\Bc,\zeta)$ where $(M,\Bc)$ is a $P_0$-module for $A$ and $\zeta$ is a basis element for $M$ such that $\Bc(\zeta)=0$.  
\end{definition}
\begin{proposition}\label{prop:bv0-module}
 A $BV_0$-module for $A$ induces a structure of $BV_0$-algebra on $A$.    
\end{proposition}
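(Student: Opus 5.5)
We transport $\Bc$ to $A$ along the isomorphism of Example \ref{example:for-bv} determined by the basis element $\zeta$, and set $\Delta:=\Delta_{\Bc,\zeta}$. Concretely, $\Delta\colon A\to A$ is characterized by equation \eqref{eq:defining}, $\Delta(a)\,\zeta=\Bc(a\zeta)$; equivalently $\Delta=\varphi_\zeta\circ\Bc\circ\varphi_\zeta^{-1}$, where $\varphi_\zeta\colon M\to A$ is the $A$-module isomorphism sending $\zeta$ to $1$. We then check the three conditions of Definition \ref{def:bv} in turn.

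First, $\Delta$ is a second order operator of degree $1$. Since $M$ is free of rank $1$, Remark \ref{rem:locally-free} gives $\Diff_A(M,M)=\Cartan_A(M,M)$, and likewise for $A$. Hence $\Bc\in\Cartan_A^{\leq 2}(M,M)$. By the lemma on conjugation by an $A$-module isomorphism, $\Delta\in\Cartan_A^{\leq2}(A,A)=\Diff_A^{\leq 2}(A,A)$. The degree is preserved because $\zeta$, being a basis element, has degree zero, so $\varphi_\zeta$ has degree $0$.

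Second, $\Delta^2=0$. This follows from
\[
\Delta^2=\varphi_\zeta\circ\Bc^2\circ\varphi_\zeta^{-1}=0,
\]
since $\Bc^2=0$ by the definition of a $P_0$-module.

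Third, $\Delta(1)=0$. Plugging $a=1$ into \eqref{eq:defining} gives $\Delta(1)\zeta=\Bc(\zeta)=0$. Since $\zeta$ is a basis element, $x\mapsto x\zeta$ is injective, so $\Delta(1)=0$.

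Hence $(A,\Delta)$ is a $BV_0$-algebra. By Remark \ref{rem:bv-to-p0} its $P_0$-structure is the one of Lemma \ref{lem:p0-module}, and indeed conjugation carries the derived bracket of $\Bc$ to that of $\Delta$, because $\varphi_\zeta$ commutes with the $\iota_a$. None of these steps is a real obstacle. The only point needing care is the invariance of the Grothendieck order under conjugation. That invariance is exactly the earlier conjugation lemma combined with the identification of $\Diff$ and $\Cartan$ for free rank $1$ modules.
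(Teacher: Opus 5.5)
Your proof is correct and follows the paper's argument: you transport $\Bc$ along the $\zeta$-induced isomorphism $\Cartan_A(M,M)\cong\Cartan_A(A,A)$, using $\Diff_A=\Cartan_A$ for free rank~1 modules, and then read off $\Delta_{\Bc,\zeta}^2=0$ from $\Bc^2=0$ and $\Delta_{\Bc,\zeta}(1)=0$ from \eqref{eq:defining} together with $\Bc(\zeta)=0$. Your closing observation, that conjugation by $\varphi_\zeta$ commutes with the $\iota_a$ and so preserves the derived bracket, is also correct; it gives a quicker proof of the paper's later lemma $\{\,,\,\}_\zeta=\{\,,\,\}$ than the explicit computation used there.
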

\begin{proof}
Immediate from Example \ref{example:for-bv}. As in the proof of Lemma \ref{lem:p0-module}, the fact that $M$ is free of rank 1 gives $\Bc\in \Cartan_A(M,M)$. Since the isomorphism
\begin{equation}\label{eq:iso-zeta}
\Cartan_A(M,M)\xrightarrow{\sim}\Cartan_A(A,A).
\end{equation}
induced by $\zeta$
is an isomorphism of filtered and graded algebras and $\Cartan_A(A,A)$ is a filtered and graded subalgebra of $\Diff_A(A,A)$  we have that the image $\Delta_{\Bc,\zeta}$ of $\Bc$ via the isomorphism \eqref{eq:iso-zeta} is an element in $\Diff^{\leq 2}(A,A)$ of degree 1. Moreover, $\Bc^2=0$ implies $\Delta_{\Bc,\zeta}^2=0$, and the defining equation  $\Delta_{\Bc,\zeta}(a)\, \zeta= \Bc(a\zeta)$ implies $\Delta_{\Bc,\zeta}(1)\, \zeta= \Bc(\zeta)=0$ and so $\Delta_{\Bc,\zeta}(1)=0$.
\end{proof}
\begin{remark}
As  noticed in Remark \ref{rem:non-canonical}, the operator $\Delta_{\Bc,\zeta}$ depends on the choice of the basis element $\zeta$. Thanks to the condition $\Bc(\zeta)=0$, however, this dependence is very well controlled. Namely, if $\zeta'$ is another basis element with $\Bc(\zeta')=0$, then we have
have $\zeta'=x\zeta$ for some $x\in (A^0)^\times$ and 
\[
\mathcal{L}_x\zeta=[\rho_x,\Bc](\zeta)=x\Bc(\zeta)-\Bc(\zeta')=0.
\]
It follows that, for any $a\in A$, 
\[
\iota_{\{x,a\}}\zeta=\mathcal{L}_x(a\zeta).
\]
Therefore equation \eqref{eq:defining} for $\zeta'$ gives
\[
\Delta_{\Bc,\zeta'}(a)\,x\zeta=\Bc(xa\zeta)=x\Bc(a\zeta)-\mathcal{L}_x(a\zeta)=x\Delta_{\Bc,\zeta}(a)\zeta-{\{x,a\}}\zeta,
\]
hence
\[
\Delta_{\Bc,\zeta'}=\Delta_{\Bc,\zeta}-x^{-1}\{x,-\}.
\]
\end{remark}
\begin{remark}\label{rem:canonical}
If $(M,\Bc,\zeta)$ is a $BV_0$-module for $A$, then we have \emph{two} $P_0$-algebra structures induced by 
$(M,\Bc,\zeta)$ on $A$. The first one is the one obtained forgetting the basis element $\zeta$ and looking just at the $P_0$-module $(M,\Bc)$. By Lemma \ref{lem:p0-module} this induces a $P_0$-algebra structure on $A$. Let us denote the Poisson bracket of this structure by the symbol $\{\,,\,\}$. The second one consists in considering the $BV_0$-algebra $(A,\cdot, \Delta_{\Bc,\zeta})$ from Proposition \ref{prop:bv0-module}, and consider the underlying $P_0$-algebra (see Remark \ref{rem:bv-to-p0}). Let us denote the Poisson bracket of this structure by the symbol $\{\,,\,\}_\zeta$.  As the notation emphasizes, the first bracket does not depend on the choice of the basis element $\zeta$. So it is, in a sense, more canonical than $\{\,,\,\}_\zeta$. But actually something remarkable happens here: the bracket $\{\,,\,\}_\zeta$ is independent of $\zeta$ and, for any choice of $\zeta$ it coincides with the canonical bracket $\{\,,\,\}$. In other words, the enhancement of a $P_0$-module to a $BV_0$-module corresponds to an enhancement of the associated $P_0$-algebra to a $BV_0$-algebra. The fact that the $P_0$-bracket remains unchanged as the basis element $\zeta$ varies, while the $BV_0$-operator $\Delta_{\Bc,\zeta}$ does generally depend on the choice of $\zeta$ can be seen as a general instance of the phenomenon commonly encountered in Physics, where in the Batalin--Vilkovisky quantization procedure the Gerstenhaber bracket is typically canonical, whereas the Batalin--Vilkovisky laplacian depends on additional choices. The proof of the equality $\{\,,\,\}_\zeta=\{\,,\,\}$ is given in the following Lemma. 
\end{remark}

\begin{lemma}
 In the setting and notation of Remark \ref{rem:canonical}, one has  $\{\,,\,\}_\zeta=\{\,,\,\}$.  
\end{lemma}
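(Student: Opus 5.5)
The plan is to transport everything through the isomorphism $\varphi_\zeta\colon M\to A$ of Example \ref{example:for-bv} and to observe that the derived bracket is defined purely in terms of commutators with the multiplication operators $\iota_a$. Conjugation by an $A$-module isomorphism preserves both commutators and the operators $\iota_a$. Concretely, write $\Delta=\Delta_{\Bc,\zeta}=\varphi_\zeta\circ\Bc\circ\varphi_\zeta^{-1}$, where $\varphi_\zeta(m)=a$ if $m=a\zeta$. Denote by $\iota^M_a$ and $\iota^A_a$ the multiplication operators on $M$ and on $A$ viewed as an $A$-module.

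The first step is to note that, since $\varphi_\zeta$ is $A$-linear and of degree zero, we have $\varphi_\zeta\circ\iota^M_a\circ\varphi_\zeta^{-1}=\iota^A_a$ for every $a\in\mathfrak{g}_A$. This is exactly the computation $[\varphi,\iota_a]=0$ used in the lemma on conjugation invariance of $\Cartan_A$. Next, conjugation by $\varphi_\zeta$ is an algebra automorphism from $\End_k(M)$ to $\End_k(A)$ that preserves degrees, so it preserves graded commutators. Applying it to the defining identity of $\{\,,\,\}$ for $(M,\Bc)$ (Example \ref{ex:order2}),
\[
[[\iota^M_a,\Bc],\iota^M_b]=\iota^M_{\{a,b\}},
\]
yields
\[
[[\iota^A_a,\Delta],\iota^A_b]=\iota^A_{\{a,b\}}.
\]
The left-hand side is, by definition, $\iota^A_{\{a,b\}_\zeta}$, the bracket attached to the $BV_0$-operator $\Delta$ on $A$ in Remark \ref{rem:bv-to-p0}. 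Since $A$ acts faithfully on itself, $\iota^A$ is injective, and we conclude $\{a,b\}_\zeta=\{a,b\}$. Alternatively, one can evaluate at $1\in A$: the identity $[[\iota_a,\Delta],\iota_b](1)=\{a,b\}_\zeta$ from the proof of Lemma \ref{lem:bv-bracket} gives the result directly.

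The only point needing care is the sign conventions. We must check that the $\mathcal{L}$ and $\{\,,\,\}$ of Example \ref{ex:order2} for $(A,\Delta)$ are defined by literally the same commutator formula as for $(M,\Bc)$, which they are, since both are instances of Remark \ref{rem:derived-brackets-cartan}. We must also check that no sign is introduced by conjugation, which holds because $\varphi_\zeta$ has degree zero (see the footnote on basis elements). Beyond this bookkeeping, I expect no real obstacle. The hypothesis $\Bc(\zeta)=0$ is not even needed for the bracket equality; it only serves to make $\Delta(1)=0$.
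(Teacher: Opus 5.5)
Your proof is correct, but it takes a more structural route than the paper's.

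\textbf{The paper's route.} It argues by direct computation:
\begin{enumerate}
\item It invokes Lemma \ref{lem:bv-bracket} to identify $-(-1)^{\deg_A(a)}\{a,b\}_\zeta$ with the derivation defect $\Delta_{\Bc,\zeta}(ab)-\Delta_{\Bc,\zeta}(a)b-(-1)^{\deg_A(a)}a\Delta_{\Bc,\zeta}(b)$.
\item It multiplies this defect by $\zeta$ and rewrites each term through the defining equation $\Delta_{\Bc,\zeta}(a)\zeta=\Bc(a\zeta)$.
\item It adds the vanishing term $ab\,\Bc(\zeta)$, so that the result becomes $-(-1)^{\deg_A(a)}[[\iota_a,\Bc],\iota_b](\zeta)=-(-1)^{\deg_A(a)}\{a,b\}\zeta$.
\end{enumerate}

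\textbf{Your route.} You transport the defining identity $[[\iota_a,\Bc],\iota_b]=\iota_{\{a,b\}}$ as a whole, by conjugating with the degree-zero $A$-linear isomorphism $\varphi_\zeta$. This conjugation carries $\iota^M_a$ to $\iota^A_a$ and preserves graded commutators, so no sign bookkeeping arises. The paper's computation is essentially this conjugation evaluated on the single element $\zeta$ (your ``evaluate at $1$'' variant), unpacked through the BV formula.

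\textbf{What each route buys.} Your argument is shorter and reuses the conjugation lemma preceding Example \ref{example:for-bv}. It also gives a sharper statement: the brackets agree for every basis element $\zeta$, provided $\{\,,\,\}_\zeta$ is read as the derived bracket of $\Delta_{\Bc,\zeta}$. In the paper, the hypothesis $\Bc(\zeta)=0$ is used only to offset the condition $\Delta(1)=0$ built into Lemma \ref{lem:bv-bracket}. The paper's route, in turn, displays the bracket explicitly as the failure of $\Delta_{\Bc,\zeta}$ to be a derivation. That is the form in which the Gerstenhaber bracket of a Batalin--Vilkovisky algebra is usually met.
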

\begin{proof}
By Lemma \ref{lem:bv-bracket}, we have 
\[
\Delta_{\Bc,\zeta}(ab)=\Delta_{\Bc,\zeta}(a)b+(-1)^{\deg_A(a)}a\Delta_{\Bc,\zeta}(b)-(-1)^{\deg_A(a)}\{a,b\}_\zeta.
\]
So in order to prove the statement it will suffice to show that
\[
\Delta_{\Bc,\zeta}(ab)-\Delta_{\Bc,\zeta}(a)b-(-1)^{\deg_A(a)}a\Delta_{\Bc,\zeta}(b)=-(-1)^{\deg_A(a)}\{a,b\}.
\]
Let us multiply the left hand side by $\zeta$ and apply the defining property of $\Delta_{\Bc,\zeta}$, i.e., equation \eqref{eq:defining}. Reasoning as in the proof of Lemma \ref{lem:bv-bracket}, and using $\Bc(\zeta)=0$, we have
\begin{align*}
&\left(\Delta_{\Bc,\zeta}(ab)-\Delta_{\Bc,\zeta}(a)b-(-1)^{\deg_A(a)}a\Delta_{\Bc,\zeta}(b)\right)\zeta\\
&\qquad\qquad=
\Delta_{\Bc,\zeta}(ab)\zeta -(-1)^{(\deg_A(a)+1)\deg_A(b)}b\Delta_{\Bc,\zeta}(a)\zeta -(-1)^{\deg_A(a)}a\Delta_{\Bc,\zeta}(b)\zeta\\
&\qquad\qquad=
\Bc(ab\zeta) -(-1)^{(\deg_A(a)+1)\deg_A(b)}b\Bc(a\zeta) -(-1)^{\deg_A(a)}a\Bc(b\zeta)\\
&\qquad\qquad\qquad\qquad -(-1)^{(\deg_A(a)+1)(\deg_A(b)+1)} ab\Bc(\zeta)\\
&\qquad\qquad=
\left(\Bc\circ\iota_a\circ\iota_b  
-(-1)^{(\deg_A(a)+1)\deg_A(b)}\iota_b \circ\Bc\circ\iota_a 
\right.\\
&\qquad\qquad\qquad\qquad \left.
-
(-1)^{\deg_A(a)}\iota_a\circ\Bc\circ\iota_b -(-1)^{(\deg_A(a)+1)(\deg_A(b)+1)} \iota_a\circ\iota_b\circ\Bc
\right)(\zeta)\\
&\qquad\qquad=
-(-1)^{\deg_A(a)}[[\iota_a,\Bc],\iota_b](\zeta)\\
&\qquad\qquad=
-(-1)^{\deg_A(a)}\{a,b\}\zeta.
\end{align*}
\end{proof}
{
Let $(A,\Delta)$ be a $BV_0$-algebra. We can look at the operator $\Delta\colon A\to A[1]$ as a linear operator $\Delta\colon\mathfrak{g}_A\to \mathfrak{g}_A[1]$. Then we have the following.
\begin{lemma}\label{lem:dgal-from-Delta}
Let $(A,\Delta)$ be a $BV_0$-algebra, and let $(\Delta,\mathcal{L},\{\,,\})$ be the triple associated with $\Delta$. Then $(\mathfrak{g}_A,\Delta,\{\,,\,\})$ is a differential graded Lie algebra.
\end{lemma}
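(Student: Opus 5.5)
We already know from Lemma \ref{lem:Lie-algebra}, applied to the $BV_0$-operator $\Delta$ on the free rank one module $A$ (Remark \ref{rem:bv-to-p0}), that $(\mathfrak{g}_A,\{\,,\,\})$ is a graded Lie algebra. We also know $\Delta^2=0$, and that $\Delta$ has degree $1$ as an operator on $\mathfrak{g}_A=A[-1]$. So the only thing left to prove is that $\Delta$ is a graded derivation of the bracket:
\[
\Delta\{a,b\}=\{\Delta a,b\}+(-1)^{\deg_{\mathfrak{g}_A}(a)}\{a,\Delta b\}.
\]

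I would derive this in the Cartan calculus rather than by expanding $\Delta(abc)$. The first step is to compute $\iota_{\Delta a}$. Since $\Delta(1)=0$, we have
\[
\mathcal{L}_a(1)=[\iota_a,\Delta](1)=a\Delta(1)-(-1)^{\deg_A(a)}\Delta(a)=-(-1)^{\deg_A(a)}\Delta(a).
\]
Hence $\iota_{\Delta a}$ is, up to the sign $-(-1)^{\deg_A(a)}$, multiplication by $\mathcal{L}_a(1)$. The cleaner identity to aim for is $[\Delta,\iota_a]=\pm\iota_{\Delta a}+\pm\mathcal{L}_a$. More directly, I would evaluate $[\Delta,\mathcal{L}_a]=0$ (Lemma \ref{lem:BcL}) together with the defining relation $[\mathcal{L}_a,\iota_b]=\iota_{\{a,b\}}$.

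The main computation goes as follows. Apply $[\Delta,-]$ to $[\mathcal{L}_a,\iota_b]=\iota_{\{a,b\}}$ and use the graded Jacobi identity and Lemma \ref{lem:BcL}:
\[
[\Delta,\iota_{\{a,b\}}]=\pm[\mathcal{L}_a,[\Delta,\iota_b]].
\]
Now $[\Delta,\iota_c]=\pm\mathcal{L}_c$ for every $c$. Then evaluate both sides at $1$, using $\Delta(1)=0$ and $\mathcal{L}_c(1)=\pm\Delta(c)$. The left side gives $\pm\Delta\{a,b\}$. The right side gives $\pm\mathcal{L}_a(\Delta b)\pm\mathcal{L}_b$-type terms. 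These are then rewritten via $\mathcal{L}_a(x)=[\iota_a,\Delta](x)$ and the BV identity of Lemma \ref{lem:bv-bracket} (\eqref{eq:Delta-and-bracket}), which expresses $\{a,x\}$ through $\Delta(ax)$, $\Delta(a)x$ and $a\Delta(x)$.

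Alternatively, and perhaps more robustly, I would use \eqref{eq:Delta-and-bracket} to write
\[
\{a,b\}=\pm\bigl(\Delta(ab)-\Delta(a)b-(-1)^{|a|}a\Delta(b)\bigr),
\]
and apply $\Delta$. The term $\Delta\Delta(ab)$ vanishes. The terms $\Delta(\Delta(a)b)$ and $\Delta(a\Delta(b))$ are re-expanded using \eqref{eq:Delta-and-bracket} again. The terms $\Delta\Delta a\cdot b$ and $a\cdot\Delta\Delta b$ vanish, the cross terms $\Delta a\,\Delta b$ cancel in pairs, and what remains is exactly $\pm\{\Delta a,b\}\pm\{a,\Delta b\}$. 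This is the classical argument, and I would present it this way, since it only uses Lemma \ref{lem:bv-bracket}, $\Delta^2=0$ and $\Delta(1)=0$.

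The main obstacle is purely sign bookkeeping. The signs must be converted between $\deg_A$ and $\deg_{\mathfrak{g}_A}=\deg_A+1$, so that the result comes out as the derivation rule with the sign $(-1)^{\deg_{\mathfrak{g}_A}(a)}$ appropriate to a degree $1$ differential on $\mathfrak{g}_A$. In particular, one has to check that the two cross terms $\Delta a\,\Delta b$ really cancel rather than add. I would first verify the computation with all elements even in $A$, and then track the Koszul signs in general.
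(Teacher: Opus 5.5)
Your proposal is correct, and the route you say you would actually present is exactly the paper's proof. You apply $\Delta$ to \eqref{eq:Delta-and-bracket}, re-expand $\Delta(\Delta(a)b)$ and $\Delta(a\Delta(b))$ with the same identity, and use $\Delta^2=0$. The two $\Delta(a)\Delta(b)$ cross terms (with signs $(-1)^{\deg_A(a)+1}$ and $(-1)^{\deg_A(a)}$) then cancel, leaving $\Delta\{a,b\}=\{\Delta(a),b\}+(-1)^{\deg_{\mathfrak{g}_A}(a)}\{a,\Delta(b)\}$.
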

\begin{proof}
We already know that $(\mathfrak{g}_A,\{\,,\,\})$ is a graded Lie algebra and that $\Delta$ is a differential, so it remains only to prove that $\Delta$ is a derivation of the bracket, i.e., that we have
\[
\Delta\{a,b\}=\{\Delta(a),b\}+(-1)^{\deg_{\mathfrak{g}_A}(a)}\{a,\Delta(b)\}.
\]
By applying $\Delta$ to \eqref{eq:Delta-and-bracket} we obtain
\[
0=\Delta^2(ab)
=\Delta\bigl(\Delta(a)b\bigr)
+(-1)^{\deg_A(a)}\Delta\bigl(a\Delta(b)\bigr)
-(-1)^{\deg_A(a)}\Delta\{a,b\}.
\]
Now we apply \eqref{eq:Delta-and-bracket} again to expand the first and the second term on the right hand side. Since $\deg_A(\Delta(a))=\deg_A(a)+1$, we find
\begin{align*}
\Delta\bigl(\Delta(a)b\bigr)
&=
\Delta^2(a)b
+(-1)^{\deg_A(a)+1}\Delta(a)\Delta(b)
-(-1)^{\deg_A(a)+1}\{\Delta(a),b\}\\
&=
(-1)^{\deg_A(a)+1}\Delta(a)\Delta(b)
+(-1)^{\deg_A(a)}\{\Delta(a),b\}
\end{align*}
and
\begin{align*}
(-1)^{\deg_A(a)}\Delta\bigl(a\Delta(b)\bigr)
&=
(-1)^{\deg_A(a)}
\left(
\Delta(a)\Delta(b)+(-1)^{\deg_A(a)}a\Delta^2(b)
-(-1)^{\deg_A(a)}\{a,\Delta(b)\}
\right)\\
&=
(-1)^{\deg_A(a)}\Delta(a)\Delta(b)
-\{a,\Delta(b)\}.
\end{align*}
Therefore,
\[
0=(-1)^{\deg_A(a)}\{\Delta(a),b\}-\{a,\Delta(b)\}-(-1)^{\deg_A(a)}\Delta\{a,b\}
\]
i.e.,
\[
\Delta\{a,b\}=\{\Delta(a),b\}+(-1)^{\deg_A(a)+1}\{a,\Delta(b)\}=\{\Delta(a),b\}+(-1)^{\deg_{\mathfrak{g}_A}(a)}\{a,\Delta(b)\}.
\]
\end{proof}
\begin{remark}
    Let $(A,\Delta)$ be a $BV_0$-algebra, and let $(\Delta,\mathcal{L},\{\,,\})$ be the triple associated with $\Delta$. Then, in the notation of Remark \ref{rem:derived-brackets-cartan}, the binary bracket $\{\,,\}_2=\{\,,\,\}$  is defined by the equation $\{a,b\}_2=\mathcal{L}^{(2)}_{a,b}(1)$
    i.e., 
    \[
    \{a,b\}_2=[[\iota_a,\Delta],\iota_b](1).
    \]
    Up to the sign $(-1)^{\deg_{\mathfrak{g}_A}(a)}$, this is the same as $\lambda^{(2)}_{a,b}(1)$.
    By the same construction we can define a unary bracket $\{\,\}_1$ by the equation $\{a\}_1=\lambda^{(1)}_a(1)$, i.e.,
    \[
    \{a\}_1=[\Delta,\iota](1)
    \]
    and a $0$-ary bracket $\{\,\}_1$ by the equation $\{\,\}_0=\lambda^{(0)}(1)$, i.e.
    \[
    \{\,\}_0=\Delta(1).
    \]
    From the last equation we immediately see that $\{\,\}_0=0$. Concerning the unary bracket $\{\,\}_1$, we have
    \[
    \{a\}_1=\Delta(\iota_a(1))-(1)^{\deg_A(a)}\iota_a(\Delta(1))=\Delta(a),
    \]
    i.e., the unary bracket $\{\,\}$ coincides with the $BV_0$-operator $\Delta$. From Lemma \ref{lem:dgal-from-Delta} we then see that $(\mathfrak{g}_A,\{\,\}_1, \{\,,\,\}_2)$ is a differential graded Lie algebra. This can be seen as an $L_\infty$ algebra with nontrivial brackets only in arity $\leq 2$. One may notice that $\leq 2$ is also the order of the Cartan differential operator $\Delta$. This is no coincidence: more generally, if the order of a degree 1 Cartan differential operator $\Delta\colon A\to A[1]$ with $\Delta^2=0$ is $\leq k$ instead of being $\le 2$, the derived bracket construction
    \[
    \{a_1,\dots,a_k\}_k=\pm\lambda^{(k)}_{a_1,\dots,a_k}(1),
    \]
    with the appropriate choice of signs, produces an $L_\infty$-algebra with nontrivial brackets only in arity $\leq k$. This construction is originally due to Kosmann-Schwarzbach \cite{Kosmann-Schwarzbach-Poisson2Gerstenhaber} and Kravchenko \cite{kravchenko1999}, and then generalized by Voronov \cite{VORONOV2005133} and Bandiera \cite{Bandiera_2015}. In particular, the description of derived brackets given here is taken from \cite[Example 4.4.]{VORONOV2005133}. The terminology homotopy $BV_0$ algebras (and so  \emph{homotopy Batalin-Vilkovisky algebras} when the opposite grading is adopted) to denote the $L_\infty$-algebras obtained this way is proposed in \cite{kravchenko1999}.
\end{remark}

}

\section{Batalin--Vilkovisky algebras from Lie algebroids}
A classical source of examples of $BV_0$ or Batalin-Vilkovisky algebras is provided by Lie algebroids endowed with volume forms. As particular cases, one has the Batalin-Vilkovisky algebras given by divergence operators on smooth manifolds, by holomorphic divergence operators on Calabi-Yau manifolds, and by Chevalley-Eilenberg homological differential in the Chevalley-Eilenberg chain complex of a Lie algebra. We present these three examples first, before giving the general construction for Lie algebroids. Even more generally, as shown by Ping Xu in \cite{Ping}, the construction can be carried to Lie algebroids endowed with a flat connection on their determinant bundle\footnote{Such a flat connection always exists.}.

\begin{example}[Divergence in manifolds with volume forms]
   A notable $P_0$ algebra is given by multivector fields, i.e., sections of the exterior algebra of the tangent bundle, on a smooth manifold: 
   \[
   (\mathfrak{X}^\bullet (\mathcal{M}), \wedge, [\,,\,]_{SN}), 
   \]
   where $[\,,\,]_{SN}$ denotes the Schouten--Nijenhuis bracket and the grading is given by
   \[
   \mathfrak{X}^j(\mathcal{M})=\Gamma(\mathcal{M};\wedge^{-j}T\mathcal{M}).
   \]
   The graded vector space $\mathcal{A}^\bullet(\mathcal{M})[-\dim \mathcal{M}]$ of differential forms on $\mathcal{M}$, shifted in degree by the dimension of $\mathcal{M}$ is a free rank 1 $\mathfrak{X}^\bullet (\mathcal{M})$-module with the action given by contraction. A basis element for this module is given by a volume form $\nu$. Classical Cartan formulas tell us that the de Rham operator $d\colon \mathcal{A}^\bullet(\mathcal{M})[-\dim \mathcal{M}]\to \mathcal{A}^\bullet(\mathcal{M})[1-\dim \mathcal{M}]$ is a differential operator of order at most 2 for this module. Since clearly $d$ has degree 1 and $d^2=0$, we have that $d$ is a $BV_0$-operator on $\mathcal{A}^\bullet(\mathcal{M})[-\dim \mathcal{M}]$. Since a volume form $\nu$ is a top degree form, we have $d\nu=0$, hence the triple $(\Omega^\bullet(\mathcal{M})[-\dim \mathcal{M}],d,\nu)$ is a $BV_0$-module for $(\mathfrak{X}^\bullet (\mathcal{M}), \wedge, [\,,\,]_{SN})$ and by Proposition \ref{prop:bv0-module} we have an induced $BV_0$-algebra structure on $\mathfrak{X}^\bullet (\mathcal{M})$ enhancing the $P_0$-algebra structure. One easily checks that the induced $BV_0$-operator $\Delta_\nu$ on multivector fields is the $\nu$-divergence operator $\mathrm{div}_\nu$. With the opposite grading we obtain the $\nu$-divergence Batalin-Vilkovisky algebra structure on multivector fields.
    \label{ex.real}
\end{example}
\begin{remark}
Example \ref{ex.real} is illustrative of the fact that, given a linear operator $\Bc\colon M\to M$, its order as a differential operator depends on the $A$-module structure one considers on $M$. For instance, if we look at the graded vector space $\mathcal{A}^\bullet(\mathcal{M})[-\dim \mathcal{M}]$ as a $\mathbb{R}$-module, then the de Rham differential has order $\leq 0$. It has order $\leq 1$ if we look at  $\mathcal{A}^\bullet(\mathcal{M})[-\dim \mathcal{M}]$ as a ${C}^\infty(\mathcal{M})$-module, and it has order $\leq 2$ if we look at it as a $\mathfrak{X}^\bullet (\mathcal{M})$-module.
\end{remark}

\begin{remark} Via the isomorphism $\varphi_\nu$ of $\mathfrak{X}^\bullet (\mathcal{M})$-modules between $\mathfrak{X}^\bullet (\mathcal{M})$ and $\mathcal{A}^\bullet(\mathcal{M})[-\dim \mathcal{M}]$ induced by a volume form $\nu$, linear functionals on the space of differential forms become linear functionals on the space of multivector fields. If we denote by $\Phi_{\mathcal{N};\nu}\colon \mathfrak{X}^\bullet(\mathcal{M})\to \mathbb{R}[-\dim \mathcal{N}]$ the linear functional corresponding to integration over a closed oriented submanifold $\mathcal{N}\hookrightarrow\mathcal{M}$, Stokes' theorem implies that, whenever a multivector field $X$ is $\nu$-divergence free, i.e., $\mathrm{div}_\nu(X)=0$, the value $\Phi_{\mathcal{N};\nu}(X)$ depends only on the homology class of $\mathcal{N}$. Moreover, if $X$ is $\mathrm{div}_\nu$-exact, then $\Phi_{\mathcal{N};\nu}(X)=0$. As a particular case, one can consider the volume form $\mathrm{vol}_g$ induced by a Riemannian metric $g$ on $\mathcal{M}$, and take $\mathcal{N}$ of codimension 1. In this case, the resulting functional on vector fields is the flux through the hypersurface $\mathcal{N}$, and one recovers Gau\ss' law as Stokes' theorem read through the isomorphism $\varphi_\nu$.
\label{ex:complex}
\end{remark}

\begin{example}[Holomorphic divergence on Calabi-Yau manifolds]
By taking $\mathcal{M}$ to be a complex manifold and considering the $P_0$-algebra of holomorphic multivector fields
\[(\mathfrak{X}_{\text{hol}}^\bullet(\mathcal{M}), \wedge, [\,,\,]_{SN}), 
   \]
one obtains a holomorphic version of Example \ref{ex.real}. The relevant $\mathfrak{X}_{\text{hol}}^\bullet(\mathcal{M})$-module will be the module $\Omega^\bullet(\mathcal{M})[-\dim_{\mathbb{C}}\mathcal{M}]$ of holomorphic differential forms on $\mathcal{M}$ (with degree shifted by the complex dimension of $\mathcal{M}$), and the $BV_0$-operator will be the $\partial$ operator. 
Notice however, that in this case the $\mathfrak{X}_{\text{hol}}^\bullet(\mathcal{M})$-module $\Omega^\bullet(\mathcal{M})[-\dim_{\mathbb{C}}\mathcal{M}]$ will generally not be free of rank 1. In order for this to happen, indeed, one needs $\mathcal{M}$ to admit a holomorphic volume form $\nu$, i.e., to be a Calabi-Yau manifold. When this happens, $\partial$ operator is translated to the holomorphic divergence operator $\Delta_\nu$ and one obtains the  Batalin-Vilkovisky algebra of holomorphic multivector fields on Calabi-Yau manifolds \cite{Tian,Barannikov_1998}.
\end{example}

\begin{example}[Chevalley--Eilenberg homological differential]\footnote{We would like to thank Ján Pulmann for leading us here.}
Let $\mathfrak{g}$ be a finite dimensional Lie algebra, and let $\mathrm{CE}^\bullet_{\mathrm{Ho}}(\mathfrak{g})$ the Chevalley-Eilenberg chain complex of $\mathfrak{g}$, defined by 
\[
\mathrm{CE}^j_{\mathrm{Ho}}(\mathfrak{g})=\wedge^{-j}\mathfrak{g}.
\] 
Endowing $\mathrm{CE}^k_{\mathrm{Ho}}(\mathfrak{g})$ with the product given by wedge product and with 
the Schouten--Nijenhuis-type bracket obtained by extending the Lie bracket of $\mathfrak{g}$ by the graded Leibniz rule, one obtains a $P_0$ algebra. A natural module for this Poisson algebra is given by the graded vector space underlying the shifted Chevalley-Eilenberg cochain complex $\mathrm{CE}^\bullet_{\mathrm{Coh}}(\mathfrak{g})[-\dim \mathfrak{g}]$, where
\[
\mathrm{CE}^j_{\mathrm{Coh}}(\mathfrak{g},k)=\mathrm{Hom}_k(\mathrm{CE}^j_{\mathrm{Ho}}(\mathfrak{g}),k),
\]
and where the module structure is given by contraction. The 
Chevalley-Eilenberg cochain complex differential has the form
\[
(d_{\mathrm{CE}}\eta)(\gamma_1\wedge\cdots\wedge\gamma_{k+1})=
\sum_{i<j}\pm \eta([\gamma_i,\gamma_j]^{}_{\mathfrak g}\wedge
\gamma_1\wedge\cdots
\wedge\widehat{\gamma_i}\wedge\cdots\wedge\widehat{\gamma_j}\wedge
\cdots\wedge\gamma_{k+1}^{})
\]
and the same computation leading to classical Cartan identities in Example \ref{ex.real} shows that $d_{\mathrm{CE}}$ is a second order operator, and so a $BV_0$-operator, for the $\mathrm{CE}^\bullet_{\mathrm{Ho}}(\mathfrak{g})$-module structure on $\mathrm{CE}^\bullet_{\mathrm{Coh}}(\mathfrak{g},k)[-\dim \mathfrak{g}]$. This module is free of rank 1, with a basis element provided by a nonzero element $\nu\in \mathrm{Hom}_k(\wedge^{\dim\mathfrak{g}}\mathfrak{g},k)$. Such a $\nu$ is automatically $d_{\mathrm{CE}}$-closed, so the choice of $\nu$ translates $d_{\mathrm{CE}}$ to a $BV_0$-operator $\Delta_\nu$ on $\mathrm{CE}^\bullet_{\mathrm{Ho}}(\mathfrak{g})$ enhancing its $P_0$-structure to a $BV_0$-structure. Since $\dim_{k}\mathrm{Hom}_{k}(\wedge^{\dim\mathfrak{g}}\mathfrak{g},k)=1$, the element $\nu$ is uniquely determined up to a nonzero scalar factor. Hence, as remarked in Example \ref{example:for-bv}, $\Delta_\nu$ is actually independent of the choice of $\nu$ and so it defines a canonical $BV_0$-operator $\Delta$ on  $\mathrm{CE}^\bullet_{\mathrm{Ho}}(\mathfrak{g})$. The operator $\Delta$ is in particular a canonical differential and, surprise, it is the homological Chevalley--Eilenberg differential 
\[
\delta_{\mathrm{CE}}(\gamma_1\wedge\cdots\wedge\gamma_{k+1})=\sum_{i<j}\pm [\gamma_i,\gamma_j]^{}_{\mathfrak g}\wedge
\gamma_1\wedge\cdots
\wedge\widehat{\gamma_i}\wedge\cdots\wedge\widehat{\gamma_j}\wedge
\cdots\wedge\gamma_{k+1}.
\]
This way one obtains the canonical $BV_0$-structure $(\mathrm{CE}^\bullet_{\mathrm{Ho}}(\mathfrak{g}),\wedge, \delta_{\mathrm{CE}})$ on the  Chevalley--Eilenberg chain complex of a Lie algebra, see \cite{Gwilliam}.
\label{CE-example}
\end{example}

\begin{example}[Lie algebroids with volume forms]\footnote{We thank warmly Miquel Cueca for pointing this to us.} The three examples described so fare are particular instances of a general construction involving a Lie algebroid endowed with a volume form.
To fix notation, let $(\pi, a, [\,,\,]_{\mathcal{A}})$ be a Lie algebroid, where $\pi\colon A\to \mathcal{M}$ is a finite rank vector bundle over a smooth (or holomorphic) manifold $\mathcal{M}$, $a\colon \mathcal{A}\to T\mathcal{M}$ is the anchor map and $[\,,\,]_\mathcal{A}\colon \Gamma(\mathcal{M},\mathcal{A})\otimes \Gamma(\mathcal{M},\mathcal{A})\to \Gamma(\mathcal{M},\mathcal{A})$ is the Lie bracket of the algebroid. The exterior algebra $\mathfrak{A}^\bullet$, with $\mathfrak{A}^j=\wedge^{-j}\mathcal{A}$ endowed with the wedge product and the Schouten--Nijenhuis-type bracket obtained by extending the Lie bracket of $\mathcal{A}$ by the graded Leibniz rule, is a $P_0$-algebra. A natural module for this algebra, with module action given by contraction, is given by $\mathcal{A}$-forms, with degree shifted by the rank of $\mathcal{A}$, i.e., by $\Omega_\mathcal{A}^\bullet[-\mathrm{rk} \mathcal{A}]$, where $\Omega_\mathcal{A}^j=\Hom(\wedge^{-j}\mathcal{A},k_X)$, where $k_{\mathcal{M}}$ is the trivial rank 1 bundle over $\mathcal{M}$ and $k=\mathbb{R}$ or $\mathbb{C}$, depending whether we are in the smooth or in the holomorphic case. Here $\mathrm{Hom}$
denotes homomorphism of smooth, resp. holomorphic, vector bundles. If $A$ admits a volume form, i.e., a nowhere zero element $\nu\in \Omega_\mathcal{A}^{\mathrm{rk} \mathcal{A}}$, then $\Omega_\mathcal{A}^\bullet[-\mathrm{rk} \mathcal{A}]$ is a free rank 1 $\mathfrak{A}^\bullet$-module, with basis element provided by $\nu$. The graded vector space $\Omega^\bullet_\mathcal{A}$ carries a natural graded commutative algebra structure, with product given by wedge product and differential given by the Cartan-type formula
\[
d_\mathcal{A}\omega(\alpha_1,\dots, \alpha_{k+1}) =
\sum_{i=1}^{k+1}(-1)^{i+1}a(\alpha_i)(\omega(\alpha_1,\cdots \widehat{\alpha}_i,\dots,\alpha_{k+1})+\sum_{i<j}
(-1)^{i+j} \omega ([\alpha_i, \alpha_j], \alpha_1,\cdots \widehat{\alpha}_i,\cdots \widehat{\alpha}_j,\dots,\alpha_{k+1}).
\]
The cohomology of $(\Omega_\mathcal{A}^\bullet,d_\mathcal{A})$ is called the Lie algebroid cohomology of $\mathcal{A}$, see \cite{Mackenzie}.
One then argues as in the previous examples, enhancing the $P_0$-algebra structure on $\mathfrak{A}^\bullet$ to a $BV_0$-algebra structure.
\end{example}
\begin{example}[Xu modules]
    The existence of a volume element $\nu$ for the algebroid $\mathcal{A}$ is equivalent to the triviality of the determinant line bundle $\wedge^{\mathrm{rk}\mathcal{A}}\mathcal{A}$ on $\mathcal{M}$
and so it is quite a restrictive condition. On the other hand, such a trivialization always exists locally, and so $\Omega_\mathcal{A}^\bullet[-\mathrm{rk} \mathcal{A}]$ is always a locally free rank 1 $\mathfrak{A}^\bullet$-module. In view of Remark \ref{rem:locally-free}, it is natural to wonder whether this is not sufficient to induce a $BV_0$ operator on $\mathfrak{A}^\bullet$. A positive answer to this question has been given by Ping Xu in \cite{Ping}. Xu's observation is that the datum of a flat $\mathcal{A}$-connection on the determinant line bundle $\wedge^{\mathrm{rk}\mathcal{A}}\mathcal{A}$ allows one to modify the definition of the $BV_0$ operator $\Delta_{\nu}$ in such a way that it becomes independent of the choice of $\nu$. These modified $BV_0$ operators can then be defined locally, thanks to the existence of local volume forms, and will define a globally defined operator since they will coincide on chart overlaps due to their independence on the choice of volume forms. This global $BV_0$ operator depends on the choice of a flat connection on $\wedge^{\mathrm{rk}\mathcal{A}}\mathcal{A}$. Remarkably, these flat connections always exist. Also, Xu's $BV_0$ operator is always an enhancement of the $P_0$-algebra structure on $\mathfrak{A}^0$, and Xu's construction establishes a natural bijection
\[
\{\text{flat $\mathcal{A}$-connections on $\wedge^{\mathrm{rk}\mathcal{A}}\mathcal{A}$}\} \leftrightarrow 
\{\text{$BV_0$-enhancements of the $P_0$-algebra structure on $\mathfrak{A}^\bullet$}\}
\]
The volume form case is a particular instance of this construction: a trivialization of $\wedge^{\mathrm{rk}\mathcal{A}}\mathcal{A}$ canonically endows the determinant line bundle with a flat connection: the trivial connection.
\par
In the spirit of Section \ref{sec:cartan} it is natural to extend constructions presented there from free rank 1 $A$-modules to \emph{Xu modules}: locally free rank 1 $A$-modules endowed with the algebraic counterpart of a flat $\mathcal{A}$-connection on the determinant bundle of a Lie algebroid. The detailed description of these Xu modules will be given elsewhere.
\end{example}

\section{Digression: the homotopy triviality of the dgla morphism $\mathcal{L}$}
Since $\mathcal{L}_A=[\iota_a,\Bc]$, the dgla morphism $\mathcal{L}$ from Proposition \ref{prop:lie-algebra-morphism} is homotopically trivial as a morphism of cochain complexes, and $\iota\colon \mathfrak{g}_A\to \mathrm{End}_k(M)[-1]$ is a homotopy between $\mathcal{L}$ and $0$. But actually much more is true: the morphism $\mathcal{L}$ is homotopic to zero \emph{as a morphism of differential graded Lie algebras}. To see why this is true, let us recall a few basics from the theory of homotopy equivalences between morphisms of dglas. We follow the exposition in \cite{fiorenza-martinengo}

Let $\mathfrak{g}$ and $\mathfrak{h}$ be two differential graded Lie algebras. We can look at them as $L_\infty$-algebras, and so consider $L_\infty$-algebra morphisms between them. Remarkably, these are the Maurer-Cartan elements of a suitable dgla $\underline{\Hom}({\mathfrak g},{\mathfrak h})$ given by the 
Chevalley-Eilenberg complex of $\mathfrak{g}$ with coefficients in $\mathfrak{h}$ seen as a trivial $\mathfrak{g}$-module, equipped with a natural bracket $[-,-]^{}_{\underline{\Hom}}$ derived by the Lie bracket of $\mathfrak{h}$. More precisely, $\underline{\Hom}({\mathfrak g},{\mathfrak h})$ is the total dgla of the bigraded dgla
\[
\underline{\Hom}^{p,q}({\mathfrak g},{\mathfrak h})=\Hom(\wedge^q{\mathfrak g},{\mathfrak h}[p])=\Hom^p(\wedge^q{\mathfrak g},{\mathfrak h}),
\]
with $p\in \mathbb{Z}$ and $q\geq 1$, 
endowed with the Lie bracket
\[
[\,,\,]_{\underline{\Hom}}\colon \underline{\Hom}^{p_1,q_1}({\mathfrak g},{\mathfrak h})\otimes \underline{\Hom}^{p_2,q_2}({\mathfrak g},{\mathfrak h})\to \underline{\Hom}^{p_1+p_2,q_1+q_2}({\mathfrak g},{\mathfrak h})
\]
defined by
\begin{align*}
[f,g]^{}_{\underline{\Hom}}&(\gamma_1^{}\wedge\cdots\wedge\gamma_{q_1+q_2}^{})=\\
&\hskip-1em=\sum_{\sigma\in{\rm Sh}(q_1,q_2)}\pm[f(
\gamma_{\sigma(1)}\wedge\cdots \wedge\gamma_{\sigma(q_1)}),
g(
\gamma_{\sigma(q_1+1)}\wedge\cdots \wedge\gamma_{\sigma(q_1+q_2)})]^{}_{\mathfrak h},
\end{align*}
with $\sigma$ ranging in the set of $(q_1,q_2)$-unshuffles and $\pm$ standing for the Koszul sign, 
and with the differentials
\[
d_{1,0}^{}\colon
\underline{\Hom}^{p,q}({\mathfrak g},{\mathfrak h})\to \underline{\Hom}^{p+1,q}({\mathfrak g},{\mathfrak h})\]
and
\[
d_{0,1}^{}\colon
\underline{\Hom}^{p,q}({\mathfrak g},{\mathfrak h})\to \underline{\Hom}^{p,q+1}({\mathfrak g},{\mathfrak h})\]
given by
\[
(d_{1,0}^{}{f})(\gamma_1^{}\wedge\cdots\wedge\gamma_q^{})=\pm d_{\mathfrak h}(f(\gamma_1^{}\wedge\cdots\wedge\gamma_q^{}))+\sum_i
\pm f(\gamma_1\wedge\cdots
\wedge d_{\mathfrak g}\gamma_i\wedge
\cdots\wedge\gamma_{q}^{})
\]
and 
\[
(d_{0,1}^{}f)(\gamma_1\wedge\cdots\wedge\gamma_{q+1})=
\sum_{i<j}\pm f([\gamma_i,\gamma_j]^{}_{\mathfrak g}\wedge
\gamma_1\wedge\cdots
\wedge\widehat{\gamma_i}\wedge\cdots\wedge\widehat{\gamma_j}\wedge
\cdots\wedge\gamma_{q+1}^{}).
\]
An explicit determination for the signs in the above formulas can be found, e.g., in \cite{fiorenza-miti}.

One sees that the Maurer-Cartan elements of $\underline{\Hom}({\mathfrak g},{\mathfrak h})$ are those degree 1 elements $f$ in $\underline{\Hom}({\mathfrak g},{\mathfrak h})$, i.e., elements of the form
$f=\sum_{n\geq 1}f_n$ with $f_n\colon \wedge^n{\mathfrak g}\to{\mathfrak h}[1-n]$, such that
\begin{align*}
&\pm d_{\mathfrak h}f_n(\gamma_1^{}\wedge\cdots\wedge\gamma_n^{})+\frac{1}{2}\!\!\!\sum_{\substack{q_1+q_2=n\\ \sigma\in{\rm Sh}(q_1,q_2)}}\!\!\!\!\!\!\pm[f_{q_1}(
\gamma_{\sigma(1)}\wedge\cdots \wedge\gamma_{\sigma(q_1)}),
f_{q_2}(
\gamma_{\sigma(q_1+1)}\wedge\cdots \wedge\gamma_{\sigma(q_1+q_2)})]^{}_{\mathfrak h}\\
&=\sum_i
\pm f_n(\gamma_1\wedge\cdots
\wedge d_{\mathfrak g}\gamma_i\wedge
\cdots\wedge\gamma_{n}^{})\\
&\qquad\qquad+
\sum_{i<j}\pm f_{n-1}([\gamma_i,\gamma_j]^{}_{\mathfrak g}\wedge
\gamma_1\wedge\cdots
\wedge\widehat{\gamma_i}\wedge\cdots\wedge\widehat{\gamma_j}\wedge
\cdots\wedge\gamma_{n}^{}).
\end{align*}

The above is the Maurer--Cartan equation unpacked, so that one recognizes the explicit equation characterizing $L_\infty$-morphisms between $\mathfrak{g}$ and $\mathfrak{h}$ one finds  in, e.g., \cite{lada-stasheff,kontsevich}.

It is easy to exhibit at least one $L_\infty$-morphism between $\mathfrak{g}$ and $\mathfrak{h}$: the zero morphism $0$. Using the gauge action of the gauge group $\mathrm{exp}(\underline{\Hom}^0({\mathfrak g},{\mathfrak h}))$ on the set $\mathrm{MC}({\mathfrak g},{\mathfrak h})$ of Maurer-Cartan elements we obtain from $0$ the whole set of $L_\infty$-morphism from $\mathfrak{g}$ to $\mathfrak{h}$ that are gauge equivalent (or equivalently, homotopy equivalent) to the zero morphism. These are the null-homotopic $L_\infty$-morphisms between $\mathfrak{g}$ and $\mathfrak{h}$. 

It is convenient to recall the explicit form of the gauge group action on Maurer-Cartan elements:
\[
e^{-\mathcal{I}_\infty}*f=f+\sum_{n=0}^\infty \frac{(-{\rm ad}_{\mathcal{I}_\infty})^n}{(n+1)!}\ (-[\mathcal{I}_\infty,f]^{}_{\underline{\Hom}}+d^{}_{\underline{\Hom}}\mathcal{I}_\infty),
\]
where $\mathcal{I}_\infty$ is a generic degree zero element in $\underline{\Hom}({\mathfrak g},{\mathfrak h})$, i.e.,
$
\mathcal{I}_\infty=\sum_{n\geq 1}\mathcal{I}_n
$,
with $\mathcal{I}_n\colon \wedge^n\mathfrak{g}\to \mathfrak{h}[-n]$, and $d^{}_{\underline{\Hom}}= d_{0,1}+d_{1,0}$. When $f$ is the null morphism, this defines the $L_\infty$-morphism 
\[
{\mathcal{L}}_\infty=e^{-\mathcal{I}_\infty}*0=\sum_{n=0}^\infty \frac{(-{\rm ad}_{\mathcal{I}_\infty})^n}{(n+1)!}\ (d^{}_{\underline{\Hom}}\mathcal{I}_\infty),
\]
that is null-homotopic by construction, with homotopy given by $\mathcal{I}_\infty$. We can now prove the following improvement of Proposition \ref{prop:lie-algebra-morphism}.
\begin{proposition}
    Let  $(\Bc,\mathcal{L},\{-,-\})$ be a Cartan differential operator of order at most 2 and degree $1$ on $M$. Then 
    \[
    \mathcal{L}\colon (\mathfrak{g}_A,0,\{\,,\,\})\to (\mathrm{End}_k(M),[\Bc,\,],[\,,\,])
    \]
    is a null-homotopic morphism of differential graded Lie algebras, with null-homotopy given by $\iota\colon \mathfrak{g}_A\to \mathrm{End}_k(M)[-1]$.
\end{proposition}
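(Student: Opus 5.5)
The plan is to use the explicit gauge-action formula recalled just above, with the null-homotopy concentrated in arity one. Take $\mathcal{I}_\infty=\mathcal{I}_1=\iota$, viewed as an element of $\underline{\Hom}^{-1,1}(\mathfrak{g}_A,\mathrm{End}_k(M))$. It has total degree zero, because $\iota\colon\mathfrak{g}_A\to\mathrm{End}_k(M)[-1]$. We then compute $\mathcal{L}_\infty=e^{-\iota}*0=\sum_{n\geq0}\frac{(-\mathrm{ad}_\iota)^n}{(n+1)!}(d_{\underline{\Hom}}\iota)$ and aim to show $\mathcal{L}_\infty=\mathcal{L}$, with all components of arity $\geq2$ vanishing. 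Since $\mathcal{L}_\infty$ is null-homotopic by construction, with homotopy $\iota$, this proves the statement.

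\textbf{Step 1: the differential of $\iota$.} We have $d_{\underline{\Hom}}\iota=d_{1,0}\iota+d_{0,1}\iota$. The differential of $\mathfrak{g}_A$ is zero, so $d_{1,0}\iota$ reduces to $a\mapsto\pm[\Bc,\iota_a]$, which is $\mathcal{L}_a$ up to a sign fixed by the conventions of \cite{fiorenza-miti}. The other term $d_{0,1}\iota$ lives in arity $2$ and is $a\wedge b\mapsto\pm\iota_{\{a,b\}}$. Hence $d_{\underline{\Hom}}\iota=\mathcal{L}+\theta$, where $\theta(a\wedge b)=\pm\iota_{\{a,b\}}$.

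\textbf{Step 2: the higher terms.} The next term is $-\frac12[\iota,\mathcal{L}+\theta]_{\underline{\Hom}}$.
\begin{itemize}
\item The piece $[\iota,\theta]$ has arity $3$ and is built from brackets $[\iota_a,\iota_{\{b,c\}}]$, which vanish by graded commutativity of $A$ (the third Cartan identity).
\item The piece $[\iota,\mathcal{L}]$ has arity $2$ and equals $\pm[\iota_a,\mathcal{L}_b]\pm[\iota_b,\mathcal{L}_a]$. Each summand is $\pm\iota_{\{b,a\}}$, respectively $\pm\iota_{\{a,b\}}$, by the defining relation $[\mathcal{L}_a,\iota_b]=\iota_{\{a,b\}}$ of Example~\ref{ex:order2}.
\end{itemize}
Using the antisymmetry of $\{-,-\}$, the shuffle sum gives $[\iota,\mathcal{L}]=2\theta'$ with $\theta'=\pm\theta$. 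The factor $\frac12$ then makes $-\frac12[\iota,\mathcal{L}]$ cancel $\theta$ exactly, with the sign forced to work out since $0$ is Maurer--Cartan.

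For $n\geq2$, the terms $\mathrm{ad}_\iota^n(\mathcal{L}+\theta)$ all reduce to brackets of $\iota$'s with $\iota$'s, which are zero. Therefore $\mathcal{L}_\infty=\mathcal{L}$ up to the overall sign convention.

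A cleaner way to organise the cancellation, and a check on it: the arity-$2$ component of $\mathcal{L}_\infty$ must be $\mathrm{ad}_\iota$ applied to something that depends only on $\iota$ and $\mathcal{L}$, and all such contributions are multiples of $\iota_{\{a,b\}}$. Alternatively, one can avoid computing the sum altogether. Since $\mathcal{L}$ is already known to be a dgla morphism by Proposition~\ref{prop:lie-algebra-morphism}, it suffices to check that $e^{-\iota}*0$ has arity-$1$ component $\mathcal{L}$ and vanishing components in arity $\geq3$, and then pin down the arity-$2$ component by the single computation above.

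\textbf{Main obstacle.} The hard part is the sign bookkeeping: confirming that $d_{0,1}\iota$ and $-\frac12[\iota,\mathcal{L}]$ cancel rather than add. I would fix the signs from \cite{fiorenza-miti}, test on $a,b$ of degree $0$ in $\mathfrak{g}_A$ (that is, $\deg_A=-1$), and then rely on Koszul covariance to extend to all degrees.

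Note that the statement only assumes $\Bc$ is a Cartan operator of order $\le2$ and degree $1$, but the target dgla requires $[\Bc,\Bc]=0$. I would assume $\Bc^2=0$, as in Proposition~\ref{prop:lie-algebra-morphism}, and note that this is used only to know that the target is a dgla.
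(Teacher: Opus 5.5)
Your proposal follows the paper's proof. Both use the arity-one homotopy $\mathcal{I}_\infty=\iota$ and the splitting $d_{\underline{\Hom}}\iota=\mathcal{L}+\iota_{\{\,,\,\}}$; both kill $[\iota,\iota_{\{\,,\,\}}]_{\underline{\Hom}}$ and all higher $\mathrm{ad}_\iota$-terms using $[\iota_a,\iota_b]=0$ and $[[\mathcal{L}_a,\iota_b],\iota_c]=0$; and both get the final cancellation from $\frac12[\iota,\mathcal{L}]_{\underline{\Hom}}=\iota_{\{\,,\,\}}$. The one difference is that the paper fixes the sign by direct computation, $[\iota,\mathcal{L}]_{\underline{\Hom}}(a,b)=[\mathcal{L}_a,\iota_b]-(-1)^{\deg_{\mathfrak{g}_A}(a)\deg_{\mathfrak{g}_A}(b)}[\mathcal{L}_b,\iota_a]=2\iota_{\{a,b\}}$, rather than by appealing to the Maurer--Cartan property, which alone does not settle it.

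Your remark about the hypotheses is also correct. The statement omits $\Bc^2=0$, but without it $[\Bc,-]$ need not be a differential and $\{\,,\,\}$ need not satisfy Jacobi (Lemma \ref{lem:Lie-algebra}). So the proof implicitly assumes a $BV_0$-operator, as in Proposition \ref{prop:lie-algebra-morphism}.
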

\begin{proof}
Let us set $\mathfrak{g}=\mathfrak{g}_A$ and $\mathfrak{h}=\mathrm{End}_k(M)$. The element $\iota\colon \mathfrak{g}\to\mathfrak{h}[-1]$ can be seen as a degree zero element  $\mathcal{I}_\infty$ in $\underline{\Hom}({\mathfrak g},{\mathfrak h})$ with $\mathcal{I}_1=\iota$ and $\mathcal{I}_n=0$ for $n>1$.
 We want to show that
 \[
 e^{-\iota}*0=\mathcal{L}
 \]
 We compute
 \begin{align*}
e^{-\iota}*0&=d^{}_{\underline{\Hom}}\iota-\frac{1}{2}[\iota, d^{}_{\underline{\Hom}}\iota]_{\underline{\Hom}}+\sum_{n=2}^\infty \frac{(-{\rm ad}_{\iota})^n}{(n+1)!}\ (d^{}_{\underline{\Hom}}\iota)\\
&=d^{}_{\underline{\Hom}}\iota-\frac{1}{2}[\iota, d^{}_{\underline{\Hom}}\iota]_{\underline{\Hom}}+\sum_{n=0}^\infty \frac{(-{\rm ad}_\iota)^{n}}{(n+3)!}\ ([\iota, [\iota,d^{}_{\underline{\Hom}}\iota]_{\underline{\Hom}}]_{\underline{\Hom}}).
\end{align*}
Recalling the differentials and the brackets in our $\mathfrak{g}$ and $\mathfrak{h}$, we have
\[
d^{}_{\underline{\Hom}}\iota=d_{1,0}\iota+d_{0,1}\iota=[\iota,\Bc]+\iota_{[\,,\,]_{\mathfrak{g}}}=\mathcal{L}+\iota_{\{\,,\,\}}.
\]
Since $[\iota_a,\iota_b]=0$ for any $a,b\in \mathfrak{g}$, and $[[\mathcal{L}_a,\iota_b],\iota_c]=[\iota_{\{a,b\}},\iota_c]=0$, for any $a,b,c\in \mathfrak{g}$, we find
\[
[\iota, [\iota,d^{}_{\underline{\Hom}}\iota]_{\underline{\Hom}}]_{\underline{\Hom}}=
[\iota, [\iota,\mathcal{L}+\iota_{\{\,,\,\}}]_{\underline{\Hom}}]_{\underline{\Hom}}=0.
\]
Hence,
\[
e^{-\iota}*0=d^{}_{\underline{\Hom}}\iota-\frac{1}{2}[\iota, d^{}_{\underline{\Hom}}\iota]_{\underline{\Hom}}=
\mathcal{L}+\iota_{\{\,,\,\}}-\frac{1}{2}[\iota,\mathcal{L}+\iota_{\{\,,\,\}}]_{\underline{\Hom}}=
\mathcal{L}+\iota_{\{\,,\,\}}-\frac{1}{2}[\iota,\mathcal{L}]_{\underline{\Hom}}.
\]
We have
\[
[\iota,\mathcal{L}]_{\underline{\Hom}}(a,b)=[\mathcal{L}_a,\iota_b]-(-1)^{\deg_{\mathfrak{g}}(a)\deg_{\mathfrak{g}}(b)}[\mathcal{L}_b,\iota_a]=2\iota_{\{a,b\}}.
\]
Hence 
\[
\frac{1}{2}[\iota,\mathcal{L}]_{\underline{\Hom}}=\iota_{\{\,,\,\}}
\]
and so
\[
e^{-\iota}*0=\mathcal{L}.
\]

\end{proof}

\section{\v{S}evera bicomplexes
}
We now show how BV algebras can be naturally associated to certain spectral sequences. More precisely, we will show how, under suitably natural assumptions, the $E_2$-page together with its differential $d_2$ forms a $BV_0$-module. 

For the whole section, $(M, d_0, d_1)$ will be a bicomplex of $k$-vector spaces with differentials 
$d_0$ and $d_1$ of bidegree $(0,1)$ and $(1,0)$, respectively, and $(\pi, K)$ 
will be a vertical deformation retract for the complex $(M, d_0)$, i.e., 
$\pi \colon M \to M$ is a bidegree $(0,0)$ projection operator\footnote{i.e., $\pi^2=\pi$} admitting a factorization
\[
\pi\colon (M,d_0) \overset{p}{\twoheadrightarrow} (H(M,d_0),0) \overset{j}{\hookrightarrow} (M,d_0),
\]
with $p$ and $j$ morphisms of cochain complexes such that $p\circ j=\mathrm{id}_{H(M,d_0)}$, and $K \colon M \to M[0,-1]$ is a bidegree $(0,-1)$ homotopy between $\mathrm{id}_M$ and $\pi$, i.e., the following equation is satisfied:
\begin{equation}\label{eq:homotopy}
\mathrm{id}_M - \pi = [d_0, K].
\end{equation}
When $(\pi,K)$ also satisfies the side conditions 
\[
\pi\circ K=0,\qquad  K\circ \pi=0, \qquad K\circ K=0.
\]
we will say that the retract $(\pi,K)$ is strong.
The homotopy operator $K$ provides $d_0$-primitives for $d_0$-exact elements of $M$: if $x=d_0y$ then
\[
d_0Kx=x-\pi x-Kd_0x=x,
\]
since $d_0x={d_0}^2y=0$ and $\pi x=jpd_0y=0$.
The following Lemma is well-known. We provide a sketch of a proof for completeness. 
\begin{lemma}\label{lem:towards-spectral}
Let $(M, d_0, d_1,\pi,K)$ as above. For any $n\geq 1$, let $\delta_n$ be the bidegree $(n,1-n)$ linear operator
    \[
    \delta_n=\underbrace{d_1\circ K\circ d_1\circ\cdots\circ d_1\circ K\circ d_1}_{\text{$n$ copies of $d_1$}}\colon M\to M.
    \]
 Set, for any $n\geq 1$,  
\[
    Z^{p,q}_n =      
    \ker(d_0\bigr\vert_{M^{p,q}})\cap \left\{ x \in M^{p,q}\, \text{ such that }\,  \delta_k x \in \mathrm{Im}(d_0) \text{ for all } 1 \le k < n \right\}.
\]
    Then we have:
    \begin{enumerate}
    \item[a)] $\delta_n(Z^{p,q}_n)\subseteq Z^{p+n,q-n+1}_n$, so that $(Z_n,\delta_n)$ is a cochain complex.
    \item[b)] elements of $Z_n^{p,q}$ are representatives in $E_0^{p,q}=M^{p,q}$ of classes that survive to $E_n^{p,q}$; taking iterated cohomology classes gives a well-defined map $\varepsilon_n\colon Z_n^{p,q}\to E_n^{p,q}$;
    \item[c)] every element in $E_n^{p,q}$ has a representative in $Z_n^{p,q}$, i.e., the map $\varepsilon_n$ is surjective;
    \item[d)] the map $\varepsilon_n$ is a morphism of cochain complexes $\varepsilon_n\colon (Z_n,\delta_n)\to (E_n,d_n)$;
    \item[e)] setting $B_n^{p,q}=\ker\{\varepsilon_n\colon Z_n^{p,q}\to E_n^{p,q}\}$ we have an isomorphism of cochain complexes
    \[
    (Z_n/B_n,\overline{\delta}_n)\xrightarrow{\sim} (E_n,d_n),
    \]
    where $\overline{\delta}_n$ is the morphism induced by $\delta_n$ on the quotient.    
    \end{enumerate}
\end{lemma}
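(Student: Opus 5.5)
The plan is to prove everything by induction on $n$, using the identity $d_0K d_1 x = d_1x - Kd_0d_1x$ on $d_0$-closed elements. The guiding idea is the classical zig-zag description of the spectral sequence of a bicomplex. Start from $x\in Z_n$. Set $x_0=x$ and $x_{k}=-K d_1 x_{k-1}$, up to a sign convention. The condition $\delta_k x\in \mathrm{Im}(d_0)$ for $k<n$ is exactly what makes these $x_k$ solve the zig-zag equations
\[
d_0x_k=-d_1x_{k-1}, \qquad 1\le k\le n-1.
\]
The key computation is this: if $d_0x=0$ and $d_1K\delta_{k-1}x$ is $d_0$-exact (equivalently, $\delta_k x\in\mathrm{Im}\, d_0$), then
\[
d_0 K\delta_k x=\delta_k x-\pi\delta_k x-Kd_0\delta_k x=\delta_k x .
\]
Here we use $\pi(\mathrm{Im}\, d_0)=0$ together with $d_0\delta_kx=0$. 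The latter follows inductively from $d_0d_1=-d_1d_0$ (or $d_1d_0=d_0d_1$, depending on the convention) and $d_0 K\delta_{k-1}x=\delta_{k-1}x$. Hence $x+(\text{lower-filtration corrections } K\delta_1x,\ K\delta_2x,\dots)$ is a zig-zag of length $n$, and $\delta_n x$ is, up to sign, its $d_1$-boundary in bidegree $(p+n,q-n+1)$.

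For a), it remains to show that $\delta_nx$ lies in $Z_n$. First, $d_0\delta_nx=0$ by the computation above. Second, for $k<n$ we have $\delta_k\delta_nx=\delta_{k+n}x$ up to the inserted $K$: indeed $\delta_k\circ\delta_n=\delta_{k}\,(\text{with a }K\text{ between})$, and $\delta_k\delta_n = d_1K\cdots d_1\,\delta_n$ is not literally $\delta_{k+n}$. I would handle this by showing that $\delta_n x$ is $d_0$-closed and that $d_1 \delta_n x = d_1 d_0 K\delta_n x=-d_0 d_1K\delta_nx$ after using $\delta_n x\in\mathrm{Im}\,d_0$. That last membership is the real content. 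The cleanest route is to prove directly that $\delta_k\delta_n x\in \mathrm{Im}(d_0)$ via $\delta_n x = d_0(\cdots)+\pi(\cdots)$, splitting off the $\pi$-part and using $d_1^2=0$ on the exact part. I expect this to be the main obstacle, because without the strong side conditions the $\pi$-component has to be tracked by hand.

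For b) through e), I will use the standard identification of $E_n^{p,q}$ with $Z_n^{p,q}(\text{zig-zag})/B_n^{p,q}(\text{zig-zag})$, where elements surviving to $E_n$ are those extendable to zig-zags of length $n$. The statement of part b) comes from the construction above. Part c) comes from the converse: given any zig-zag $x=x_0,x_1,\dots$ with $d_0x_k=-d_1x_{k-1}$, the differences $x_k-(-Kd_1)x_{k-1}$ are $d_0$-closed. These differences are absorbed by modifying $x_0$ within its $E_n$-class, by induction on $k$, using that $d_0$-closed elements equal their $\pi$-part modulo exact elements. Part d) then follows because $d_n$ of a class is the class of the $d_1$-boundary of its zig-zag, which we identified with $\delta_nx$. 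Finally, e) is formal: $\varepsilon_n$ is a surjective chain map by c) and d), so it induces an isomorphism $Z_n/\ker\varepsilon_n\cong E_n$ compatible with $\overline\delta_n$.
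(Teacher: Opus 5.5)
\paragraph{Part a).} Your first half is fine. The identity $d_0\delta_n x=0$ follows from $d_0K\delta_{n-1}x=\delta_{n-1}x$ and $d_1^2=0$, which is equivalent to the paper's recursion $[d_0,\delta_n]=\sum_{k=1}^{n-1}(-1)^{k-1}\delta_{n-k}\pi\delta_k$.

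The second half misidentifies the obstacle. Your claim that $\delta_n x\in\mathrm{Im}(d_0)$ is \emph{the real content} is false in general: $\delta_n x$ represents $\pm d_n$ of the class of $x$, so it is exact only when that class is killed by $d_n$. Nor is $\delta_k\delta_n$ related to $\delta_{k+n}$. In the composite $\delta_k\circ\delta_n$, the rightmost $d_1$ of $\delta_k$ is adjacent to the leftmost $d_1$ of $\delta_n$, so $\delta_k\circ\delta_n=0$ identically. This one-line observation is exactly what the paper uses. It gives $\delta_n x\in Z_n$ and $\delta_n^2=0$ at once, with no bookkeeping of $\pi$-components.

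\paragraph{Part c).} This is the more serious problem. Your step in which the $d_0$-closed differences $x_k+Kd_1x_{k-1}$ are \emph{absorbed by modifying $x_0$ within its $E_n$-class} cannot be carried out. In fact c), and hence e), already fail for $n=3$ when $Z_n$ is defined through the rigid primitives $(Kd_1)^k x$. Here is a counterexample.
\begin{itemize}
\item Spaces: $M^{0,0}=k\,x$, $M^{1,-1}=k\,f\oplus k\,c$, $M^{1,0}=k\,e$, $M^{2,-1}=k\,g$, and all other $M^{p,q}=0$.
\item Maps: $d_0f=e$, $d_1x=e$, $d_1f=d_1c=g$, $Ke=f$, with all other values of $d_0,d_1,K$ on basis vectors zero; $\pi$ is the projection onto $k\,x\oplus k\,c\oplus k\,g$ along $k\,e\oplus k\,f$.
\end{itemize}
One checks directly that this is a bicomplex with a strong vertical deformation retract.

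Here $E_1$ is spanned by $[x],[c],[g]$, with $d_1[c]=[g]$ and $d_1[x]=0$. So $E_2=E_3$ is one-dimensional, spanned by the class of $x$ in bidegree $(0,0)$; indeed $x-f+c$ is a cocycle of the total complex. However, $\delta_2x=d_1Kd_1x=d_1f=g\notin\mathrm{Im}(d_0)=k\,e$. Hence $Z_3^{0,0}=0$ and $\varepsilon_3$ is not surjective.

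In your terms: every zig-zag of length $3$ starting at $x$ has $x_1=-f+c$. This differs from $-Kd_1x=-f$ by the $d_0$-closed element $c$, which has $\pi c\neq 0$ and sits in a higher column. Nothing in $M^{0,0}=k\,x$ can compensate for it.

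The paper's proof does not address this either. It only appeals to the classical zig-zag description, in which the $d_0$-primitives may be chosen freely rather than fixed as $K\delta_k x$. With the fix above for a), what does hold is the following: a), b) and d) (up to sign) for all $n$, and c), e) for $n\le 2$. For $n\le 2$, $Z_2=\ker(d_0)\cap d_1^{-1}(\mathrm{Im}(d_0))$ is exactly the set of representatives of $E_2$-classes, and this is the only case used later in the paper.
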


\begin{proof}
For $n=1$ we have $\delta_1=d_1$ and $Z_1^{p,q}=\ker(d_0\bigr\vert_{M^{p,q}})$. Since $[d_0,d_1]=0$, we have that $d_1$ maps  $\ker(d_0\bigr\vert_{M^{p,q}})$ to $\ker(d_0\bigr\vert_{M^{p+1,q}})$ and so to $Z_1^{p+1,q}$. Assume now $n\geq 2$. From
$[d_0,d_1]=0$, $[d_0,K]=\mathrm{id}_M-\pi$, and $d_1^2=0$, we obtain
\begin{equation}\label{eq:recursion}
[d_0, \delta_{n}] = \sum_{k=1}^{n-1} (-1)^{k-1} \delta_{n-k}  \pi \delta_k.
\end{equation}
If $x\in Z^{p,q}_{n}$, then $\delta_k x \in \mathrm{Im}(d_0)$ for $1\leq k\leq n-1$. Since $\pi$ vanishes on the image of $d_0$ we find $[d_0,\delta_{n}](x)=0$ for any $x\in Z^{p,q}_{n}$. On the other hand, $Z^{p,q}_{n}\subseteq \ker(d_0\bigr\vert_{M^{p,q}})$, so $[d_0,\delta_{n}](x)=d_0\delta_n x$ for any $x\in Z^{p,q}_{n}$. This gives $d_0\delta_n x=0$, i.e., $\delta_n$ maps $Z^{p,q}_{n}$ to $\ker(d_0\bigr\vert_{M^{p+n,q-n+1}})$. From the definition of $\delta_n$, one sees that $\delta_k\delta_n=0$ for any $k,n\geq 1$, since two consecutive copies of $d_1$ occur in the composition. Hence, if $y=\delta_nx$, then $\delta_ky=0\in\mathrm{Im}(d_0)$, for all $1\leq k<n$.
Thus $\delta_n x\in Z_n^{p+n,q-n+1}$, i.e., $\delta_n$ maps $Z^{p,q}_{n}$ to $Z_n^{p+n,q-n+1}$. Clearly ${\delta_n}^2=0$, so this proves statement {\it a)}. Statements {\it b)}, {\it c)} and {\it d)} are the classical zig-zag description of the $n$-th page $E_n$ of the spectral sequence associated with a bicomplex, see, e.g. \cite[Chapter 3]{bott-tu}. More precisely, the homotopy $K$ provides canonical $d_0$-primitives at each stage of the zig-zag procedure, so that the resulting
higher differentials are represented by the operators $\delta_n$.\footnote{Or by the operators $(-1)^n\delta_n$, depending on conventions.} Finally, thanks to statements {\it c)} and {\it d)}, statement {\it e)} is simply the first isomorphism theorem for complexes applied to the morphism $\varepsilon_n$. 
\end{proof}

By a slight abuse of notation, if $A$ is a graded commutative algebra and $N$ a bigraded vector space, we say that $\rho\colon A\otimes N\to N$ is a graded $A$-module structure on $N$ to mean it is a graded $A$-module structure on the total graded vector space $\mathrm{tot}(N)$, where $\mathrm{tot}(N)^i=\oplus_{p+q=i}N^{p,q}$.
\begin{definition}\label{def:severa-bicomplex}
A \emph{\v{S}evera bicomplex} is a 7-ple $(M,d_0,d_1,\pi,K,A,\rho)$ where
\begin{itemize}
    \item $(M,d_0,d_1)$ is a bicomplex of $k$-vector spaces with differentials 
$d_0$ and $d_1$ of bidegree $(0,1)$ and $(1,0)$;
    \item $(\pi,K)$ is a vertical strong deformation retract for the complex $(M, d_0)$;
    \item $A$ is a graded commutative algebra;
    \item $\rho\colon A\otimes M\to M$ is a graded $A$-module structure on $M$;
    \item $d_0\in \Diff_A^{\leq 0}(M,M)$;
    \item $d_1\in \Diff_A^{\leq 1}(M,M)$;
    \item $K\in \Diff_A^{\leq 0}(M,M)$.
\end{itemize}
We say that a \v{S}evera bicomplex has level $\geq N$ if the linear subspaces $Z_n$ and $B_n$ defined in Lemma \ref{lem:towards-spectral} are sub-$A$-modules of $M$ for any $1\leq n\leq N$.
\end{definition}
\begin{remark}
    Notice that the conditions on the order of the differentials $d_0$ and $d_1$ can be compactly expressed as $d_i\in \Diff_A^{\leq i}(M,M)$, for $i=0,1$.
\end{remark} 
\begin{remark}\label{rem:at-least-one}
 Each \v{S}evera bicomplex has level at least 1. Indeed $Z_1=\ker(d_0)$ and $d_0$ is an $A$-linear operator since $\Diff^{\leq0}_A(M,M)=\mathrm{End}_A(M)$. Since $E_1=H(M,d_0)$, the map $\varepsilon_1$ 
 is the canonical projection $\ker(d_0)\to H(M,d_0)$ and so its kernel $B_1$ is $\mathrm{Im}(d_0)$, which is again an $A$-submodule of $M$.
\end{remark}
\begin{remark}
    From Proposition \ref{prop:multiplicative-filtration} it immediately follows from the definition that the projection operator $\pi$ in a \v{S}evera bicomplex is an element in $\Diff_A^{\leq 0}(M,M)$, i.e., it is a morphism of $A$-modules from $M$ to $M$.
\end{remark}
The crucial property of \v{S}evera bicomplexes is the following.
\begin{proposition}\label{prop:in-diff-n}
    Let $(M,d_0,d_1,\pi,K,A,\rho)$ be a \v{S}evera bicomplex of level $\geq n$, and let 
    \[
    d_n\colon E_n\to E_n
    \]
    the differential of the $n$-th page of the spectral sequence associated with the bicomplex $(M,d_0,d_1)$. Then $d_n\in \Diff_A^{\leq n}(E_n,E_n)$, where the $A$-module structure on $E_n$ is the one induced by the $A$-submodule structures on $Z_n$ and $B_n$.
\end{proposition}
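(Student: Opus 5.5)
The plan is to reduce the statement to the operator $\delta_n$ of Lemma \ref{lem:towards-spectral}, and then transport the order bound to $E_n$ via Proposition \ref{prop:subquotients}.

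First, we bound the Grothendieck order of $\delta_n$ on $M$. By definition,
\[
\delta_n=d_1\circ K\circ d_1\circ\cdots\circ K\circ d_1
\]
contains $n$ copies of $d_1\in\Diff_A^{\leq 1}(M,M)$ and $n-1$ copies of $K\in\Diff_A^{\leq 0}(M,M)$. Applying Proposition \ref{prop:multiplicative-filtration} repeatedly gives $\delta_n\in\Diff_A^{\leq n}(M,M)$.

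Second, we place $\delta_n$ in the setting of Proposition \ref{prop:subquotients}. Take the chain of $A$-modules $B_n\subseteq Z_n\subseteq M$ on both the source and target side. These are $A$-submodules precisely because the \v{S}evera bicomplex has level $\geq n$. We need $\delta_n$ to map $Z_n$ into $Z_n$ and $B_n$ into $B_n$.
\begin{itemize}
\item The inclusion $\delta_n(Z_n)\subseteq Z_n$ is part (a) of Lemma \ref{lem:towards-spectral}.
\item For $\delta_n(B_n)\subseteq B_n$, use part (d): $\varepsilon_n$ is a chain map $(Z_n,\delta_n)\to(E_n,d_n)$. So if $\varepsilon_n(x)=0$, then $\varepsilon_n(\delta_n x)=d_n\varepsilon_n(x)=0$, hence $\delta_n x\in B_n$.
\end{itemize}
Hence $\delta_n\in\Diff_A^{\leq n}(M^\bullet,M^\bullet)$ in the notation of Proposition \ref{prop:subquotients}. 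That proposition then gives $\overline{\delta}_n\in\Diff_A^{\leq n}(Z_n/B_n,Z_n/B_n)$.

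Third, part (e) of Lemma \ref{lem:towards-spectral} identifies $(Z_n/B_n,\overline\delta_n)$ with $(E_n,d_n)$. By definition, the $A$-module structure on $E_n$ is the one transported from $Z_n/B_n$ along this isomorphism, so the isomorphism is $A$-linear. Conjugating by an $A$-linear isomorphism preserves Grothendieck order: this follows by induction, since $\varphi$ commutes with each $\rho_a$, exactly as in the conjugation lemma for $\Cartan_A$. Therefore $d_n\in\Diff_A^{\leq n}(E_n,E_n)$.

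The step I expect to need most care is the sign/convention issue in the footnote to Lemma \ref{lem:towards-spectral}, where $d_n$ might be represented by $(-1)^n\delta_n$ rather than $\delta_n$. This is harmless, because $\Diff_A^{\leq n}$ is a $k$-vector space. The other point to check is that the iterated structure on $E_n$ (cohomology of $E_{n-1}$) agrees with the subquotient structure $Z_n/B_n$. The statement stipulates this, so we simply take it as the definition.
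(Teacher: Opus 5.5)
Your proposal is correct and follows the paper's proof. The paper bounds the order of $\delta_n$ by Proposition~\ref{prop:multiplicative-filtration} and then passes to $E_n$ via Proposition~\ref{prop:subquotients} and Lemma~\ref{lem:towards-spectral}, exactly as you do. You also fill in details the paper leaves implicit: that $\delta_n(B_n)\subseteq B_n$ follows from part (d), and that the $A$-linear isomorphism of part (e) transports the order bound. The only thing you skip is the trivial case $n=0$, where $d_0\in\Diff_A^{\leq 0}(M,M)$ holds by assumption.
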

\begin{proof}
    For $n=0$, we have $E_0=M$ and $d_0\in \Diff^{\leq 0}(E_0,E_0)$ by assumption. For $n\geq 1$, Lemma \ref{lem:towards-spectral} tells us that the differential $d_n\colon E_n\to E_n$ is induced by a subquotient construction by the linear operator
    \[
    \delta_n=\underbrace{d_1\circ K\circ d_1\circ\cdots\circ d_1\circ K\circ d_1}_{\text{$n$ copies of $d_1$}}\colon M\to M.
    \]
    By Proposition \ref{prop:multiplicative-filtration} we have $\delta_n\in \Diff_A^{\leq n}(M,M)$ and so, by Proposition \ref{prop:subquotients} we have $d_n\in \Diff_A^{\leq n}(E_n,E_n)$. 
\end{proof}
\begin{corollary}
    Let $(M,d_0,d_1,\pi,K,A,\rho)$ be a \v{S}evera bicomplex of level $\geq 2$. If $E_2$ is a free rank 1 graded $A$-module, then $(E_2,d_2)$ is a $P_0$-module for $A$. { In particular, $d_2$ is a $BV_0$-operator on $E_2$.}
\end{corollary}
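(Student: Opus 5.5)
The plan is to check directly, one by one, the conditions in the definition of a $P_0$-module for $A$: that $E_2$ is a free rank 1 $A$-module, that $d_2$ is a differential operator of order at most 2, that $d_2$ has degree 1, and that $d_2^2=0$. Freeness is a hypothesis of the statement. The $A$-module structure on $E_2$ is the one induced on the subquotient $Z_2/B_2$, and this is well defined because the bicomplex has level $\geq 2$. Through the isomorphism in Lemma \ref{lem:towards-spectral}(e), we identify $E_2$ with $Z_2/B_2$ as $A$-modules.

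For the order condition we apply Proposition \ref{prop:in-diff-n} with $n=2$, which gives $d_2\in\Diff_A^{\leq 2}(E_2,E_2)$. Explicitly, $\delta_2=d_1\circ K\circ d_1$ lies in $\Diff_A^{\leq 1+0+1}(M,M)$ by Proposition \ref{prop:multiplicative-filtration}. It preserves the pair $B_2\subseteq Z_2$, using Lemma \ref{lem:towards-spectral}(a) together with the fact that $\varepsilon_2$ is a chain map by part (d). Proposition \ref{prop:subquotients} then passes the order bound to the induced operator on $Z_2/B_2$.

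For the degree, we work with the total grading, which is the grading used for the $A$-module structure. $\delta_2$ has bidegree $(2,-1)$ and hence total degree 1, so $d_2$ has degree 1 on $\mathrm{tot}(E_2)$. Finally, $d_2^2=0$ because $d_2$ is the differential of the page $E_2$, or directly because $\delta_2^2=0$ from two consecutive copies of $d_1$. Hence $(E_2,d_2)$ is a $P_0$-module.

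For the last sentence of the statement we use Remark \ref{rem:locally-free}. Since $E_2$ is free of rank 1, $\Diff_A(E_2,E_2)=\Cartan_A(E_2,E_2)$, so $d_2$ is a Cartan differential operator of order at most 2, degree 1, and square zero, which is exactly a $BV_0$-operator. The only delicate point is the bookkeeping check that $\delta_2$ maps $B_2$ into $B_2$, which is needed to apply Proposition \ref{prop:subquotients}. I will deduce it from the fact that $\delta_2$ descends to $\overline{\delta}_2$ on $Z_2/B_2$, as stated in part (e) of Lemma \ref{lem:towards-spectral}.
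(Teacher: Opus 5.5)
Your proposal is correct and is essentially the paper's proof. The order bound comes from Proposition \ref{prop:in-diff-n}, i.e., $\delta_2=d_1Kd_1\in\Diff_A^{\leq 2}(M,M)$ descended to $Z_2/B_2$ via Proposition \ref{prop:subquotients}, and the remaining conditions are that $d_2$ has total degree $1$ and squares to zero. Your extra details --- checking $\delta_2(B_2)\subseteq B_2$ through the chain-map property of $\varepsilon_2$, and deducing the $BV_0$ clause from $\Diff_A=\Cartan_A$ for free rank 1 modules --- are the steps the paper leaves implicit in Lemma \ref{lem:towards-spectral} and in the remark following the definition of a $P_0$-module.
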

\begin{proof}
 We know from Proposition \ref{prop:in-diff-n} that $d_2\in \mathrm{Diff}^{\leq 2}(E_2,E_2)$. Moreover $d_2$ has total degree 1 and ${d_2}^2=0$. 
\end{proof}
\begin{corollary}\label{cor:P0-from-E2}
    Let $(M,d_0,d_1,\pi,K,A,\rho)$ be a \v{S}evera bicomplex of level $\geq 2$ such that $E_2$ is a free rank 1 graded $A$-module. Then $A$ carries a canonical $P_0$-algebra structure induced by the \v{S}evera bicomplex. In particular, $A$ with the opposite grading is canonically a Gerstenhaber algebra. 
\end{corollary}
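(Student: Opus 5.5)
The plan is to chain together the preceding corollary with Lemma \ref{lem:p0-module}. By the previous corollary, since the \v{S}evera bicomplex has level $\geq 2$ and $E_2$ is a free rank 1 graded $A$-module, the pair $(E_2,d_2)$ is a $P_0$-module for $A$. In other words, $d_2\in\Diff_A^{\leq 2}(E_2,E_2)$ has total degree $1$ and satisfies $d_2^2=0$. Here the $A$-module structure on $E_2\cong Z_2/B_2$ is the one induced from the sub-$A$-module structures of $Z_2$ and $B_2$, as in Proposition \ref{prop:in-diff-n}.

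The first step is to observe that, because $E_2$ is free of rank $1$, Remark \ref{rem:locally-free} gives $\Diff_A(E_2,E_2)=\Cartan_A(E_2,E_2)$. Hence $d_2$ is a Cartan differential operator of order at most $2$ and degree $1$ with $d_2^2=0$, that is, a $BV_0$-operator $(d_2,\mathcal{L},\{-,-\})$ on $E_2$. Note that faithfulness of $E_2$ is automatic for a free rank 1 module.

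The second step is to apply Lemma \ref{lem:p0-module}, which ultimately rests on Proposition \ref{prop:p0} together with Lemma \ref{lem:Lie-algebra}. This endows $A$ with the $P_0$-bracket determined by the derived bracket identity
\[
[[\iota_a,d_2],\iota_b]=\iota_{\{a,b\}}.
\]
Injectivity of $\iota$ makes the bracket well defined, antisymmetric (Example \ref{ex:order2}), Jacobi (Lemma \ref{lem:Lie-algebra}) and a biderivation (Proposition \ref{prop:p0}).

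For canonicity, I would stress that this bracket is defined purely in terms of $d_2$ and the module action, with no choice of basis element $\zeta$ involved. The construction therefore depends only on the \v{S}evera bicomplex, through its spectral sequence page and differential. One possible objection is that $d_2$ was obtained from $\delta_2=d_1Kd_1$, which a priori depends on $K$. However, the differential $d_2$ on $E_2$ is intrinsic to the spectral sequence of the bicomplex $(M,d_0,d_1)$ by Lemma \ref{lem:towards-spectral}, up to the sign convention noted there. A sign change $d_2\mapsto -d_2$ would only flip the sign of the bracket, so I would fix the convention $d_2=\overline{\delta}_2$ to avoid this ambiguity. I expect this independence point to be the only place needing care. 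Everything else is a direct citation.

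Finally, the Gerstenhaber statement follows from the corollary to Proposition \ref{prop:p0}: passing to the opposite grading turns the degree $1$ Poisson bracket into a $P_2$-structure.
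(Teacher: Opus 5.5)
Your proposal is correct and follows the paper's route. The paper states this corollary without a separate proof. It is obtained exactly as you do: combine the preceding corollary (that $(E_2,d_2)$ is a $P_0$-module) with Lemma \ref{lem:p0-module}, and get the Gerstenhaber part from the corollary following Proposition \ref{prop:p0}. Canonicity just means that no choice of basis element $\zeta$ enters.

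Your extra sign precaution is harmless but unnecessary. By the footnote to Lemma \ref{lem:towards-spectral}, the two conventions differ by $(-1)^n$, which equals $1$ for $n=2$.
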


\begin{corollary}
    Let $(M,d_0,d_1,\pi,K,A,\rho)$ be a \v{S}evera bicomplex such that $E_2$ is a free rank 1 graded $A$-module. Then each $d_2$-closed basis element $\zeta$ for the free  rank 1 $A$-module $E_2$ induces a $BV_0$ algebra on $A$, enhancing the $P_0$-algebra structure from Corollary  \ref{cor:P0-from-E2}. Taking the opposite grading on $A$ one obtains a Batalin-Vilkovisky algebra structure on $A$ enhancing the Gerstenhaber algebra structure from Corollary  \ref{cor:P0-from-E2}.
\end{corollary}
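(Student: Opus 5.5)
The argument should be a short chain of earlier results. One point needs care first: the statement does not say ``of level $\geq 2$''. I will read it as carrying the same standing hypothesis as Corollary \ref{cor:P0-from-E2}, since the $A$-module structure on $E_2$ is only defined once $Z_2$ and $B_2$ are $A$-submodules. (Level $\geq 1$ is automatic by Remark \ref{rem:at-least-one}, but level $2$ is not.) Under this reading, $E_2$ carries the induced $A$-module structure. By hypothesis it is free of rank $1$.

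The first step is to invoke Proposition \ref{prop:in-diff-n} with $n=2$, which gives $d_2\in\Diff_A^{\leq 2}(E_2,E_2)$. The operator $d_2$ has total degree $1$, since it has bidegree $(2,-1)$, and it satisfies $d_2^2=0$. Since $E_2$ is free of rank $1$, Remark \ref{rem:locally-free} gives $\Diff_A(E_2,E_2)=\Cartan_A(E_2,E_2)$. Hence $d_2$ is a $BV_0$-operator and $(E_2,d_2)$ is a $P_0$-module, as in the first corollary above. Now let $\zeta$ be a basis element with $d_2\zeta=0$. Then $(E_2,d_2,\zeta)$ is by definition a $BV_0$-module for $A$. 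Proposition \ref{prop:bv0-module} then yields the $BV_0$-algebra $(A,\Delta_{d_2,\zeta})$, where $\Delta_{d_2,\zeta}$ is characterized by $\Delta_{d_2,\zeta}(a)\zeta=d_2(a\zeta)$.

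The next step is to check that this structure enhances the $P_0$-structure of Corollary \ref{cor:P0-from-E2}. That structure is, by construction, the one coming from the $P_0$-module $(E_2,d_2)$ through Lemma \ref{lem:p0-module}. The lemma following Remark \ref{rem:canonical} states $\{\,,\,\}_\zeta=\{\,,\,\}$: the bracket underlying $\Delta_{d_2,\zeta}$ via Lemma \ref{lem:bv-bracket} coincides with the canonical bracket of the $P_0$-module. This is exactly the required enhancement.

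Finally, reversing the grading of $A$ turns $BV_0$ into $BV_2$ by Definition \ref{def:bv}, and $P_0$ into Gerstenhaber. The enhancement relation is preserved, since the operations themselves do not change. This gives the Batalin--Vilkovisky statement.

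The main obstacle is the hypothesis bookkeeping, namely making sure the level $\geq 2$ assumption is in force so that $E_2$ is an $A$-module and Proposition \ref{prop:in-diff-n} applies. A smaller check is that the basis element $\zeta$ has degree $0$, as noted in Example \ref{example:for-bv}, so that $a\mapsto a\zeta$ is a degree-preserving isomorphism. Everything else is direct citation.
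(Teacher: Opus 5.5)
Your proof is correct and follows the intended argument: the paper writes no separate proof, treating the corollary as immediate from Proposition \ref{prop:in-diff-n}, Proposition \ref{prop:bv0-module} and the lemma $\{\,,\,\}_\zeta=\{\,,\,\}$ after Remark \ref{rem:canonical}. You are also right that the level $\geq 2$ hypothesis must be carried over from Corollary \ref{cor:P0-from-E2}, because without it $E_2$ has no induced $A$-module structure.
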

A useful sufficient condition for a \v{S}evera bicomplex to have level $\geq 2$ is provided by the following Lemma.
\begin{lemma}\label{lemma:zero-d_1-implies-level_geq2}
    Let $(M,d_0,d_1,\pi,K,A,\rho)$ be a \v{S}evera bicomplex. If $d_1\colon E_1\to E_1$ is the zero morphism, then $(M,d_0,d_1,\pi,K,A,\rho)$ has level $\geq 2$. 
\end{lemma}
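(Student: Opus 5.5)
The plan is to show directly that, when the map induced by $d_1$ on $E_1$ vanishes, the subspaces $Z_2$ and $B_2$ of Lemma \ref{lem:towards-spectral} coincide with $Z_1$ and $B_1$. Since Remark \ref{rem:at-least-one} already tells us that $Z_1=\ker(d_0)$ and $B_1=\mathrm{Im}(d_0)$ are $A$-submodules of $M$ (because $d_0\in\Diff_A^{\leq 0}(M,M)=\End_A(M)$), this gives level $\geq 2$.

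\emph{Step 1: $Z_2=Z_1$.} By definition,
\[
Z_2^{p,q}=\ker(d_0|_{M^{p,q}})\cap\{x\mid \delta_1x=d_1x\in\mathrm{Im}(d_0)\}.
\]
The induced map $d_1\colon E_1\to E_1$ sends the class $[x]\in H(M,d_0)$ of a $d_0$-closed element $x$ to $[d_1x]$. Its vanishing therefore says exactly that $d_1x\in\mathrm{Im}(d_0)$ for every $x\in\ker(d_0)$. Hence the second condition in the definition of $Z_2$ is automatic, and $Z_2=\ker(d_0)=Z_1$.

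\emph{Step 2: $B_2=B_1$.} Here one has to unwind the map $\varepsilon_2$ of Lemma \ref{lem:towards-spectral}(b). It sends $x\in Z_2$ to its iterated class: first $\varepsilon_1(x)=[x]\in E_1$, which is $d_1$-closed, and then the class of $\varepsilon_1(x)$ in
\[
E_2=H(E_1,d_1).
\]
Since $d_1=0$ on $E_1$, there are no $d_1$-boundaries, and the quotient map $\ker(d_1|_{E_1})=E_1\to E_2$ is the identity. Therefore $\varepsilon_2=\varepsilon_1$ on $Z_2=Z_1$, and so
\[
B_2=\ker\varepsilon_2=\ker\varepsilon_1=B_1=\mathrm{Im}(d_0).
\]

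Combining the two steps, $Z_n$ and $B_n$ are $A$-submodules for $n=1,2$, which is the claim.

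The only delicate point is Step 2, namely being careful that $\varepsilon_2$ really factors as $\varepsilon_1$ followed by the passage to $d_1$-cohomology. This is the zig-zag description invoked in the proof of Lemma \ref{lem:towards-spectral}. I would state this factorization explicitly rather than rely on the operator $\delta_2$, since $\delta_2$ plays no role in determining $B_2$.
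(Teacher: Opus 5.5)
Your proof is correct and follows the paper's argument: both show $Z_2=\ker(d_0)$ because $d_1(\ker d_0)\subseteq \mathrm{Im}(d_0)$, and then identify $B_2=\mathrm{Im}(d_0)$. The only difference is in how $B_2$ is identified: the paper first writes the general formula $B_2=d_1(\ker d_0)+\mathrm{Im}(d_0)$ and then specializes, whereas you read off $\ker\varepsilon_2=\ker\varepsilon_1$ directly from the fact that $E_1\to E_2$ is the identity, which is the paper's informal observation that $E_2=E_1$.
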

\begin{proof}
    The statement is informally clear: we know from Remark \ref{rem:at-least-one} that $E_1$ inherits from $E_0=M$ an $A$-module structure. If $d_1\colon E_1\to E_1$ is the zero morphism then $E_2=E_1$ and so also $E_2$ inherits an $A$-module structure, and this is essentially the property of being of level $\geq 2$. A formal proof goes as follows. Assume $d_1\colon E_1\to E_1$ is the zero morphism. Then,
    $d_1\colon M\to M$ maps $d_0$-closed elements to $d_0$-exact elements, i.e., $d_1(\ker(d_0))\subseteq \mathrm{Im}(d_0)$. Since $\delta_1=d_1$, this implies that $Z_2=\ker(d_0)$, and this is an $A$-submodule of $M$ thanks to the $A$-linearity of $d_0$. Now we identify the linear subspace $B_2$. Generally speaking, an element $\xi$ in $Z_2$ induces the zero class in $E_2$ if and only if $\xi=d_1\eta+d_0\beta$ for some $\eta\in \ker(d_0)$. Therefore, generally $B_2=d_1(\ker(d_0))+\mathrm{Im}(d_0)$. Under the assumption that $d_1\colon E_1\to E_1$ is the zero morphism, we have $d_1(\ker(d_0))\subseteq \mathrm{Im}(d_0)$, and so $B_2=\mathrm{Im}(d_0)$. Again, this is an $A$-submodule of $M$ thanks to the $A$-linearity of $d_0$. Notice that, in particular, we recovered that, under the assumption that $d_1\colon E_1\to E_1$ is the zero morphism, we have $E_2=\ker(d_0)/\mathrm{Im}(d_0)=E_1$.
    \end{proof}
 \begin{remark}\label{rem:palladio}
  When $d_1\colon E_1\to E_1$ is the zero morphism, the condition $d_2\zeta=0$ and the fact that $\zeta$ is a basis element for the free  rank 1 $A$-module $E_2$ can be expressed in terms of a lift $\xi$ of $\zeta$ to $M$ by requiring: 
  \begin{itemize}
  \item $d_0\xi=0$;
  \item $d_1\xi=d_0\eta$, for some $\eta\in M$;\footnote{When $d_1\colon E_1\to E_1$ is the zero morphism, this condition is automatically implied by the condition $d_0\xi=0$. Here we are explicitly including it  to exactly reproduce the definition of Palladio element from \cite{fiorenza-kowalzig2018}.}
  \item the map of cochain complexes
  \[
  \iota_{(-)}\xi\colon (\mathfrak{g}_A,0)\to (M[-1],d_0)
  \]
  is a quasi-isomorphism.
  \end{itemize}
  Elements $\xi$ of this type appear under the name of Palladio elements in \cite{fiorenza-kowalzig2018}.
 \end{remark}

\section{The \v{S}evera bicomplex of an odd symplectic manifold}

Differential forms on a graded manifold $\mathcal{M}$ are naturally bigraded by form degree $f$ and internal degree\footnote{The degree coming from the degree of coordinates.} $i$. By an odd symplectic manifold of dimension $(n|n)$, we mean a pair $(\mathcal{M},\omega)$, where  $\mathcal{M}$ is a graded manifold of even dimension $n$ and odd dimension $n$, and $\omega$ is a symplectic form of bidegree $(2,-1)$. Locally, $\mathcal{M}$ carries Darboux coordinates $q^i, p_i$ of degree $0$ and $-1$ respectively. In these coordinates
\[
\omega=\sum_{i=1}^ndq^i\wedge dp_i.
\]
The multiplication operator by the symplectic form $\omega$ has bidegree $(2,-1)$, while the de Rham differential $d$ has bidegree $(1,0)$. We make an affine change in the bigrading in order to have $\omega\wedge-$ of bidegree $(0,1)$ and $d$ of bidegree $(1,0)$: we set
\[
p=f+2i+n; \qquad q=-i
\]
and 
\[
M=\bigoplus_{p,q} M^{p, q}
\]
with $M^{p,q}=\Omega^{-p+2q-n,-q}$, where $\Omega^{f,i}$ is the space of differential forms of bidegree $(f,i)$. 

\begin{lemma}\label{M_is_bicomplex}
Let $d_0\colon M\to M[0,1]$ and $d_1\colon M\to M[1,0]$ the linear operators $d_0=\omega\wedge -$ and $d_1=d$, where $d$ denotes the de Rham differential. Then $(M,d_0,d_1)$ is a bicomplex.
\end{lemma}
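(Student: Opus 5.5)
We need to check four things: that $d_0$ and $d_1$ have the declared bidegrees $(0,1)$ and $(1,0)$ with respect to the regrading $p=f+2i+n$, $q=-i$; that $d_0^2=0$; that $d_1^2=0$; and that $[d_0,d_1]=0$. Throughout, the sign conventions for the bigraded commutator are the ones used for the total degree $p+q=f+i+n$. Up to the constant shift by $n$, this total degree is the total degree $f+i$ of differential forms on the graded manifold, so all Koszul signs can be computed in $\Omega^{\bullet,\bullet}$ with its total grading.

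\emph{Bidegrees.} Multiplication by $\omega$ sends $(f,i)$ to $(f+2,i-1)$. Hence $p$ changes by $2+2(-1)=0$ and $q$ changes by $+1$, so $d_0$ has bidegree $(0,1)$. The de Rham differential sends $(f,i)$ to $(f+1,i)$, so $p$ changes by $1$ and $q$ is unchanged, giving bidegree $(1,0)$.

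\emph{The three quadratic relations.}
\begin{itemize}
\item $d_1^2=d^2=0$ is the standard property of the de Rham differential on a graded manifold.
\item $d_0^2=\omega\wedge\omega\wedge-$. Here $\omega$ has total degree $2-1=1$, i.e.\ it is odd, so graded commutativity of the algebra of forms gives $\omega\wedge\omega=-\omega\wedge\omega$. Since $\mathrm{char}\,k=0$, this forces $\omega\wedge\omega=0$. As a check in Darboux coordinates: $dq^i$ has bidegree $(1,0)$ and is odd, $dp_i$ has bidegree $(1,-1)$ and is even, so each $dq^i\wedge dp_i$ is odd and these terms anticommute pairwise, while $(dq^i)^2=0$ kills the diagonal terms.
\item For the mixed relation, $d$ is a derivation of total degree $1$, so
\[
[d_0,d_1](\alpha)=\omega\wedge d\alpha+d(\omega\wedge\alpha)=\omega\wedge d\alpha+d\omega\wedge\alpha-\omega\wedge d\alpha=d\omega\wedge\alpha .
\]
The sign in front of $\omega\wedge d\alpha$ in $d(\omega\wedge\alpha)$ is $(-1)^{|\omega|}=-1$. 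Since $d\omega=0$, because $\omega$ is symplectic and hence closed, we get $[d_0,d_1]=d_0d_1+d_1d_0=0$.
\end{itemize}

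The only delicate point is the sign convention: whether "bicomplex" requires anticommuting differentials, which is what the graded commutator $[d_0,d_1]=0$ used in Lemma \ref{lem:towards-spectral} amounts to for two odd operators. The computation above confirms that $d_0$ and $d_1$ anticommute in the total grading. With these conventions the argument is routine.
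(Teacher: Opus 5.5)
Your proof is correct and takes the same route as the paper. Both argue $d_0^2=0$ from $\omega$ being odd in the total grading together with graded commutativity, $d_1^2=0$ from the standard de Rham property, and $[d_0,d_1]=d\omega\wedge-=0$ from the Leibniz rule and the closedness of $\omega$; your explicit check of the bidegrees $(0,1)$ and $(1,0)$ under $p=f+2i+n$, $q=-i$, and your Darboux-coordinate check of $\omega\wedge\omega=0$, are details the paper leaves implicit.
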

\begin{proof}
The condition ${d_0}^2=0$ follows from the fact $\omega$ has total degree 1 and the wedge product on $\Omega$ is graded commutative. The condition ${d_1}^2=0$ is the usual property of the de Rham operator being a differential. Finally, the condition $[d_0,d_1]=0$ follows from the Leibniz rule for $d$, since $\omega$ is closed of total odd degree.
\end{proof}

\begin{remark}\label{rem:A-module}
Since the change in the bigrading $(f,i)\leftrightarrow (p,q)$ is an affine one and not a linear one, the wedge product of differential forms does not induce a bigraded algebra structure on $M$. Yet, it induces a graded $\Omega$-module structure with respect to the total gradings $f+i$ and $p+q$:
\[
\Omega^{f,i}\otimes M^{p,q}\to M^{f+p+2i,q-i}.
\]
Let $A=C^\infty(\mathcal{M})$ be the graded commutative algebra of smooth functions on $\mathcal{M}$. We have $A^i=\Omega^{0,i}$, so that the inclusion of graded commutative algebras $A\hookrightarrow \Omega$ naturally makes $M$ a graded $A$-module.
\end{remark}
 \begin{notation}
We denote by $\rho$ the natural graded $A$-module structure on $M$ mentioned in Remark \ref{rem:A-module}. 
\end{notation}
\begin{lemma}\label{lem:leibniz}
The differentials $d_0$ and $d_1$ are differential operators of order $0$ and 1, respectively, with respect to the natural action of $A=C^\infty(\mathcal{M})$ on $M$.    
\end{lemma}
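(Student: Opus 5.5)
The plan is to verify both claims directly from Definition \ref{def:diff}, using Remark \ref{rem:derived-bracket} to reduce them to commutator identities with the multiplication operators $\rho_a$, $a\in A=C^\infty(\mathcal{M})$. Throughout, the $A$-module structure is the one of Remark \ref{rem:A-module}: $\rho_a$ is wedge multiplication by the function $a$, viewed as a $0$-form. Its total degree on $M$ is $\deg_A(a)=i$, since $a\in\Omega^{0,i}$ is sent to bidegree shift $(2i,-i)$, of total degree $i$.

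For $d_0=\omega\wedge-$ the goal is to show $[d_0,\rho_a]=0$, which by Remark \ref{rem:order-zero} means $d_0\in\Hom_A(M,M)=\Diff_A^{\leq0}(M,M)$. The operator $d_0$ has total degree $1$ on $M$, and $\omega$ has total form-plus-internal degree $2-1=1$. Graded commutativity of the wedge product on $\Omega$, with respect to total degree $f+i$, then gives
\[
\omega\wedge(a\,m)=(-1)^{\deg_A(a)}a\,(\omega\wedge m).
\]
The one point to check is that this Koszul sign matches the sign $(-1)^{\deg(d_0)\deg_A(a)}$ in the commutator of Definition \ref{def:diff}. This holds because the affine regrading preserves the parity of total degree up to a fixed shift by $n$. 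That shift is the same on source and target, so it does not affect commutator signs.

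For $d_1=d$ I first compute $[d,\rho_a]$. The graded Leibniz rule for the de Rham differential gives $d(am)=(da)\,m+(-1)^{\deg_A(a)}a\,dm$. Since $d$ has total degree $1$, this says exactly
\[
[d,\rho_a]=\rho^{\Omega}_{da},
\]
that is, wedge multiplication by the $1$-form $da\in\Omega^{1,i}$. Wedge multiplication by any form $\alpha\in\Omega$ commutes, in the graded sense, with $\rho_b$ for every $b\in A$, again by graded commutativity of $\Omega$ and the same parity check. Hence $[[d,\rho_a],\rho_b]=0$, so $[d,\rho_a]\in\Diff_A^{\leq0}$ and therefore $d\in\Diff_A^{\leq1}(M,M)$.

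The only real obstacle is sign bookkeeping. The module structure on $M$ comes through the affine change of bigrading $(f,i)\mapsto(p,q)$, so I must confirm that the Koszul signs of $\Omega$ in total degree $f+i$ coincide with the commutator signs computed using total degree $p+q$ on $M$. This follows from $p+q=f+i+n$: all operators involved shift both gradings consistently, and the constant $n$ cancels in every commutator. Once that is noted, both claims reduce to graded commutativity of $\Omega$ and the Leibniz rule for $d$, exactly as in Lemma \ref{M_is_bicomplex}.
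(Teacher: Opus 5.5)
Your proposal is correct and follows the paper's proof. The paper likewise gets $A$-linearity of $d_0=\omega\wedge-$ from graded commutativity, and uses the Leibniz rule to show $[d_1,\rho_a]=da\wedge-$, which is $A$-linear. Your extra check that the affine regrading $p+q=f+i+n$ only shifts total degree by a constant, so the Koszul signs of $\Omega$ agree with the commutator signs on $M$, is left implicit in the paper and is correct.
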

\begin{proof}
The $A$-linearity of $d_0$ is manifest. Concerning $d_1$, the fact that $d_1\in \Diff^{\leq 1}_A(M,M)$ is the Leibniz rule for the de Rham differential: $d_1(a\eta)=d_1a\wedge \eta+(-1)^{\mathrm{deg}_A(a)}a\, d_1\eta$, hence $[d_1,\rho_a]$ is the $A$-linear operator $d_1a\wedge-$.    
\end{proof}

Let $\Pi$ be the bivector field on $\mathcal{M}$ dual to the symplectic form $\omega$, i.e., the bivector field defining the odd bracket $\{-,-\}$ on functions on $\mathcal{M}$. Let $\iota_\Pi\colon \Omega\to \Omega[-2,1]$ be the contraction operator with $\Pi$. We can look at $\iota_\Pi$ as  linear operator
\[
\iota_\Pi\colon M\to M[0,-1].
\]
\begin{lemma}\label{lem:comm_omega_pi}
The linear operator 
\[
\Lambda=[d_0,\iota_\Pi]\colon M\to M
\]
is semisimple with integer eigenvalues. Moreover $\Lambda\in \Diff^{\leq 0}(M,M)$.
\end{lemma}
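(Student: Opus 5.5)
The plan is to compute $\Lambda$ explicitly in local Darboux coordinates and recognize it as a combination of the identity and a counting (Euler-type) operator. The commutator of the multiplication by a 2-form with the contraction by a bivector is a zeroth-order operator in the form variables. So I first treat everything in a coordinate chart $(q^i,p_i)$ with $\omega=\sum_i dq^i\wedge dp_i$ and $\Pi=\sum_i \partial_{q^i}\wedge\partial_{p_i}$ (up to a sign convention fixed by duality with $\omega$).

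The algebra of forms on the chart is $C^\infty(\mathcal{M})\otimes k[dq^i,dp_i]$. Here $dq^i$ has total degree $1$ (odd), while $dp_i$ has total degree $0$ (even, a polynomial variable). Contraction $\iota_{\partial_{q^i}}$ is an odd derivation, $\iota_{\partial_{p_i}}$ an even one, and $\iota_\Pi=\sum_i\pm\,\iota_{\partial_{q^i}}\iota_{\partial_{p_i}}$. For each $i$ the pairs $(dq^i,\iota_{\partial_{q^i}})$ and $(dp_i,\iota_{\partial_{p_i}})$ generate a Clifford, respectively Weyl, algebra, and operators with different indices $i\neq j$ graded-commute. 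The commutator therefore splits as
\[
\Lambda=\sum_i[dq^i dp_i,\ \pm\iota_{\partial_{q^i}}\iota_{\partial_{p_i}}].
\]
Using $[\iota_{\partial_{q^i}},dq^i]=1$ and $[\iota_{\partial_{p_i}},dp_i]=1$, each summand becomes $\pm\big(dq^i\iota_{\partial_{q^i}}+dp_i\iota_{\partial_{p_i}}-1\big)$ or a similar combination: a constant plus number operators $N_{q,i}=dq^i\iota_{\partial_{q^i}}$ and $N_{p,i}=dp_i\iota_{\partial_{p_i}}$. The one genuinely delicate step is getting the relative signs and the constant right. The Koszul conventions for the total degree, together with the affine regrading $M^{p,q}=\Omega^{-p+2q-n,-q}$, matter here. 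I would fix them by testing on the monomials $1$, $dq^i$, $dp_i$ and $dq^i dp_i$ for $n=1$.

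Once this formula is in hand, semisimplicity with integer eigenvalues follows at once. Monomials $f\,(dq)^{\epsilon}(dp)^{m}$, with $\epsilon_i\in\{0,1\}$ and $m_i\in\mathbb{N}$, are simultaneous eigenvectors of all the $N_{q,i}$ and $N_{p,i}$, with eigenvalues $\epsilon_i$ and $m_i$. Hence $\Lambda$ acts on them by an integer, and every form is a finite sum of such monomials. These eigenspace decompositions are chart-independent: $\Lambda$ is globally defined as the commutator of two global operators. Its eigenspaces are the intersections of $M$ with the global kernels $\ker(\Lambda-\lambda)$. Since each local form has bounded polynomial degree in the $dp_i$, the local decompositions glue and give a global direct sum decomposition $M=\bigoplus_\lambda\ker(\Lambda-\lambda)$.

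Finally, $\Lambda\in\Diff^{\leq0}_A(M,M)$, since both $d_0=\omega\wedge-$ and $\iota_\Pi$ are $A$-linear: the first because $\omega$ multiplication graded-commutes with multiplication by functions, the second because contractions are $C^\infty(\mathcal{M})$-linear, both up to Koszul signs. By Remark \ref{rem:order-zero} both lie in $\Diff^{\leq0}_A(M,M)$. Their commutator then lies there as well, by Proposition \ref{prop:multiplicative-filtration} applied with $m=n=0$ (or directly from the local formula, in which the number operators are $A$-linear). I expect the sign and constant bookkeeping in the local computation to be the main obstacle; the rest is formal.
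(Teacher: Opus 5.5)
Your proposal is correct and follows essentially the same route as the paper: a local Darboux computation showing that each monomial $dq^I\wedge dp_J$ is an eigenvector of $\Lambda$ with integer eigenvalue, followed by the observation that $\Lambda$ is $A$-linear because $d_0$ and $\iota_\Pi$ are. The sign bookkeeping you defer works out to $\Lambda=n-N_q+N_p$, i.e.\ eigenvalue $n-|I|+|J|$ on $dq^I\wedge dp_J$, so your tentative $\pm(N_{q,i}+N_{p,i}-1)$ is wrong; this does not affect the lemma, but it is exactly the formula the paper relies on afterwards to get nonnegative eigenvalues and to identify $V_0$.
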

\begin{proof}
    Since both $d_0$ and $\iota_\pi$ are endomorphisms of the bundle $\bigoplus^k\wedge^kT^*\mathcal{M}$ we can work pointwise with a linear basis $dq^i, dp_i$ of the cotangent space at some point $x$ such that $\omega$ is in standard Darboux form in this basis. On a basis element $dq^I\wedge dp_J$ of  
    $\bigoplus^k\wedge^kT^*_x\mathcal{M}$, using $\iota_\Pi=\iota_{\partial_{q^i}}\iota_{\partial_{p_i}}$ 
    and the fact that $\iota_{\partial{q^i}}$ and $\iota_{\partial{p_i}}$ are graded derivations of $(\Omega,\wedge)$ we find
\[
\Lambda\colon dq^I\wedge dp_J \mapsto (n-|I|+|J|) dq^I\wedge dp_J
\]
and so the restriction of $\Lambda$ to the exterior algebra of the cotangent bundle is semisimple with integer eigenvalues. This splits at each point the exterior algebra of the cotangent bundle into a direct sum of $\Lambda$-eigenspaces, and so splits the whole of $\Omega$, and so the whole of $M$, into a direct sum of $\Lambda$-eigenspaces. Finally, $\Lambda$ is $A$-linear since both $d_0$ and $\iota_\Pi$ are $A$-linear.
\end{proof}
\begin{notation}
 Let us write 
\[
M=\bigoplus_{k\in \mathbb{Z}}V_k
\]
for the decomposition of $M$ into $\Lambda$-eigenspaces. Notice that, since $|I|\leq n$, we have $(n-|I|+|J|)\geq 0$ so that the decomposition is actually
\[
M=\bigoplus_{k\geq 0}V_k.
\]
\end{notation}
\begin{remark}\label{rem:zero-on-V0}
On $V_0$ we have $|I|=n+|J|$. Since $|I|\leq n$, this forces $|I|=n$ and $|J|=0$. In other words, elements of $V_0$ can be written locally in a Darboux chart $U$ as
\begin{equation}\label{eq:local}
\eta\bigr\vert_U= f(\vec{p},\vec{q}) dq^1\wedge\cdots\wedge dq^n
\end{equation}
for some smooth function $f$.
\end{remark}
\begin{lemma}\label{M_is_Ssubcomplex_and_Amodule}
The decomposition of $M$ as the direct sum of the $\Lambda$-eigenspaces is a decomposition of $(M,d_0)$ into a direct sum of subcomplexes:
\[
(M,d_0) =\bigoplus_{k\geq 0} (V_k,d_0).
\]
Moreover, each of the $V_k$ is an $A$-module.
\end{lemma}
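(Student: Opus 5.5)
The plan is to show that $\Lambda$ commutes with $d_0$ and with every multiplication operator $\rho_a$. The statement then follows from the general fact that an operator commuting with a semisimple endomorphism preserves its eigenspaces.

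\textbf{Step 1: $[d_0,\Lambda]=0$.} Since $d_0=\omega\wedge-$ has odd total degree, we have $[d_0,d_0]=2d_0^2=0$. Apply the graded Jacobi identity to $\Lambda=[d_0,\iota_\Pi]$ exactly as in the proof of Lemma \ref{lem:BcL}. This gives
\[
[d_0,[d_0,\iota_\Pi]]=\tfrac12[[d_0,d_0],\iota_\Pi]=0 .
\]
Note that $\Lambda$ has bidegree $(0,0)$ and so is even, which is why the ordinary commutator $[d_0,\Lambda]=d_0\Lambda-\Lambda d_0$ is the relevant one here.

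\textbf{Step 2: $[\rho_a,\Lambda]=0$ for every $a\in A$.} This is the $A$-linearity of $\Lambda$ already proved in Lemma \ref{lem:comm_omega_pi}, that is, $\Lambda\in\Diff_A^{\leq 0}(M,M)$. Since $\Lambda$ is even, this says precisely that $\Lambda\circ\rho_a=\rho_a\circ\Lambda$.

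\textbf{Step 3: eigenspaces are preserved.} Let $T$ be any operator commuting with $\Lambda$, and let $x\in V_k$. Then
\[
\Lambda(Tx)=T(\Lambda x)=k\,Tx,
\]
so $Tx\in V_k$. Taking $T=d_0$ shows that each $V_k$ is $d_0$-stable, so each $(V_k,d_0)$ is a subcomplex. Since $M=\bigoplus_k V_k$ as vector spaces, $(M,d_0)$ is the direct sum of these subcomplexes. Taking $T=\rho_a$ shows that each $V_k$ is a sub-$A$-module.

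\textbf{Main obstacle.} The one delicate point is Step 1: getting the signs right in the Jacobi identity for the odd operators $d_0$ and $\iota_\Pi$, so that the term $[d_0,\Lambda]$ appears twice and cancels against $[[d_0,d_0],\iota_\Pi]=0$. If this becomes messy, I would instead check $[d_0,\Lambda]=0$ pointwise, using the eigenvalue formula from Lemma \ref{lem:comm_omega_pi}. Wedging with $\omega=\sum_i dq^i\wedge dp_i$ maps $dq^I\wedge dp_J$ to a combination of terms with $|I|+1$ and $|J|+1$. The eigenvalue $n-|I|+|J|$ is unchanged by this, so $d_0$ preserves each eigenspace, which is exactly what is needed.

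Two further points need care. The decomposition $M=\bigoplus_k V_k$ into global sections relies on the pointwise splitting into subbundles stated in Lemma \ref{lem:comm_omega_pi}; this is compatible with both $d_0$ and $\rho_a$ because both act fiberwise. Finally, the vanishing of $\pi$, $K$ and the other structure maps plays no role in this argument.
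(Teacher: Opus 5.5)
Your proposal is correct and follows the paper's proof. The paper also derives $[d_0,\Lambda]=-[d_0,\Lambda]$, hence $[d_0,\Lambda]=0$, from the graded Jacobi identity and $[d_0,d_0]=0$, and then gets the $A$-submodule property from the $A$-linearity of $\Lambda$ established in Lemma~\ref{lem:comm_omega_pi}. Your explicit step that operators commuting with $\Lambda$ preserve its eigenspaces is left implicit in the paper.
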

\begin{proof}
The operator $d_0$ commutes with $\Lambda$, since
\[
[d_0,\Lambda]=[d_0,[d_0,\iota_\Pi]]=[[d_0,d_0],\iota_\Pi]
-[d_0,[d_0,\iota_\Pi]]=
-[d_0,[d_0,\iota_\Pi]]=-[d_0,\Lambda].
\]
This proves that the $V_k$'s are subcomplexes of $(M,d_0)$. Since $\Lambda$ is $A$-linear, its eigenspaces are $A$-submodules of $M$. This proves the final part of the statement.
\end{proof}
\begin{lemma}\label{lem:Pi-preserves}
 The linear operator $\iota_\Pi$ preserves the $\Lambda$-eigenspaces $V_k$.   
\end{lemma}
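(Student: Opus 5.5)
The plan is to show that $\iota_\Pi$ commutes with $\Lambda=[d_0,\iota_\Pi]$. Then $\iota_\Pi$ maps each eigenspace $V_k$ into itself. We use the same graded Jacobi manipulation as in the proof of Lemma \ref{M_is_Ssubcomplex_and_Amodule}, with the roles of $d_0$ and $\iota_\Pi$ exchanged. The key input replacing $[d_0,d_0]=0$ is $[\iota_\Pi,\iota_\Pi]=0$.

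\textbf{Step 1: $[\iota_\Pi,\iota_\Pi]=0$.} As a linear operator on $M$, $\iota_\Pi$ has bidegree $(0,-1)$, hence odd total degree. So $[\iota_\Pi,\iota_\Pi]=2\iota_\Pi^2$, and it suffices to show $\iota_\Pi^2=0$. Working pointwise in a Darboux basis as in the proof of Lemma \ref{lem:comm_omega_pi}, we have $\iota_\Pi=\sum_i\iota_{\partial_{q^i}}\iota_{\partial_{p_i}}$. The contractions $\iota_{\partial_{q^i}}$, $\iota_{\partial_{p_j}}$ pairwise graded commute, and each summand $\iota_{\partial_{q^i}}\iota_{\partial_{p_i}}$ is odd. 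Hence the square of the sum is a sum of graded commutators of odd commuting operators, which vanish. Each square $(\iota_{\partial_{q^i}}\iota_{\partial_{p_i}})^2$ also vanishes, since $\iota_{\partial_{q^i}}$ is odd and squares to zero.

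\textbf{Step 2: Jacobi.} By the graded Jacobi identity for the odd operators $\iota_\Pi$ and $d_0$,
\[
[\iota_\Pi,\Lambda]=[\iota_\Pi,[d_0,\iota_\Pi]]=[[\iota_\Pi,d_0],\iota_\Pi]-[d_0,[\iota_\Pi,\iota_\Pi]]=[[\iota_\Pi,d_0],\iota_\Pi].
\]
Since both operators are odd, $[\iota_\Pi,d_0]=[d_0,\iota_\Pi]=\Lambda$, and $\Lambda$ is even. Therefore $[\Lambda,\iota_\Pi]=-[\iota_\Pi,\Lambda]$, and the display reads $[\iota_\Pi,\Lambda]=-[\iota_\Pi,\Lambda]$. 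This gives $2[\iota_\Pi,\Lambda]=0$, so $\iota_\Pi\Lambda=\Lambda\iota_\Pi$ because $k$ has characteristic zero.

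\textbf{Step 3: conclusion.} If $\Lambda x=kx$, then $\Lambda\iota_\Pi x=\iota_\Pi\Lambda x=k\,\iota_\Pi x$. So $\iota_\Pi(V_k)\subseteq V_k$.

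Alternatively, one can check directly on the basis $dq^I\wedge dp_J$: $\iota_{\partial_{q^i}}\iota_{\partial_{p_i}}$ lowers both $|I|$ and $|J|$ by one, so it preserves the eigenvalue $n-|I|+|J|$. This is a one-line confirmation. The only delicate point is the sign bookkeeping in Steps 1–2. In particular, one must confirm that $\iota_\Pi$ is odd for the total grading on $M$ (bidegree $(0,-1)$), so that the argument is the exact mirror of Lemma \ref{M_is_Ssubcomplex_and_Amodule}. The direct eigenvalue check sidesteps these signs entirely.
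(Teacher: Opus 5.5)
Your proof is correct and matches the paper's: the graded Jacobi identity together with $[\iota_\Pi,\iota_\Pi]=0$ yields $[\iota_\Pi,\Lambda]=-[\iota_\Pi,\Lambda]$, so $\iota_\Pi$ commutes with $\Lambda$ and preserves its eigenspaces. Your Step 1 also makes explicit the vanishing $\iota_\Pi^2=0$, which the paper uses without comment here and only justifies in the following lemma (via $\Pi\wedge\Pi=0$).
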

\begin{proof}
The operator $\iota_\Pi$ commutes with $\Lambda$, since
\[
[\iota_\Pi,\Lambda]=[\iota_\Pi,[d_0,\iota_\Pi]]=
[[\iota_\Pi,d_0],\iota_\Pi]-[d_0,[\iota_\Pi,\iota_\Pi]]=[[d_0,\iota_\Pi],\iota_\Pi]=[\Lambda,\iota_\Pi]=-[\iota_\Pi,\Lambda].
\]
\end{proof}
\begin{notation}\label{not:-pi-K}
We denote by $K\colon M\to M[0,-1]$ the linear operator defined by
 \[
    K=\bigoplus_{k>0}\frac{1}{k}\iota_\Pi\bigr\vert_{V_k},
\]
and by $\pi\colon M\to M$ the linear operator given by projection on the $\Lambda$-eigenspace $V_0$ followed by the inclusion $V_0\hookrightarrow M$. 
\end{notation}
\begin{lemma}
    The pair $(\pi,K)$ from Notation \ref{not:-pi-K} is a vertical strong deformation retract for the complex $(M, d_0)$. Moreover, $K\in \Diff^{\leq 0}_A(M,M)$.
\end{lemma}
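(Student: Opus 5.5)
The plan is to work eigenspace by eigenspace, using the decomposition $M=\bigoplus_{k\geq 0}V_k$ from Lemma \ref{M_is_Ssubcomplex_and_Amodule} together with the facts that $d_0$ and $\iota_\Pi$ both preserve each $V_k$ (Lemmas \ref{M_is_Ssubcomplex_and_Amodule} and \ref{lem:Pi-preserves}). First I will check that $\pi$ and $K$ have the right bidegrees. Since $\Lambda=[d_0,\iota_\Pi]$ has bidegree $(0,0)$, each $V_k$ is bigraded, so $\pi$ has bidegree $(0,0)$ and $K$ has bidegree $(0,-1)$, like $\iota_\Pi$.

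Next comes the homotopy identity \eqref{eq:homotopy}. On $V_k$ with $k>0$ we have
\[
[d_0,K]\bigr\vert_{V_k}=\tfrac1k[d_0,\iota_\Pi]\bigr\vert_{V_k}=\tfrac1k\Lambda\bigr\vert_{V_k}=\mathrm{id}_{V_k}=(\mathrm{id}-\pi)\bigr\vert_{V_k}.
\]
On $V_0$ both $K$ and $\mathrm{id}-\pi$ vanish. To get the factorization of $\pi$ through $H(M,d_0)$, I will show that $d_0$ vanishes on $V_0$. This follows from Remark \ref{rem:zero-on-V0}: an element $\omega\wedge f\,dq^1\wedge\cdots\wedge dq^n$ contains an extra $dq^i$, so it is zero. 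Hence $(V_0,0)$ is a direct summand subcomplex of $(M,d_0)$. Moreover $H(V_k,d_0)=0$ for $k>0$, because $K\vert_{V_k}$ contracts it. Therefore $H(M,d_0)\cong V_0$, and we may take $p$ to be the projection onto $V_0$ and $j$ the inclusion. Both are chain maps, since $d_0$ preserves the summands and vanishes on $V_0$, and $p\circ j=\mathrm{id}$.

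The side conditions follow from the block structure. $K\circ\pi=0$ and $\pi\circ K=0$ hold because $K$ vanishes on $V_0$ and maps each $V_k$ into $V_k$. For $K\circ K=0$ it suffices that $\iota_\Pi^2=0$. Pointwise, $\iota_\Pi=\sum_i\iota_{\partial_{q^i}}\iota_{\partial_{p_i}}$, where $\iota_{\partial_{q^i}}$ is odd and $\iota_{\partial_{p_i}}$ is even in total degree, because $p_i$ has degree $-1$. This makes the square nonzero in general, since the contraction with $\partial_{p_i}$ acts on $dp_i$, which is even and so has polynomial powers.

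\emph{This is the main obstacle}: if $\iota_\Pi^2\neq0$, I would replace $K$ by $K'=K d_0 K$. This is the standard trick that enforces $K'^2=0$ while keeping $[d_0,K']=\mathrm{id}-\pi$ and the other side conditions, given that $\pi K=K\pi=0$. I would first try to see whether the paper's grading conventions already make $\iota_\Pi^2=0$, and otherwise use this modification. In either case the modified $K$ is a composite of $A$-linear maps.

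Finally, $K\in\Diff_A^{\leq 0}(M,M)$. Here $\iota_\Pi$ is $A$-linear, being a contraction with a bivector that acts on $\Omega$ as an endomorphism of the bundle of forms. By Lemma \ref{M_is_Ssubcomplex_and_Amodule} each $V_k$ is an $A$-submodule, and the projections onto them are $A$-linear because $\Lambda$ is $A$-linear. So $K=\sum_{k>0}\frac1k\iota_\Pi\circ\mathrm{pr}_{V_k}$ is $A$-linear. The sum is locally finite, since only the eigenvalues $0\le k\le 2n$ occur pointwise. The same argument applies to $K d_0K$ by Proposition \ref{prop:multiplicative-filtration}.
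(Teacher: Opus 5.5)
Your treatment of the homotopy identity, the factorization of $\pi$ through $V_0\cong H(M,d_0)$, the side conditions $K\pi=\pi K=0$, and the $A$-linearity of $K$ matches the paper's. The gap is in $K\circ K=0$.

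You assert that $\iota_\Pi^2\neq 0$ in general and propose replacing $K$ by $Kd_0K$. The assertion is false. The replacement is a legitimate way to get \emph{some} strong retract, given $\pi K=K\pi=0$ and $\pi d_0=0$. It does not prove the lemma, however, because the lemma concerns the specific $K$ of Notation \ref{not:-pi-K}.

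It is true that $\iota_{\partial_{p_i}}$ is even and $\iota_{\partial_{p_i}}^2\neq 0$. But you overlooked the following. Since the even operators $\iota_{\partial_{p_i}}$ commute with the odd $\iota_{\partial_{q^j}}$ without sign,
\[
\iota_\Pi^2=\sum_{i,j}\iota_{\partial_{q^i}}\iota_{\partial_{p_i}}\iota_{\partial_{q^j}}\iota_{\partial_{p_j}}=\sum_{i,j}\bigl(\iota_{\partial_{q^i}}\iota_{\partial_{q^j}}\bigr)\bigl(\iota_{\partial_{p_i}}\iota_{\partial_{p_j}}\bigr).
\]
Each term pairs a factor antisymmetric in $(i,j)$ with one symmetric in $(i,j)$, so the sum vanishes. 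In particular the diagonal terms die because $\iota_{\partial_{q^i}}^2=0$.

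This is the explicit form of the paper's one-line argument: $\iota_\Pi\circ\iota_\Pi=\iota_{\Pi\wedge\Pi}$, and $\Pi\wedge\Pi=0$ because $\Pi$ is odd ($\iota_\Pi$ has total degree $-1$). So no modification is needed. In fact your modification is a no-op: on $V_k$ with $k>0$,
\[
\iota_\Pi d_0\iota_\Pi=\iota_\Pi\Lambda-\iota_\Pi^2 d_0=k\,\iota_\Pi,
\]
hence $Kd_0K=K$.

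There is also a minor inconsistency. Your claim that only the eigenvalues $0\leq k\leq 2n$ occur contradicts your own (correct) remark that $dp_i$ is even. The eigenvalue $n-|I|+|J|$ is unbounded because $|J|$ is. Nothing depends on this: $M=\bigoplus_k V_k$ is a direct sum, so each element has finitely many components. Within a fixed form degree $f$ one even has $k=n-f+2|J|\leq n+f$.
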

\begin{proof}
By Remark \ref{rem:zero-on-V0}, the differential $d_0$ vanishes on $V_0$.
  By Lemmas \ref{M_is_Ssubcomplex_and_Amodule} and \ref{lem:Pi-preserves}, both $d_0$ and $\iota_\Pi$ preserve the $\Lambda$-eigenspaces $V_k$. Hence,
\[
[d_0,K]=\bigoplus_{k>0}[d_0,\frac{1}{k}\ \iota_\Pi]\bigr\vert_{V_k}=\bigoplus_{k>0}\mathrm{Id}_{V_k}=\mathrm{Id}_M-\pi.
\]
Therefore $(V_0,0)\hookrightarrow (M,d_0)$ is a quasi-isomorphism. Since $K$ preserves the $\Lambda$-eigenspaces $V_k$ and vanishes on $V_0$, the pair $(\pi,K)$ trivially satisfies the side conditions $K\circ \pi=0$ and $\pi\circ K=0$. To check that $K\circ K=0$ we only need to check this on each eigenspace $V_k$. If $k=0$ the identity $K\circ K\vert_{V_0}=0$ is trivially satisfied. For $k\neq 0$ we find
\[
K\circ K\bigr\vert_{V_k} =  \frac 1 {k^2} \iota_\Pi \circ  \iota_\Pi \bigr\vert_{V_k} = \ \frac 1 {k^2} \iota^{}_{\Pi\wedge \Pi} \bigr \vert_{V_k} =0,
\]
since $\Pi$ is odd, and so it satisfies $\Pi\wedge \Pi=0$. Finally, since the decomposition of $M$ into the direct sum of the $\Lambda$-eigenspaces $V_k$ is a decomposition in sub-$A$-modules, to check the $A$-linearity of $K$ we only need to check that the restriction of $K$ to each $V_k$ is $A$-linear. This is clear, since $\iota_\Pi$ is $A$-linear.
\end{proof}
\begin{corollary}\label{cor:acyclic}
All of the subcomplexes $(V_k,d_0)$, with the exception of $V_0$ are acyclic.
\end{corollary}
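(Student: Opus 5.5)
The plan is to use directly the identity established in the proof of the preceding Lemma, namely
\[
[d_0,K]=\mathrm{Id}_M-\pi ,
\]
together with the fact (Lemmas \ref{M_is_Ssubcomplex_and_Amodule} and \ref{lem:Pi-preserves}) that $d_0$, $\iota_\Pi$, and hence $K$, all preserve each $\Lambda$-eigenspace $V_k$. Restricting this identity to a fixed $V_k$ gives an operator identity on the complex $(V_k,d_0)$ alone, because $K$ restricts to an endomorphism $K_k=\frac1k\iota_\Pi\bigr\vert_{V_k}$ of degree $-1$ of $V_k$.

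First I fix $k>0$ and note that $\pi$ vanishes on $V_k$, since $\pi$ is the projection onto $V_0$ along $\bigoplus_{k>0}V_k$. The restricted identity therefore reads
\[
d_0\circ K_k+K_k\circ d_0=\mathrm{Id}_{V_k},
\]
so $K_k$ is a contracting homotopy of $(V_k,d_0)$. Explicitly, if $x\in V_k$ satisfies $d_0x=0$, then $x=d_0(K_kx)$ with $K_kx\in V_k$. Hence every cocycle of $(V_k,d_0)$ is a coboundary, and $H(V_k,d_0)=0$.

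Equivalently, one can observe that $\Lambda=[d_0,\iota_\Pi]$ acts as $k\cdot\mathrm{Id}$ on $V_k$. Since $\Lambda$ is itself $d_0$-null-homotopic with homotopy $\iota_\Pi$, multiplication by $k$ induces zero in cohomology. Because $k\neq 0$ is invertible in the characteristic zero field, the cohomology of $V_k$ vanishes.

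For the exceptional case $V_0$, I note that $d_0$ vanishes on $V_0$ (Remark \ref{rem:zero-on-V0}), so $H(V_0,d_0)=V_0$. This is generally nonzero; for instance, it contains the local top forms $f\,dq^1\wedge\cdots\wedge dq^n$. This explains why $V_0$ is excluded from the statement.

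There is no real obstacle here. The only point requiring care is that $K$ genuinely restricts to $V_k$, so that the homotopy identity holds inside the subcomplex rather than only in $M$; this is exactly what Lemma \ref{lem:Pi-preserves} provides.
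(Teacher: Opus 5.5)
Your proof is correct and is essentially the paper's argument: the corollary is read off from the identity $[d_0,K]=\bigoplus_{k>0}\mathrm{Id}_{V_k}=\mathrm{Id}_M-\pi$ established in the preceding lemma. Since $d_0$ and $\iota_\Pi$ preserve each $V_k$, the restriction $\frac{1}{k}\iota_\Pi\bigr\vert_{V_k}$ is a contracting homotopy of $(V_k,d_0)$ for every $k>0$.
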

\begin{lemma}\label{lem:even-columns}
We have 
\[
M^{\mathrm{even},\bullet}=\bigoplus_{k \, \mathrm{even}}V_k; \qquad M^{\mathrm{odd},\bullet}=\bigoplus_{k \, \mathrm{odd}}V_k;
\]
\end{lemma}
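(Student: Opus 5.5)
We verify the statement pointwise on the monomial basis $dq^I\wedge dp_J$ from the proof of Lemma \ref{lem:comm_omega_pi}, computing the column index $p$ of a monomial element and comparing its parity with that of the $\Lambda$-eigenvalue. Both $\Lambda$ and the bigrading are $A$-linear, so this suffices.

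Take a homogeneous form $\eta=f\,dq^I\wedge dp_J$ with $f$ a function of internal degree $i_f$. The form degree is $|I|+|J|$. In the grading conventions here, $q^i$ has degree $0$ and $p_i$ has degree $-1$, so $dq^i$ has bidegree $(1,0)$ and $dp_i$ has bidegree $(1,-1)$. The internal degree of $\eta$ is therefore $i=i_f-|J|$. Since $f$ is a function of the $p$'s (polynomial in the odd coordinates), $i_f\le 0$. The integer $i_f$ enters $p=f+2i+n$ only through $2i_f$, so it does not affect parity.

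This gives
\[
p=|I|+|J|+2i_f-2|J|+n\equiv n+|I|+|J| \pmod 2,
\]
while by Lemma \ref{lem:comm_omega_pi} the $\Lambda$-eigenvalue is
\[
k=n-|I|+|J|\equiv n+|I|+|J| \pmod 2.
\]
Hence $p\equiv k \pmod 2$ on every monomial. So $V_k\subseteq M^{\mathrm{even},\bullet}$ for $k$ even and $V_k\subseteq M^{\mathrm{odd},\bullet}$ for $k$ odd.

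Since $M=\bigoplus_k V_k$, and also $M=M^{\mathrm{even},\bullet}\oplus M^{\mathrm{odd},\bullet}$, these inclusions force the two equalities. Concretely, if $x\in M^{\mathrm{even},\bullet}$, decompose $x=\sum_k x_k$ with $x_k\in V_k$. The odd-$k$ components lie in $M^{\mathrm{odd},\bullet}$, so they must vanish, and $x\in\bigoplus_{k\,\mathrm{even}}V_k$. The odd case is symmetric.

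The step needing care is justifying that the $\Lambda$-eigenspace decomposition, defined pointwise on monomials, is compatible with the $(p,q)$-bigrading globally. This holds because the monomials $dq^I\wedge dp_J$ with coefficient functions of fixed degree are simultaneously homogeneous for $(f,i)$ and eigenvectors for $\Lambda$. Darboux changes of coordinates preserve both the bigrading and $\Lambda$, the latter being globally defined as $[d_0,\iota_\Pi]$, so the local computation globalizes. The remaining work is only checking the degree conventions for $dp_i$, which I would verify against $\omega=\sum_i dq^i\wedge dp_i$ having bidegree $(2,-1)$.
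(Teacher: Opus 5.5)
Your proof is correct and takes essentially the same approach as the paper. Like the paper, you compare $p=f+2i+n\equiv f+n \pmod 2$ with $k=n-|I|+|J|\equiv n+|I|+|J| \pmod 2$ on the monomials $dq^I\wedge dp_J$. Your tracking of the coefficient degree $i_f$ and your explicit direct-sum argument for upgrading the inclusions to equalities simply spell out what the paper leaves as ``the conclusion follows''.
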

\begin{proof}
 In the notation used in the proof of Lemma \ref{lem:comm_omega_pi}, the form degree $f$ of an element in $V_k$ is $|I|+|J|$. Since $k=n-|I|+|J|$, we find $f+n\equiv k \mod 2$. On the other hand, the $(p,q)$-grading on $M$ is related to the $(f,i)$-grading on $\Omega$ by $p=f+2i+n$ and $q=-i$, so that $p\equiv f+n\mod 2$. Therefore, $p\equiv k\mod 2$, and the conclusion follows.
\end{proof}
\begin{corollary}\label{cor:E1=E2}
 The odd columns of $M$ are $d_0$-acyclic. As a consequence the odd columns of $E_1=H(M,d_0)$ are zero, hence the differential $d_1\colon E_1\to E_1$ is zero.   
\end{corollary}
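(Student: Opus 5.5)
The plan is to read everything off Lemma \ref{lem:even-columns} and Corollary \ref{cor:acyclic}, with no extra computation. By Lemma \ref{lem:even-columns}, the direct sum of the odd columns is
\[
M^{\mathrm{odd},\bullet}=\bigoplus_{k\ \mathrm{odd}}V_k .
\]
Each $V_k$ is a $d_0$-subcomplex by Lemma \ref{M_is_Ssubcomplex_and_Amodule}. Since $d_0$ has bidegree $(0,1)$, it preserves the column index $p$, so each individual column $M^{p,\bullet}$ is a subcomplex of $(M,d_0)$. Hence each odd column splits as
\[
(M^{p,\bullet},d_0)=\bigoplus_{k\ \mathrm{odd}}(V_k\cap M^{p,\bullet},d_0).
\]
To justify this splitting, I would note that $\Lambda=[d_0,\iota_\Pi]$ has bidegree $(0,0)$, so the $\Lambda$-eigenspace decomposition is compatible with the bigrading.

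Next comes acyclicity. Every odd $k$ is nonzero, so Corollary \ref{cor:acyclic} gives that $(V_k,d_0)$ is acyclic. To pass to the pieces $V_k\cap M^{p,\bullet}$, I would argue with the homotopy directly. The operator $K$ has bidegree $(0,-1)$ and preserves both $V_k$ and the column index. On $V_k$ with $k>0$ we have $[d_0,K]=\mathrm{id}$. Therefore any $d_0$-cocycle $x$ in an odd column satisfies $x=d_0Kx$, with $Kx$ lying in the same column. So every odd column is $d_0$-acyclic, and
\[
E_1^{p,\bullet}=H(M^{p,\bullet},d_0)=0 \quad \text{for $p$ odd}.
\]

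Finally, the differential $d_1\colon E_1\to E_1$ has bidegree $(1,0)$, so it maps column $p$ to column $p+1$. One of any two consecutive columns is odd, and the odd columns of $E_1$ vanish. Hence every component $E_1^{p,q}\to E_1^{p+1,q}$ has either zero source or zero target, and $d_1=0$.

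The only step needing care is that the eigenspace decomposition respects the columns, i.e.\ that $\Lambda$ has bidegree $(0,0)$. This holds because $d_0$ and $\iota_\Pi$ have opposite bidegrees $(0,1)$ and $(0,-1)$.
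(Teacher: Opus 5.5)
Your proof is correct and follows the paper's argument: Lemma \ref{lem:even-columns} identifies the odd columns with $\bigoplus_{k\ \mathrm{odd}}V_k$, Corollary \ref{cor:acyclic} gives their $d_0$-acyclicity, and the bidegree $(1,0)$ of $d_1$ then forces $d_1=0$ on $E_1$. Your extra column-by-column check via $K$ (and the remark that $\Lambda$ has bidegree $(0,0)$) is valid but not needed: $d_0$ preserves the column index, so the sum of the odd columns is a direct sum of $d_0$-subcomplexes, and its acyclicity already forces each odd column to be acyclic.
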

\begin{proof}
 Immediate from Lemma \ref{lem:even-columns}  and Corollary \ref{lem:even-columns}.  
\end{proof}
 Recalling Lemma \ref{lemma:zero-d_1-implies-level_geq2}, all of the above can be summarized in the following statement, which is the main result of this Section.
 \begin{theorem}
   Let $(\mathcal{M},\omega)$ be an odd symplectic manifold of dimension $(n|n)$. Assume the notation from above. Then $(M,d_0,d_1,\pi,K,A,\rho)$ is a \v{S}evera bicomplex of level $\geq 2$.
 \end{theorem}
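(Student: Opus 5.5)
The proof assembles the results of this section against the items of Definition \ref{def:severa-bicomplex}, and then adds the level condition. The plan is to check the seven items in order.

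First, $(M,d_0,d_1)$ is a bicomplex with differentials of bidegrees $(0,1)$ and $(1,0)$ by Lemma \ref{M_is_bicomplex}. For the bidegree of $d_0$ we use the affine regrading $p=f+2i+n$, $q=-i$. Multiplication by $\omega$ changes $(f,i)$ by $(2,-1)$, so it changes $(p,q)$ by $(2-2,1)=(0,1)$. Likewise $d$ changes $(f,i)$ by $(1,0)$, so it changes $(p,q)$ by $(1,0)$.

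Second, $(\pi,K)$ is a vertical strong deformation retract by the lemma following Notation \ref{not:-pi-K}. The required factorization of $\pi$ is through $V_0\cong H(M,d_0)$. Here $p$ is the projection onto $V_0$ and $j$ is the inclusion. Both are chain maps because $d_0$ preserves each $V_k$ (Lemma \ref{M_is_Ssubcomplex_and_Amodule}) and vanishes on $V_0$. The quasi-isomorphism statement in the same lemma identifies $V_0$ with the cohomology.

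Third, $A=C^\infty(\mathcal{M})$ is graded commutative, and $\rho$ is a graded $A$-module structure on $\mathrm{tot}(M)$ by Remark \ref{rem:A-module}. Fourth, $d_0\in\Diff_A^{\leq0}$ and $d_1\in\Diff_A^{\leq1}$ by Lemma \ref{lem:leibniz}, and $K\in\Diff_A^{\leq0}$ again by the lemma following Notation \ref{not:-pi-K}. Bidegree $(0,-1)$ of $K$ holds because $\iota_\Pi$ has $(f,i)$-bidegree $(-2,1)$, hence $(p,q)$-bidegree $(0,-1)$, and the eigenspace projections preserve the bigrading.

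Finally, for level $\geq2$ I invoke Lemma \ref{lemma:zero-d_1-implies-level_geq2}. It requires $d_1\colon E_1\to E_1$ to be zero, which is Corollary \ref{cor:E1=E2}. That corollary rests on Lemma \ref{lem:even-columns} together with Corollary \ref{cor:acyclic}: since odd columns consist of the $V_k$ with $k$ odd, hence $k>0$, they are acyclic. So $E_1$ is concentrated in even columns, and $d_1$, having horizontal degree $1$, must vanish on $E_1$.

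The one point needing care is whether the $V_k$ decomposition respects the bigrading, since Lemma \ref{lem:even-columns} and the bidegree claims for $\pi$ and $K$ use it. I would justify this from the proof of Lemma \ref{lem:comm_omega_pi}: $\Lambda$ acts diagonally on homogeneous local basis forms $f\,dq^I\wedge dp_J$, and these are bihomogeneous. Hence each bigraded piece $M^{p,q}$ splits into $\Lambda$-eigenspaces, and $V_k=\bigoplus_{p,q}(V_k\cap M^{p,q})$. With that in place, the argument is only bookkeeping.
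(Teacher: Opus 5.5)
Your proposal is correct and takes the same route as the paper, which obtains the theorem by assembling Lemma \ref{M_is_bicomplex}, Remark \ref{rem:A-module}, Lemma \ref{lem:leibniz} and the strong-retract lemma for $(\pi,K)$, and then feeding Corollary \ref{cor:E1=E2} into Lemma \ref{lemma:zero-d_1-implies-level_geq2}. Your extra checks make explicit what the paper leaves implicit: the bidegree bookkeeping under $p=f+2i+n$, $q=-i$, and the compatibility of the $\Lambda$-eigenspace decomposition with the bigrading (the latter also follows directly from $\Lambda=[d_0,\iota_\Pi]$ having bidegree $(0,0)$).
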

A striking property of the \v{S}evera bicomplex of differential forms on the odd symplectic manifold $(\mathcal{M},\omega)$ is the following.
 
\begin{theorem}[\v{S}evera] Let $(\mathcal{M},\omega)$ and $(M,d_0,d_1,\pi,K,A,\rho)$ as above. Then $E_2$ is canonically isomorphic to the $A$-module $\Dens^{1/2}\mathcal{M}$ (of half-densities on the graded manifold $\mathcal{M})$. In particular $E_2$ is a free rank 1 $A$-module. 
\end{theorem}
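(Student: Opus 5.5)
By Corollary \ref{cor:E1=E2} and Lemma \ref{lemma:zero-d_1-implies-level_geq2} we have $E_2=E_1=H(M,d_0)$, with the $A$-module structure induced from $\ker(d_0)$. By the strong deformation retract of Notation \ref{not:-pi-K}, this is identified, $A$-linearly, with $(V_0,0)$. The retract is $A$-linear because $\pi$ and $K$ are. So the plan is to identify $V_0$, as an $A$-module, with half-densities on $\mathcal{M}$, and to check that the identification is canonical, i.e.\ independent of Darboux charts.

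\textbf{Step 1 (local freeness).} By Remark \ref{rem:zero-on-V0}, on a Darboux chart $U$ every element of $V_0$ is of the form $f(\vec p,\vec q)\,dq^1\wedge\cdots\wedge dq^n$. Hence $V_0\vert_U$ is free of rank one over $C^\infty(U)$. Note that $dq^1\wedge\cdots\wedge dq^n$ has internal degree $0$. Under the regrading $p=f+2i+n$, $q=-i$ it sits in bidegree $(2n,0)$, which is total degree $2n$. If needed, a global shift normalizes this to degree $0$ so that it can be a basis element.

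\textbf{Step 2 (transition law).} Let $(q',p')$ be another Darboux chart. Compare $dq'^1\wedge\cdots\wedge dq'^n$ with $dq^1\wedge\cdots\wedge dq^n$ modulo $\mathrm{Im}(d_0)=\omega\wedge M$, working inside $H(M,d_0)$; it is more robust to compare classes rather than to use the projection $\pi$. Expanding
\[
dq'^i=\frac{\partial q'^i}{\partial q^j}dq^j+\frac{\partial q'^i}{\partial p_j}dp_j,
\]
one has to show that the class of $dq'^1\cdots dq'^n$ equals $\sqrt{\mathrm{Ber}(\partial(q',p')/\partial(q,p))}$ times the class of $dq^1\cdots dq^n$, up to sign or orientation. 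This is exactly the transformation rule of half-densities.

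The key input is that a Darboux change of coordinates is a symplectomorphism. Its Jacobian $J$ therefore satisfies the odd-symplectic condition, which forces $\mathrm{Ber}(J)=\det(\partial q'/\partial q)^2$ modulo corrections involving $\partial q'/\partial p$. The terms containing $dp$'s must be shown to be $d_0$-exact modulo $V_0$, i.e.\ to project to zero under $\pi$. Only the $\det(\partial q'/\partial q)$ part survives, but this holds only up to nilpotent odd corrections coming from the dependence on the odd coordinates $p$. I expect this to be the main obstacle. I would first try to handle it by factoring a general symplectomorphism into generating-function type changes, namely point transformations $q'=q'(q)$ with $p$ transformed by the inverse transpose, and odd Hamiltonian flows. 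For the former the claim is immediate, since $\mathrm{Ber}=\det(\partial q'/\partial q)^2$. For the latter, one checks the infinitesimal version: the Lie derivative $\mathcal{L}_{X_H}$ acting on the class of $dq^1\cdots dq^n$ equals $\tfrac12\mathrm{div}(X_H)$ times that class, using that $\mathcal{L}_{X_H}$ commutes with $d_0$ because $\omega$ is preserved.

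\textbf{Step 3 (canonicity and conclusion).} Step 2 shows that the local isomorphisms $V_0\vert_U\cong \Dens^{1/2}(U)$, $f\,dq^1\cdots dq^n\mapsto f\,|D(q,p)|^{1/2}$, glue to a global $A$-module isomorphism $E_2\cong\Dens^{1/2}\mathcal{M}$ that does not depend on choices. Freeness of rank one then follows because the Berezinian line bundle of a graded manifold with odd part of this type is trivial, and so is its square root. Alternatively, one can use a global nowhere vanishing half-density, which exists since $\Dens(\mathcal{M})$ of a split odd cotangent bundle $\Pi T^*N$ is $(\det T^*N)^{\otimes 2}$, whose square root $|\det T^*N|$ is trivial.
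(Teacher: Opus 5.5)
Your plan is the paper's direct argument. You pass from $E_2=E_1=H(M,d_0)$ to $V_0$, use the local form $f\,dq^1\wedge\cdots\wedge dq^n$, and check that the coefficient changes by the square root of the Berezinian. The paper also sketches an alternative via Manin's cohomological Berezinian, which you do not use.

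The step you single out as the main obstacle does not arise in the setting of the statement. Here $\mathcal{M}$ is $\mathbb{Z}$-graded, with $q^i$ of degree $0$ and $p_i$ of degree $-1$, and coordinate changes preserve degree. A degree $0$ function cannot involve the $p$'s, and a degree $-1$ function is linear in them. So any change of Darboux charts has the form $q'=q'(q)$, $p'_i=A_i^j(q)p_j$, and the Darboux condition forces $A=((\partial q'/\partial q)^{-1})^{T}$: it is a cotangent lift. Hence $\partial q'/\partial p=0$, and $dq'^1\wedge\cdots\wedge dq'^n=\det(\partial q'/\partial q)\,dq^1\wedge\cdots\wedge dq^n$ holds exactly, with no $dp$-terms to dispose of. Also $\mathrm{Ber}=\det(\partial q'/\partial q)^2$. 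All of Step~2 is therefore your ``immediate'' point-transformation case. The factorization through odd Hamiltonian flows matters only for $\mathbb{Z}_2$-graded supermanifolds (\v{S}evera's original setting). Even there, your sketch would still need to integrate the infinitesimal identity and to prove that such a factorization exists.

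The genuine gap is in Step~3, and the paper's sketch has it too. Your transition factor is $\det(\partial q'/\partial q)$, whereas $|D(q,p)|^{1/2}$ changes by $|\det(\partial q'/\partial q)|$. So the local maps $f\,dq^1\wedge\cdots\wedge dq^n\mapsto f\,|D(q,p)|^{1/2}$ glue only over an oriented Darboux atlas. In general $E_2$ is $\Dens^{1/2}\mathcal{M}$ twisted by the orientation bundle of the body $N$. Its top-degree part is $\Omega^n(N)$, not the densities of $N$.

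Consequently your freeness argument, via triviality of $|\det T^*N|$, does not apply to $E_2$. A homogeneous generator would have to be a nowhere vanishing top form on $N$. So $E_2$ is free exactly when $N$ is orientable; for instance, it is not free for $\mathcal{M}=T^*[-1]\mathbb{RP}^2$. You need to assume, and fix, an orientation of $N$. That same choice is what makes the identification with half-densities well defined.

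Your remark that $dq^1\wedge\cdots\wedge dq^n$ sits in total degree $2n$ is correct, and the paper's statement also skips it. What one gets, for oriented $N$, is $E_2\cong A[-2n]$ rather than $A$.
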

\begin{proof}
Since $E_2=E_1$ one is reduced to showing that $E_1=H(M,d_0)$ is naturally isomorphic to the $C^\infty(\mathcal{M})$-module $\Dens^{1/2}(\mathcal{M})$ of half-densities on the graded manifold $\mathcal{M}$. This has been shown by \v{S}evera in \cite{SeveraNonommutativeDifferentialForms}. For completeness, here we reproduce a sketch of the proof, following the exposition in \cite{Severa_2006}. A direct way of proving the statement consists in identifying $H(M,d_0)$ with $V_0$ by the quasi-isomorphism $(V_0,0)\hookrightarrow (M,d_0)$, and noticing that in every Darboux chart an element in $V_0$ has a unique local expression of the form \eqref{eq:local}. Then one checks that under a change of Darboux charts the coefficient $f\in C^\infty(\mathcal{M})$ gets multiplied by the square root of the Berezinian of the change of coordinates. A more conceptual proof uses Manin's  cohomological definition of the Berezinian \cite{manin}, directly linking half-densities to the cohomology of $(M,d_0)$. Given a finite dimensional super vector space $V$, Manin considers the super vector space $W=V\oplus V^*[1]$, endowed with the canonical degree $-1$ odd nondegenerate bilinear form $\omega\in \bigwedge^2W^*$ given by the pairing between $V$ and $V^*$. This induces the differential $d_0=\omega\wedge-$ on the exterior algebra $\bigwedge^\bullet W^*$ so Manin defines $\mathrm{Ber}(V)$ as $\mathrm{Ber}(V)=H(\bigwedge^\bullet W^*, d_0)$. Starting with $V^*[1]$, one has $(V^*[1])^*[1]=V$, so that $V^*[1]\oplus (V^*[1])^*[1]$ is canonically isomorphic to $W$ and one finds a canonical isomorphism between $\mathrm{Ber}(V)$ and $\mathrm{Ber}(V^*[1])$. Manin then shows that $\mathrm{Ber}$ is an exponential functor, i.e., that given two super vector spaces there is a canonical isomorphism $\mathrm{Ber}(V_1\oplus V_2)\cong \mathrm{Ber}(V_1)\otimes \mathrm{Ber}(V_2)$. This gives a canonical isomorphism
\[
\mathrm{Ber}(W)=\mathrm{Ber}(V\oplus V^*[1])\cong \mathrm{Ber}(V)\otimes \mathrm{Ber}(V^*[1])\cong \mathrm{Ber}(V)^{\otimes 2}=H(\textstyle{\bigwedge^\bullet W^*}, d_0)^{\otimes 2}.
\]
Since every odd symplectic vector space $W$ can be split as $W=V\oplus V^*[1]$ by choosing $V$ to be a Lagrangian subspace of $(W,\omega)$, this associates with any Lagrangian splitting of $W$ a distinguished isomorphism $H(\bigwedge^\bullet W^*, d_0)^{\otimes 2}\cong \mathrm{Ber}(W)$. One then shows that this isomorphism is independent of the Lagrangian splitting, thus giving a canonical isomorphism $H(\bigwedge^\bullet W^*, d_0)\cong \mathrm{Ber}^{1/2}(W)$. Taking $W=T_x\mathcal{M}$ at each point of the odd symplectic supermanifold $\mathcal{M}$ concludes \v{S}evera's construction of the canonical isomorphism $E_2\cong \Dens^{1/2}(\mathcal{M})$.
 \end{proof}

\begin{corollary}
$C^\infty(\mathcal{M})$ carries a canonical $P_0$-algebra structure induced by the \v{S}evera bicomplex. In particular, $C^\infty(\mathcal{M})$ with the opposite grading is canonically a Gerstenhaber algebra.
\end{corollary}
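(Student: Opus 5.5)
The plan is to assemble the corollary from results already established; no new computation should be needed. By the preceding Theorem, $(M,d_0,d_1,\pi,K,A,\rho)$ with $A=C^\infty(\mathcal{M})$ is a \v{S}evera bicomplex of level $\geq 2$. By \v{S}evera's Theorem, $E_2$ is a free rank 1 graded $A$-module, being canonically isomorphic to $\Dens^{1/2}\mathcal{M}$. These are exactly the hypotheses of Corollary~\ref{cor:P0-from-E2}. That corollary then gives a $P_0$-algebra structure on $A$, and hence a Gerstenhaber algebra structure on $A$ with the opposite grading.

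To make the argument transparent, I would spell out the chain behind Corollary~\ref{cor:P0-from-E2} in this instance. First, Proposition~\ref{prop:in-diff-n} gives $d_2\in\Diff^{\leq 2}_A(E_2,E_2)$. Here $d_2$ is induced on the subquotient $Z_2/B_2=\ker(d_0)/\mathrm{Im}(d_0)$ by $\delta_2=d\circ K\circ d$. Second, $d_2$ has total degree $1$ and squares to zero, so $(E_2,d_2)$ is a $P_0$-module. Lemma~\ref{lem:p0-module} then yields the $P_0$-bracket, using Remark~\ref{rem:locally-free} to identify $\Diff_A(E_2,E_2)$ with $\Cartan_A(E_2,E_2)$. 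The Gerstenhaber statement follows from the Corollary after Proposition~\ref{prop:p0} by reversing the grading.

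The step needing care, and the main obstacle, is the word \emph{canonical}. The $P_0$-bracket is defined as the derived bracket $\iota_{\{a,b\}}=[[\iota_a,d_2],\iota_b]$ on $E_2$, which involves no choice of basis element of $E_2$. The only potential choices are the retract data $(\pi,K)$ and the identification with half-densities.
\begin{itemize}
\item \emph{The retract $(\pi,K)$.} I would argue that $d_2\colon E_2\to E_2$ is the intrinsic second differential of the spectral sequence of the bicomplex $(M,\omega\wedge-,d)$. By Lemma~\ref{lem:towards-spectral}, the operator $\delta_2$ merely represents this differential on the subquotient, and other choices of $d_0$-primitives change representatives by elements of $B_2$.
\item \emph{The $A$-module structure on $E_2$.} This structure is induced by that of $M$ through the submodules $Z_2$ and $B_2$, so it is also independent of choices.
\item \emph{The isomorphism with $\Dens^{1/2}\mathcal{M}$.} This is only needed for freeness and is itself canonical. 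In any case, the bracket is defined on $E_2$ directly, and transporting along an isomorphism of $A$-modules does not change it.
\end{itemize}
With these points checked, the canonicity follows and the corollary is proved.
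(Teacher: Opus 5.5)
Your proposal is correct and follows the paper's route exactly. The paper states this corollary without a separate proof, as the combination of three results: the theorem that the \v{S}evera bicomplex of $(\mathcal{M},\omega)$ has level $\geq 2$, \v{S}evera's theorem that $E_2\cong\Dens^{1/2}\mathcal{M}$ is free of rank 1, and Corollary~\ref{cor:P0-from-E2}. Your extra remarks on canonicity are sound and make explicit what the paper leaves implicit: $Z_2$, $B_2$ and $d_2$ do not depend on $(\pi,K)$, and the bracket is invariant under transport along $A$-module isomorphisms.
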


{
\begin{corollary}
The space  $\Dens^{1/2}(\mathcal{M})$ of half-densities on an odd symplectic manifold $(\mathcal{M},\omega)$ is naturally a $P_0$-module for the graded commutative algebra $C^\infty(\mathcal{M})$. In particular, via the \v{S}evera's canonical isomorphism $E_2\cong \Dens^{1/2}(\mathcal{M})$, the differential $d_2$ defines a canonical $BV_0$-operator $\Delta_{\mathrm{Dens}^{1/2}}$ on the space of half-densities. Taking the opposite grading, one gets a canonical Batalin-Vilkovisky operator on the space of half-densities.
\end{corollary}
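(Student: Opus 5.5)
The statement follows by assembling results already proved, so the plan is to chain them in order. First, by the theorem stating that $(M,d_0,d_1,\pi,K,A,\rho)$ is a \v{S}evera bicomplex of level $\geq 2$, Proposition \ref{prop:in-diff-n} gives $d_2\in\Diff_A^{\leq 2}(E_2,E_2)$. The $A$-module structure on $E_2$ used there is the one induced from $M$ via the submodules $Z_2=\ker(d_0)$ and $B_2=\mathrm{Im}(d_0)$ (Lemma \ref{lemma:zero-d_1-implies-level_geq2}). Since $d_2$ has total degree $1$ and ${d_2}^2=0$, and $E_2$ is free of rank 1 by \v{S}evera's theorem, $(E_2,d_2)$ is a $P_0$-module for $A=C^\infty(\mathcal{M})$. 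It is then a $BV_0$-operator on $E_2$ by the Corollary following Proposition \ref{prop:in-diff-n}, using Remark \ref{rem:locally-free} to identify $\Diff_A=\Cartan_A$ for free rank 1 modules.

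Second, I transport this structure along \v{S}evera's canonical isomorphism $\varphi\colon E_2\xrightarrow{\sim}\Dens^{1/2}(\mathcal{M})$ and set $\Delta_{\Dens^{1/2}}=\varphi\circ d_2\circ\varphi^{-1}$. The one point needing care is that $\varphi$ is an isomorphism of graded $A$-modules, not merely of vector spaces. In the Darboux-chart description, $\varphi$ sends the class of $f\,dq^1\wedge\cdots\wedge dq^n\in V_0$ to the half-density with coefficient $f$, and this assignment is manifestly $C^\infty(\mathcal{M})$-linear. In Manin's description, $\varphi$ is the fibrewise isomorphism $H(\bigwedge^\bullet T^*_x\mathcal{M},d_0)\cong\mathrm{Ber}^{1/2}(T_x\mathcal{M})$ extended to sections, and it is linear over functions. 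With $A$-linearity in hand, the argument of the lemma on conjugation of Cartan operators by module isomorphisms, read at the level of $\Diff_A$, shows that conjugation by $\varphi$ preserves the Grothendieck filtration. It is proved by induction on order using $[\varphi,\rho_a]=0$. Conjugation also clearly preserves degree and the relation $\Delta^2=0$. Hence $(\Dens^{1/2}(\mathcal{M}),\Delta_{\Dens^{1/2}})$ is a $P_0$-module and $\Delta_{\Dens^{1/2}}$ is a $BV_0$-operator.

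Canonicity needs its own check, because $K$ involves a choice-free but explicit eigenspace decomposition. Since $d_2$ is the differential on the $E_2$-page of the spectral sequence of $(M,d_0,d_1)$, it depends only on the bicomplex and not on $(\pi,K)$. Moreover $M$, $d_0=\omega\wedge-$ and $d_1=d$ are intrinsic to $(\mathcal{M},\omega)$, and $\varphi$ is canonical by \v{S}evera's theorem, so $\Delta_{\Dens^{1/2}}$ involves no choices. Finally, reversing the grading turns a degree $1$ square-zero second-order operator into a degree $-1$ one, which is a Batalin--Vilkovisky operator by definition.

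I expect the main obstacle to be confirming that \v{S}evera's isomorphism is $A$-linear with the \emph{affinely regraded} module structure of Remark \ref{rem:A-module}, including signs and degree shifts. To settle this, I would check it in Darboux charts on $V_0$, where $a\cdot(f\,dq^1\cdots dq^n)=(af)\,dq^1\cdots dq^n$ up to a sign $(-1)^{\deg a\cdot n\cdot 0}$. That sign is trivial because the $dq^i$ have total degree $1$ but they appear to the right of $a$. Everything else is formal.
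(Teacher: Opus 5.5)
Your proposal is correct and is essentially the paper's argument. The paper gives no separate proof of this corollary; it combines three results: the theorem that the odd symplectic \v{S}evera bicomplex has level $\geq 2$, \v{S}evera's theorem that $E_2\cong \Dens^{1/2}(\mathcal{M})$ as $A$-modules (so $E_2$ is free of rank $1$), and the general corollary that $(E_2,d_2)$ is then a $P_0$-module with $d_2$ a $BV_0$-operator. Your transport-of-structure and canonicity checks spell out what the paper leaves implicit. The sign worry in your last paragraph is vacuous, and your stated reason for it is muddled: $a\wedge(f\,dq^1\wedge\cdots\wedge dq^n)=(af)\,dq^1\wedge\cdots\wedge dq^n$ is just associativity, with no reordering of factors.
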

}
In order to prove the next result, it will be useful the following technical lemma.
\begin{lemma}\label{lem:p0-suffices}
In the notation above, let $\varphi\in \mathrm{End}_k(E_2)$ and let $\widetilde{\varphi}\colon M\to M$ be a $k$-linear operator inducing $\varphi$, that is, $\widetilde{\varphi}(Z_2)\subseteq Z_2$, $\widetilde{\varphi}(B_2)\subseteq B_2$ and the induced operator on the quotient $E_2=Z_2/B_2$ is $\varphi$. If $\pi\circ\widetilde{\varphi}\circ \pi=0$, then $\varphi=0$.  Equivalently, if $j\colon V_0\hookrightarrow M$ denotes the inclusion, we have that $\pi\circ \widetilde{\varphi}\circ j=0$ implies $\varphi=0$.   
\end{lemma}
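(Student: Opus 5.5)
We will use the explicit description of $E_2$ in this odd symplectic setting, as obtained in the proof of Lemma \ref{lemma:zero-d_1-implies-level_geq2} together with Corollary \ref{cor:E1=E2}. Namely, $Z_2=\ker(d_0)$, $B_2=\mathrm{Im}(d_0)$, and $E_2=E_1=H(M,d_0)$. By the strong deformation retract $(\pi,K)$, every class in $E_2$ has a canonical representative in $V_0=\pi(M)$: if $x\in\ker(d_0)$, then $x-\pi x=d_0Kx+Kd_0x=d_0Kx\in B_2$. Moreover, the map $V_0\to E_2$ induced by $j$ is an isomorphism. It is injective because $V_0\cap\mathrm{Im}(d_0)=0$: if $v=d_0y$ with $v\in V_0$, then $v=\pi v=\pi d_0 y=0$, since $\pi$ vanishes on $d_0$-exact elements. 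It is also surjective by the previous observation. Note that $V_0\subseteq\ker(d_0)$ by Remark \ref{rem:zero-on-V0}.

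The plan is to compute $\varphi$ on a class $[v]$ represented by $v\in V_0$. By hypothesis $\widetilde\varphi(v)\in Z_2=\ker d_0$, and $\varphi([v])=[\widetilde\varphi(v)]$. Applying the identity above to $x=\widetilde\varphi(v)$ gives
\[
\widetilde\varphi(v)-\pi\widetilde\varphi(v)=d_0K\widetilde\varphi(v)\in B_2,
\]
so $[\widetilde\varphi(v)]=[\pi\widetilde\varphi(v)]=[\pi\widetilde\varphi\pi v]$, using $v=\pi v$. If $\pi\circ\widetilde\varphi\circ\pi=0$, then $\varphi([v])=0$ for every $v\in V_0$, and by surjectivity of $V_0\to E_2$ we conclude $\varphi=0$.

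For the equivalent formulation we note that $\pi=j\circ p$, where $p\colon M\to V_0$ is the projection, which is surjective. Hence $\pi\widetilde\varphi\pi=(\pi\widetilde\varphi j)\circ p$, and this vanishes if and only if $\pi\widetilde\varphi j$ vanishes. There is no real obstacle here. The only point requiring care is the use of $\widetilde\varphi(Z_2)\subseteq Z_2$, which guarantees that $d_0\widetilde\varphi(v)=0$ and hence that the term $Kd_0$ drops out of the homotopy formula.
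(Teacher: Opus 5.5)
Your proof is correct and uses the same mechanism as the paper: $Z_2=\ker(d_0)$, $B_2=\mathrm{Im}(d_0)$, and the identity $\mathrm{Id}_M-\pi=[d_0,K]$ maps $\ker(d_0)$ into $\mathrm{Im}(d_0)$, so $\pi$ induces the identity on $E_2$. The paper expands $\pi\circ\widetilde{\varphi}\circ\pi$ with this identity on both sides to get $\widetilde{\varphi}(Z_2)\subseteq B_2$ directly, whereas you first pass to representatives in $V_0$ and then apply the identity once more on the left; this is a harmless repackaging, and the injectivity of $V_0\to E_2$ that you prove is not actually needed.
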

\begin{proof}
    Since by Corollary \ref{cor:E1=E2} we have that $d_1:E_1\to E_1$ is the zero morphism, we know from the proof of Lemma \ref{lemma:zero-d_1-implies-level_geq2} that $Z_2=\ker(d_0)$ and $B_2=\mathrm{Im}(d_0)$. 
    From 
    \[
    \pi=\mathrm{Id}_M-[d_0,K]
    \]
    we obtain
    \[
    \pi\circ \widetilde{\varphi}\circ \pi=\widetilde{\varphi}-\widetilde{\varphi}\circ[d_0,K]-[d_0,K]\circ \widetilde{\varphi}+[d_0,K]\circ\widetilde{\varphi}\circ [d_0,K].
    \]
    Therefore, if $\pi\circ\widetilde{\varphi}=0$ then 
    \[
    \widetilde{\varphi}=\widetilde{\varphi}\circ[d_0,K]+[d_0,K]\circ \widetilde{\varphi}-[d_0,K]\circ\widetilde{\varphi}\circ [d_0,K].
    \]
    We have $[d_0,K]\colon \ker(d_0)\to \mathrm{Im}(d_0)$. Hence, the fact that $\widetilde{\varphi}$ preserve $Z_2$ and $B_2$ implies that $\widetilde{\varphi}(Z_2)\subseteq B_2$ and so $\varphi=0$. To conclude, we have to show that $\pi\circ \widetilde{\varphi}\circ j=0$ if and only if $\pi\circ \widetilde{\varphi}\circ \pi=0$. This is immediate from the fact that $\pi$ is the projection on $V_0$.
\end{proof}

\begin{proposition}\label{prop:p0-suffices}
 The $P_0$- (or Gersthenhaber) bracket induced by the \v{S}evera bicomplex on $C^\infty(\mathcal{M})$ is the canonical degree $1$ (or $-1$) Poisson bracket $\{\,,\,\}_{\Pi}$ defined by the Poisson bivector $\Pi$, i.e., the bracket
 \[
 \{f,g\}_\Pi=\iota_\Pi(df\wedge dg).
 \]
\end{proposition}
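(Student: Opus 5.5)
The plan is to compute the bracket directly from its definition as a derived bracket of $d_2$ on $E_2$, and then to reduce the resulting operator identity on $E_2$ to an identity on $V_0$ using Lemma \ref{lem:p0-suffices}.

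By Example \ref{ex:order2} and Corollary \ref{cor:P0-from-E2}, the $P_0$-bracket $\{a,b\}$ on $A=C^\infty(\mathcal{M})$ is characterized by
\[
[[\iota_a,d_2],\iota_b]=\iota_{\{a,b\}}\quad\text{on }E_2.
\]
Here $\iota_a$ is multiplication by $a$, with the degree conventions of Remark \ref{rem:from-rho-to-iota}. By Lemma \ref{lem:towards-spectral}, $d_2$ is induced by $\delta_2=d_1Kd_1$. Since the constructions of Proposition \ref{prop:subquotients} commute with brackets against $\rho_a$, the operator $[[\iota_a,d_2],\iota_b]$ on $E_2$ is induced by $\widetilde{\varphi}=[[\iota_a,\delta_2],\iota_b]-\iota_{\{a,b\}_\Pi}$, an operator on $M$ preserving $Z_2=\ker d_0$ and $B_2=\mathrm{Im}\,d_0$. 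By Lemma \ref{lem:p0-suffices}, it suffices to show $\pi\circ\widetilde{\varphi}\circ j=0$, i.e.
\[
\pi[[\iota_a,d_1Kd_1],\iota_b]\big|_{V_0}=\{a,b\}_\Pi\cdot\ \text{on }V_0 .
\]
Injectivity of $\iota$ on $E_2$ (a free rank 1 module) then yields $\{a,b\}=\{a,b\}_\Pi$.

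Next I expand the double commutator. By Lemma \ref{lem:leibniz}, $[d_1,\rho_a]=da\wedge-$, and $K$ is $A$-linear. Hence, up to Koszul signs that I will fix carefully,
\[
[[\rho_a,d_1Kd_1],\rho_b]=\pm\,(da\wedge)K(db\wedge)\pm(db\wedge)K(da\wedge).
\]
Only the terms where both $d_1$'s are hit survive; the remaining terms involve $[K,\rho]=0$ or a single commutator, which cancels in the second bracket. So the key computation is to evaluate $\pi\,(da\wedge)\,K\,(db\wedge)\,\eta$ for $\eta\in V_0$, which pointwise is a linear-algebra question.

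For this I work in a Darboux chart with $\eta=f\,dq^1\cdots dq^n$, using Remark \ref{rem:zero-on-V0}. Writing $db=\partial_{q^i}b\,dq^i+\partial_{p_i}b\,dp_i$, the $dq$-components kill $\eta$, so $db\wedge\eta$ has $|I|=n$ and $|J|=1$ and lies in $V_1$. There $K=\iota_\Pi$. Applying $\iota_\Pi=\iota_{\partial_{q^i}}\iota_{\partial_{p_i}}$ removes the $dp_i$ and one $dq^i$, landing in $V_1$ with $|I|=n-1$ and $|J|=0$. Wedging with $da$ and projecting to $V_0$ keeps only the $dq^i$-component of $da$, which restores $dq^1\cdots dq^n$. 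The result is $\pm\partial_{q^i}a\,\partial_{p_i}b\,\eta$, and the symmetric term contributes $\pm\partial_{q^i}b\,\partial_{p_i}a\,\eta$. Their combination should be exactly $\iota_\Pi(da\wedge db)\,\eta$.

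The main obstacle is the sign bookkeeping: the prefactor convention of Remark \ref{rem:derived-bracket}, the Koszul signs of the odd $p_i$, and the affine regrading all intervene. I would handle this by first checking the coordinate case $a=q^i$, $b=p_j$ to fix the overall normalization, and then invoking the Leibniz property (Proposition \ref{prop:p0}) for both brackets. Two $P_0$-brackets that agree on coordinate functions, and are both biderivations, agree locally, hence globally.
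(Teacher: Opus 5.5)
Your proposal is correct and follows the paper's proof essentially step by step. Both arguments lift $d_2$ to $\delta_2=d_1Kd_1$ and reduce via Lemma \ref{lem:p0-suffices} to checking $\pi\circ\widetilde{\varphi}\circ j=0$ on $V_0$. Both expand the double commutator into $\rho_{da}K\rho_{db}\pm\rho_{db}K\rho_{da}$ using the Leibniz rule and the $A$-linearity of $K$, and then evaluate on $dq^1\wedge\cdots\wedge dq^n$ in a Darboux chart. The only deviation is your sign-fixing shortcut: you check coordinate functions and use that both brackets are biderivations (Proposition \ref{prop:p0}), whereas the paper just carries out the general computation. The shortcut is legitimate, but you need to check all coordinate pairs, not just $(q^i,p_j)$; the pairs $(q^i,q^j)$ and $(p_i,p_j)$ give zero on both sides by the same computation. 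You also need to invoke the locality of derivations to pass from coordinate functions to arbitrary ones.
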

\begin{proof}
For any $\eta\in \Omega$, let us denote by $\rho_\eta\colon M\to M[\deg_\Omega \eta]$ the $A$-linear operator $\rho_\eta=\eta\wedge-$. With this notation, the Leibniz identity for the de Rham differential reads $[d_1,\rho_\eta]=\rho_{d_1 \eta}$, for any $\eta\in \Omega$. 
The $P_0$-bracket on $C^\infty(\mathcal{M})$ induced by the \v{S}evera bicomplex is characterized by the equation
\[
[[d_2, \iota_f],\iota_g]=\iota_{\{f,g\}},\qquad \text{for any}\qquad f,g\in \mathfrak{g}_A
\]
i.e.,
\[
[[d_1Kd_1, \rho_f],\rho_g]=\rho_{\{f,g\}},\qquad \text{for any}\qquad f,g\in A,
\]
in $\mathrm{End}_k(E_2)$. Since $E_2$ is a rank 1 free $A$-module, showing that $\{\,,\,\}=\{\,,\,\}_\Pi$ is therefore equivalent to showing that the $k$-linear endomorphism $\varphi_{f,g}$ of $M$ given by $\varphi_{f,g}=[[d_1Kd_1, \rho_f],\rho_g]-\rho_{{\{f,g\}}_\Pi}$ induces the zero morphism on $E_2$. In order to apply Lemma \ref{lem:p0-suffices}, we show that $\varphi_{f,g}(Z_2)\subseteq Z_2$ and $\varphi_{f,g}(B_2)\subseteq B_2$. By $A$-linearity, both $\rho_f,\rho_g$ and $\rho_{\{f,g\}}$ preserve $Z_2=\ker(d_0)$ and $B_2=\mathrm{Im}(d_0)$. The fact that $\delta_2=d_1Kd_1$ preserves $Z_2$ and $B_2$ has been shown in 
Lemma \ref{lem:towards-spectral}. Hence $\varphi_{\{f,g\}}$ preserves both $Z_2$ and $B_2$ and by Lemma \ref{lem:p0-suffices} in order to prove that $\varphi_{\{f,g\}}$ induces the zero morphism on $E_2$ we only need to show that $\pi\circ \varphi_{\{f,g\}}\circ j=0$.

Since $K$ is $A$-linear, we have $[K,\rho_f]=[K,\rho_g]=0$ for every $f,g\in \mathfrak{g}_A$. Using this, the identity $[\rho_\eta,\rho_\beta]=0$ for any $\eta,\beta\in \Omega$, and the Leibniz identity $[d_1,\rho_\eta]=\rho_{d_1 \eta}$ for any $\eta\in \Omega$, we find
    \begin{align*}
        [[d_1 K d_1, \rho_f], \rho_g]
        = & \, [[d_1,\rho_f]K d_1+d_1 K [d_1,\rho_f], \rho_g]\\
=&[\rho_{d_1 f}K d_1 +d_1K \rho_{d_1f}, \rho_g] \\
=&\rho_{d_1 f}K [d_1,\rho_g] +(-1)^{\deg_A(f)\deg_A(g)}[d_1,\rho_g]K \rho_{d_1f} \\
= &  \rho_{d_1 f} K \rho_{d_1 g} + (-1)^{\deg_A(f)\deg_A(g)} \rho_{d_1 g} K \rho_{d_1 f} \\
= &  \rho_{d f} K \rho_{d g} + (-1)^{\deg_A(f)\deg_A(g)} \rho_{d g} K \rho_{d f} \,
    \end{align*}
in $\mathrm{End}_k(M)$, where in the last step we used that $d_1$ is the de Rham differential $d$. Showing that $\pi\circ \varphi_{\{f,g\}}\circ j=0$ is therefore equivalent to showing that, for any $\xi\in V_0$, we have
\[
\pi(d f\wedge  K (d g\wedge \xi) + (-1)^{\deg_A(f)\deg_A(g)} d g\wedge K(d f\wedge \xi))=\{f,g\}_\Pi \xi.
\]
 By $A$-linearity of $\rho_{\{f,g\}}$ and by the local expression \eqref{eq:local}, we may assume $\xi\bigr\vert_U=dq^1\wedge \cdots \wedge dq^n$ in a Darboux chart $U$. Recalling that $d_1$ is the de Rham differential, and noticing that the de Rham differential maps $V_0$ to $V_1$, we find
\begin{align*}
\pi(d f\wedge  K (d g\wedge \xi))\biggr\vert_U&=\sum_{i=1}^n df\wedge \iota_\Pi (\frac{\partial g}{\partial p_i} dp_i\wedge dq^1\wedge \cdots \wedge dq^n)\\
&=\sum_{i=1}^n (-1)^{\deg_A(g)+i}\pi (df\wedge \frac{\partial g}{\partial p_i} \wedge dq^1\wedge \cdots \widehat{d q^i} \wedge\cdots \wedge dq^n)\\
&=\sum_{i=1}^n \frac{\partial f}{\partial q^i} \frac{\partial g}{\partial p_i} dq^1\wedge \cdots  \wedge dq^n.
\end{align*}
Hence,
\[
\pi(d f\wedge  K (d g\wedge \xi) + (-1)^{\deg_A(f)\deg_A(g)} d g\wedge K(d f\wedge \xi))\biggr\vert_U=
\sum_{i=1}^n \left(\frac{\partial f}{\partial q^i} \frac{\partial g}{\partial p_i}+ (-1)^{\deg_A(f)\deg_A(g)}\frac{\partial g}{\partial q^i} \frac{\partial f}{\partial p_i}\right)\xi\biggr\vert_U.
\]
Therefore we have $\{f,g\}\bigr\vert_U=\{f,g\}_\Pi\bigr\vert_U$ on every Darboux chart, and so $\{f,g\}=\{f,g\}_\Pi$ on $\mathcal{M}$.
\end{proof}
\begin{corollary}\label{cor:towards-bv-from-compatible-half-densities}
 A $d_2$-closed basis element $\zeta$ for $E_2$ defines a $BV_0$-operator $\Delta_\zeta$ enhancing the canonical $P_0$-algebra structure on  $C^\infty(\mathcal{M})$ to a $BV_0$-algebra structure, and so the canonical Gerstenhaber algebra structure on  $C^\infty(\mathcal{M})$ with the opposite grading to a Batalin-Vilkovisky algebra structure. 
\end{corollary}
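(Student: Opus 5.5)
The corollary is an assembly of results already established for the odd symplectic manifold $(\mathcal{M},\omega)$. By the Theorem above, $(M,d_0,d_1,\pi,K,A,\rho)$ is a \v{S}evera bicomplex of level $\geq 2$. By \v{S}evera's theorem, $E_2\cong \Dens^{1/2}\mathcal{M}$ is a free rank 1 $A$-module. So all hypotheses of the general corollary on \v{S}evera bicomplexes with $E_2$ free of rank 1 are satisfied.

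The plan has three steps.

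\textbf{Step 1.} By Proposition \ref{prop:in-diff-n}, $d_2\in\Diff_A^{\leq 2}(E_2,E_2)$. Moreover $d_2$ has total degree $1$ and $d_2^2=0$, so $(E_2,d_2)$ is a $P_0$-module.

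\textbf{Step 2.} Given a $d_2$-closed basis element $\zeta$, the triple $(E_2,d_2,\zeta)$ is by definition a $BV_0$-module. Proposition \ref{prop:bv0-module} then produces the operator $\Delta_\zeta:=\Delta_{d_2,\zeta}$, characterized by $\Delta_\zeta(a)\,\zeta=d_2(a\zeta)$. This is a second order operator of degree $1$ with $\Delta_\zeta^2=0$ and $\Delta_\zeta(1)=0$. Hence $(C^\infty(\mathcal{M}),\Delta_\zeta)$ is a $BV_0$-algebra.

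\textbf{Step 3.} I must check that this $BV_0$-algebra \emph{enhances} the canonical $P_0$-structure, i.e.\ that the bracket $\{\,,\,\}_\zeta$ of $\Delta_\zeta$, defined via \eqref{eq:Delta-and-bracket}, coincides with the canonical bracket. I would invoke the Lemma following Remark \ref{rem:canonical}: for any $BV_0$-module, $\{\,,\,\}_\zeta=\{\,,\,\}$, where $\{\,,\,\}$ is the bracket of the $P_0$-module $(E_2,d_2)$, independent of $\zeta$. By Corollary \ref{cor:P0-from-E2} and Proposition \ref{prop:p0-suffices}, this latter bracket is the canonical Poisson bracket $\{\,,\,\}_\Pi$. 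Therefore $\{\,,\,\}_\zeta=\{\,,\,\}_\Pi$.

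Finally, reversing the grading turns $BV_0$/$P_0$ into Batalin--Vilkovisky/Gerstenhaber, as in Definition \ref{def:bv} and the corollary after Proposition \ref{prop:p0}. This gives the stated enhancement of the canonical Gerstenhaber structure.

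The only point requiring care is Step 3. I must make sure that the bracket entering Proposition \ref{prop:bv0-module} through Lemma \ref{lem:bv-bracket} is the same one identified in Proposition \ref{prop:p0-suffices}. This holds because both are derived from $d_2$ on the free rank 1 module $E_2$. Existence of a $d_2$-closed basis element is a hypothesis of the statement, not something to be proved.
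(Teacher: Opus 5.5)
Your proposal is correct and is essentially the paper's own (implicit) argument. The paper gives no separate proof: the corollary is the general result on \v{S}evera bicomplexes with free rank~1 $E_2$ (Proposition \ref{prop:bv0-module} together with the lemma $\{\,,\,\}_\zeta=\{\,,\,\}$). It is specialized via the level $\geq 2$ theorem and \v{S}evera's theorem, with Proposition \ref{prop:p0-suffices} identifying the canonical $P_0$-bracket as $\{\,,\,\}_\Pi$.
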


{
Under the canonical identification $E_2\cong \Dens^{1/2}(\mathcal{M})$, a $d_2$-closed basis element $\zeta$ for $E_2$ is a nowhere vanishing half-density $\zeta$ such that $\Delta_{\mathrm{Dens}^{1/2}}\eta=0$. Half-densities satisfying this vanishing condition are called  ``compatible half-densities'' in the terminology of \cite{Cattaneo_2023} (see, in particular, \cite[Corollary 5.14(3)]{Cattaneo_2023}). By using this terminology, we obtain the following reformulation of Corollary \ref{cor:towards-bv-from-compatible-half-densities}.
\begin{corollary}\label{cor:bv-from-compatible-half-densities}
 A nowhere vanishing compatible half-density $\zeta$ on the odd symplectic manifold $(\mathcal{M},\omega)$  defines a $BV_0$-operator $\Delta_\zeta$ enhancing the canonical $P_0$-algebra structure on  $C^\infty(\mathcal{M})$ to a $BV_0$-algebra structure, and so the canonical Gerstenhaber algebra structure on  $C^\infty(\mathcal{M})$ with the opposite grading to a Batalin-Vilkovisky algebra structure. 
\end{corollary}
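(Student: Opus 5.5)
The statement is a direct reformulation of Corollary \ref{cor:towards-bv-from-compatible-half-densities}, so the plan is to transport its hypotheses through \v{S}evera's canonical isomorphism $E_2\cong \Dens^{1/2}(\mathcal{M})$ and then invoke that corollary. First, we recall that this isomorphism is an isomorphism of $A$-modules, $A=C^\infty(\mathcal{M})$, and that by definition $\Delta_{\mathrm{Dens}^{1/2}}$ is the operator obtained by conjugating $d_2$ with it. Hence, for a half-density $\zeta$, the condition $\Delta_{\mathrm{Dens}^{1/2}}\zeta=0$ is exactly the condition that the corresponding element of $E_2$ is $d_2$-closed. Second, we check that \emph{nowhere vanishing} corresponds to \emph{basis element}. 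Being a basis element of a free rank $1$ module is preserved by module isomorphisms. So it suffices to note that a half-density is a basis element of $\Dens^{1/2}(\mathcal{M})$ exactly when it is nowhere vanishing.

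For this second step we argue locally. In a Darboux chart, $\Dens^{1/2}$ is trivialized by the coordinate half-density, and $\zeta=f\,|Dq\,Dp|^{1/2}$ generates freely exactly when $f$ is invertible in $C^\infty(U)$. This in turn holds exactly when the body of $f$ vanishes nowhere. A partition of unity then glues the local inverses, giving a global $h$ with $\eta=h\zeta$ for every half-density $\eta$, with uniqueness coming from the local statement. Equivalently, via the local form \eqref{eq:local}, $\zeta$ corresponds to $f\,dq^1\wedge\cdots\wedge dq^n\in V_0$ with $f$ invertible.

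With both identifications in hand, Corollary \ref{cor:towards-bv-from-compatible-half-densities} applies verbatim. It gives $\Delta_\zeta=\Delta_{d_2,\zeta}$ via Proposition \ref{prop:bv0-module}. The fact that this enhances the canonical $P_0$-structure, and not merely some $P_0$-structure, follows from the Lemma after Remark \ref{rem:canonical}, which shows $\{\,,\,\}_\zeta=\{\,,\,\}$, combined with Proposition \ref{prop:p0-suffices}, which identifies this bracket with $\{\,,\,\}_\Pi$. Passing to the opposite grading then yields the Batalin--Vilkovisky statement.

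The only step with real content is the second one, the equivalence between \emph{nowhere vanishing} and \emph{free generator} for half-densities on a graded manifold. That step relies on the standard fact that a function on a graded manifold is invertible if and only if its body is nowhere zero; everything else is bookkeeping through the isomorphism.
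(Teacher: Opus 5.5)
Your route is the paper's. The paper obtains the corollary in one line: under \v{S}evera's $A$-module isomorphism $E_2\cong\Dens^{1/2}(\mathcal{M})$, the $d_2$-closed basis elements of $E_2$ are the nowhere vanishing half-densities killed by $\Delta_{\mathrm{Dens}^{1/2}}$, and one then restates Corollary \ref{cor:towards-bv-from-compatible-half-densities}. The extra bookkeeping you add (Proposition \ref{prop:bv0-module}, the Lemma after Remark \ref{rem:canonical}, Proposition \ref{prop:p0-suffices}) is exactly what stands behind that corollary.

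There is, however, a real gap in the one step you single out as having content, and the paper's identification passes over it in the same way. In this framework a basis element must be \emph{homogeneous}: by the footnote to Example \ref{example:for-bv}, $a\mapsto a\zeta$ has to be an isomorphism of \emph{graded} $A$-modules, otherwise $\Delta_{d_2,\zeta}$ need not have degree $1$. Your local argument only shows that a nowhere vanishing $\zeta$ freely generates the underlying ungraded module. Invertible functions on $\mathcal{M}$ need not be homogeneous, e.g.\ $(1+p_1)^{-1}=1-p_1$, and this breaks the conclusion.

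Here is a concrete instance. On $T^*[-1]\mathbb{R}^n$ let $\xi=dq^1\wedge\cdots\wedge dq^n\in V_0$ and $\zeta=[(1+p_1)\xi]$. Then $\zeta$ is nowhere vanishing and compatible, since $d_1Kd_1\bigl((1+p_1)\xi\bigr)=d\,\iota_\Pi(dp_1\wedge\xi)=\pm d(dq^2\wedge\cdots\wedge dq^n)=0$. On the other hand $d_1Kd_1\bigl(q^1(1+p_1)\xi\bigr)=\pm d(q^1\,dq^2\wedge\cdots\wedge dq^n)=\pm\xi$. Hence $\Delta_{d_2,\zeta}(q^1)=\pm(1+p_1)^{-1}=\pm(1-p_1)\neq0$. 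But any degree-$1$ operator must kill $q^1$, because $C^\infty(\mathcal{M})$ vanishes in degree $1$.

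So ``nowhere vanishing $\Rightarrow$ basis element'' is false as you state it, and the corollary itself needs ``half-density'' to mean a homogeneous one. The repair is to build in homogeneity. A homogeneous nowhere vanishing half-density is locally $f\,dq^1\wedge\cdots\wedge dq^n$ as in \eqref{eq:local} with $f$ of degree $0$, i.e.\ a nowhere vanishing function of the $q$'s alone. It therefore has an invertible homogeneous coefficient, and with that reading your argument goes through unchanged.
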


}

\section{Outlook: derived Cartan differential operators over a dg-commutative algebra}

The recursive definition of differential operators from Section \ref{sec:1} can be expressed as a sequence of iterated pullbacks. Namely, let us denote by 
\[
\mathrm{ad}\colon \Hom_k(M,N)\to \Hom_k(A,\Hom_k(M,N))
\]
the linear map defined by setting
\[
\mathrm{ad}_\Bc\colon a\mapsto [\Bc,\rho_a].
\]
Then $\Diff^{\leq 0}_A(M,N)$ is defined as the pullback of $k$-vector spaces
\[
\begin{tikzcd}
    {\Diff^{\le 0}_A(M,N)} \ar[r] \arrow[d] & 0 \ar[d] \\
\Hom_k(M,N) \ar[r]& \Hom_k(A,\Hom_k(M,N))
\end{tikzcd}
\]
and then $\Diff^{\le n}_A(M,N)$ is recursively defined as the pullback
\[
\begin{tikzcd}
    {\Diff^{\le n}_A(M,N)} \ar[r] \arrow[d] & \Hom_k(A,\Diff^{\le n-1}_A(M,N)) \ar[d] \\
\Hom_k(M,N) \ar[r]& \Hom_k(A,\Hom_k(M,N)).
\end{tikzcd}
\]
By the universal property of pullbacks and by the Hom-tensor adjunction, one can paste these diagrams into a single big diagram where each square is a pullback:
\[
\begin{tikzcd}
    {\Diff^{\le n}_A(M,N)} \ar[r] \arrow[d] & \Hom_k(A,\Diff^{\le n-1}_A(M,N)) \ar[d]
    \ar[r] \arrow[d] & \Hom_k(A^{\otimes 2},\Diff^{\le n-2}_A(M,N)) \ar[d]\ar [r] &\cdots \\
\Hom_k(M,N) \ar[r]& \Hom_k(A,\Hom_k(M,N))\ar[r]& \Hom_k(A^{\otimes 2},\Hom_k(M,N))\ar[r]&\cdots
\end{tikzcd}
\]
and similarly for Cartan differential operators. While this formulation may appear to be a use of fancy categorical jargon for the sake of itself, it turns out to be quite useful when one wants to generalize the definitions of Grothendieck and Cartan differential operators to a derived setting, where $A$ is a commutative dg-algebra and $M$ and $N$ are dg-$A$-modules. Indeed, in order to get this generalization one replaces the Hom spaces $\Hom_k(-,-)$ with the Hom complexes $\mathbb{H}\mathrm{om}_k$, and considers homotopy pullbacks\footnote{The appropriate notion of pullback in a derived setting.} instead of strict pullbacks. For instance, doing so, and using the shifted dg-vector space $\mathfrak{g}_A=A[-1]$ in place of $A$, we see that the dg-vector space $\mathbb{C}\mathrm{artan}^{\leq 2}_A(M,M)$ of derived differential Cartan operators of order $\leq 2$ is defined by the iterated homotopy pullback
\[
\begin{tikzcd}
    {\mathbb{C}\mathrm{artan}^{\le 2}_A(M,M)} \ar[r] \arrow[d] & \mathbb{H}\mathrm{om}_k(\mathfrak{g}_A,\mathbb{C}\mathrm{artan}^{\le 1}_A(M,M)[-1]) \ar[d] \ar[ld,Rightarrow]
    \ar[r] \arrow[d] & \mathbb{H}\mathrm{om}_k(\mathfrak{g}_A^{\otimes 2},\mathfrak{g}_A[-1]) \ar[d] \ar[ld,Rightarrow]\\
\mathbb{H}\mathrm{om}_k(M,M) \ar[r]& \mathbb{H}\mathrm{om}_k(\mathfrak{g}_A,\mathbb{H}\mathrm{om}_k(M,M)[-1])\ar[r]& \mathbb{H}\mathrm{om}_k(\mathfrak{g}_A^{\otimes 2},\mathbb{H}\mathrm{om}_k(M,M)[-2])
\end{tikzcd}
\]
Taking cohomology of the outer diagram 
\[
\begin{tikzcd}
    {\mathbb{C}\mathrm{artan}^{\le 2}_A(M,M)} \ar[r] \arrow[d]  & \mathbb{H}\mathrm{om}_k(\mathfrak{g}_A^{\otimes 2},\mathfrak{g}_A[-1]) \ar[d] \ar[ld,Rightarrow]\\
\mathbb{H}\mathrm{om}_k(M,M) \ar[r] & \mathbb{H}\mathrm{om}_k(\mathfrak{g}_A^{\otimes 2},\mathbb{H}\mathrm{om}_k(M,M)[-2])
\end{tikzcd}
\]
and using the fact we are working over a field $k$, we get the commutative diagram of graded $k$-vector spaces
\[
\begin{tikzcd}
    H({\mathbb{C}\mathrm{artan}^{\le 2}_A(M,M)}) \ar[r] \arrow[d]  & \Hom_k(H(\mathfrak{g}_A)^{\otimes 2},H(\mathfrak{g}_A)[-1]) \ar[d] \\
\Hom_k(H(M),H(M)) \ar[r] & \Hom_k(H(\mathfrak{g}_A)^{\otimes 2},\Hom_k(H(M),H(M))[-2])
\end{tikzcd}
\]
and, so, by the universal property of the pullback, a canonical map
\[
H({\mathbb{C}\mathrm{artan}^{\le 2}_A(M,M)})\to \Cartan^{\leq 2}_{H(A)}(H(M),H(M)).
\]
In particular, degree 1 cocycles in ${\mathbb{C}\mathrm{artan}^{\le 2}_A(M,M)}$ induce degree 1 Cartan operators of degree $\leq 2$ on $H(M)$ as an $H(A)$-module. Expanding the definition of ${\mathbb{C}\mathrm{artan}^{\le 2}_A(M,M)}$ as an iterated homotopy pullback, we see that a degree 1 cocycles in the dg-module of derived differential Cartan operators of order $\leq 2$ on a dg-$A$-module $(M,\bc)$ is the datum of
a linear operator 
\[\Bc\colon M\to M[1]\]
such that
\begin{equation}\label{eq:B-NTT}
[\bc,\Bc]=0,
\end{equation}
together with a linear map
\[
\mathcal{L}\colon \mathfrak{g}_A\otimes M\to M
\]
such that for any $a\in A$ the morphism $\mathcal{L}_a$ is a derived differential Cartan operator of order $\leq 1$, which is a cocycle, and a homotopy $\mathcal{S}$ between $\mathrm{ad}_\Bc$ and $\mathcal{L}$, i.e., a linear operator 
\[ 
\mathcal{S}\colon \mathfrak{g}_A\otimes M\to M[-1]
\]
such that, for any $a\in A$, one has
$\mathcal{L}_a-[\Bc,\iota_a]=[\bc,\mathcal{S}_a]+\mathcal{S}_{d_{\mathfrak{g}_A}a},
$
i.e., equivalently,
\begin{equation}\label{eq:lie-NTT}
\mathcal{L}_a=[\Bc,\iota_a]+[\bc,\mathcal{S}_a]+\mathcal{S}_{d_{\mathfrak{g}_A}a}.
\end{equation}
The fact that $\mathcal{L}$ is a cocycle means that
\begin{equation}\label{eq:lie2-NTT}
[\bc,\mathcal{L}_a]=\mathcal{L}_{d_{\mathfrak{g}_A}}
\end{equation}
for any $a$ in $\mathfrak{g}_A$, while 
the fact that $\mathcal{L}$ takes valued in derived differential Cartan operators of order $\leq 1$ means there exist a degree zero\footnote{I.e., a degree 1 bilinear map $\mathfrak{g}_A^{\otimes 2}\to \mathfrak{g}_A[-1]$.}
\[
\{\,,\,\}\colon \mathfrak{g}_A\otimes \mathfrak{g}_A\to \mathfrak{g}_A
\]
that is a cocycle, i.e., such that
\begin{equation}\label{eq:bracket-NTT}
 d_{\mathfrak{g}_A}\{a,b\}=   \{d_{\mathfrak{g}_A}a,b\}+(-1)^{\deg_{\mathfrak{g}_A(a)}}\{a,d_{\mathfrak{g}_A}b\}
\end{equation}
for any $a,b\in \mathfrak{g}_A$,
and a homotopy $\mathcal{T}$ between $\mathrm{ad}_\mathcal{L}$ and $\iota_{\{\,,\,\}}$, that is a linear operator\footnote{An operator of this kind first appeared in \cite{gdt} and is therefore called a Gelfan’d-Daletski\u{\i}-Tsygan homotopy in \cite{fiorenza-kowalzig2018}.}
\[
\mathcal{T}\colon \mathfrak{g}_A\otimes \mathfrak{g}_A\otimes M\to M[-2]
\]
such that
\begin{equation}\label{eq:T-NTT}
 [\mathcal{L}_a,\iota_b]-\iota_{\{a,b\}}=[\bc,\mathcal{T}_{a,b}] -\mathcal{T}_{d_{\mathfrak{g}_A}a,b}-(-1)^{\mathrm{deg_{\mathfrak{g}_A(a)}}}\mathcal{T}_{a,d_{\mathfrak{g}_A}b}.  
\end{equation}
Equations \eqref{eq:B-NTT}-\eqref{eq:T-NTT} show that noncommutative differential calculi in the sense of Nest--Tamarkin--Tsygan \cite{NestTsyganCohomology,Nest-Tsygan,TamarkinTsygan} define order $\leq 2$ degree 1 cocycles in $\mathbb{C}\mathrm{artan}_A(M,M)$, with the additional property $\Bc^2=0$, and so they induce an element, that we will denote by the same symbol,
\[
\Bc\in \Cartan_{H(A)}^{\leq 2}(H(M),H(M))
\]
with $\Bc^2=0$. From this and Propositions \ref{prop:p0} and \ref{prop:bv0-module} we obtain the following.
\begin{proposition}\label{prop:menichi-type}
 Let $(A,M,\iota,\mathcal{S},\mathcal{T},\bc,\Bc)$ be a noncommutative differential calculus in the sense of Nest--Tamarkin--Tsygan. If $H(M)$ is a rank 1 free $H(A)$-module, then the graded commutative algebra structure of $H(A)$ gets enriched to a $P_0$-algebra structure. If $\zeta$ is a basis element for the   free rank 1 $H(A)$-module $H(M)$ such that $\Bc(\zeta)=0$, then this $P_0$-algebra structure is further enriched to a $BV_0$-algebra structure. By taking opposite degrees on $H(A)$ one gets a Gerstenhaber and a Batalin--Vilkovisky algebra structure, respectively.  
\end{proposition}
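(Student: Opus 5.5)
The proof passes to cohomology and then applies the results of Section \ref{sec:cartan}.

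\emph{Step 1: the operators on cohomology.} Since $\bc$ is a differential, $\bc^2=0$, and \eqref{eq:B-NTT} says $[\bc,\Bc]=0$. Hence $\Bc$ induces a degree 1 operator on $H(M)$, which we denote by the same symbol, and $\Bc^2=0$ on $M$ gives $\Bc^2=0$ on $H(M)$. Since $\iota\colon\mathfrak{g}_A\otimes M\to M$ is a chain map, it induces an $H(A)$-module structure on $H(M)$.

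\emph{Step 2: $\Bc$ has order at most $2$ on $H(M)$.} Rather than invoking the homotopy-pullback formalism, I would check this directly from \eqref{eq:lie-NTT}--\eqref{eq:T-NTT}.

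Take $d_{\mathfrak{g}_A}$-closed $a$, so that $\iota_a$ commutes with $\bc$ and acts on $H(M)$. Then \eqref{eq:lie-NTT} reads $\mathcal{L}_a=[\Bc,\iota_a]+[\bc,\mathcal{S}_a]$. The term $[\bc,\mathcal{S}_a]$ vanishes on cohomology. Hence $[\Bc,\iota_a]$, computed on $H(M)$, equals the operator induced by $\mathcal{L}_a$. This operator is well defined by \eqref{eq:lie2-NTT}.

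Now take $a,b$ both closed. By \eqref{eq:bracket-NTT}, $\{a,b\}$ is closed, so $\{\,,\,\}$ descends to $H(\mathfrak{g}_A)$. Equation \eqref{eq:T-NTT} gives $[\mathcal{L}_a,\iota_b]=\iota_{\{a,b\}}+[\bc,\mathcal{T}_{a,b}]$, and the last term is zero in cohomology. Therefore, on $H(M)$,
\[
[[\iota_{[a]},\Bc],\iota_{[b]}]=\pm\,\iota_{\{[a],[b]\}}.
\]
The sign comes from $[\Bc,\iota_a]$ versus $[\iota_a,\Bc]$ and is fixed as in Example \ref{ex:order2}.

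By the characterization of $\Cartan^{\leq 2}$ through iterated commutators, this puts $\Bc$ in $\Cartan^{\leq 2}_{H(A)}(H(M),H(M))$. When $H(M)$ is free of rank 1, faithfulness holds automatically and $\Diff=\Cartan$ by Remark \ref{rem:locally-free}.

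\emph{Step 3: conclusion.} Steps 1 and 2 make $(H(M),\Bc)$ a $P_0$-module for $H(A)$. Lemma \ref{lem:p0-module}, equivalently Proposition \ref{prop:p0}, then yields the $P_0$-algebra structure on $H(A)$. If $\zeta$ is a basis element with $\Bc\zeta=0$, then $(H(M),\Bc,\zeta)$ is a $BV_0$-module, and Proposition \ref{prop:bv0-module} gives the $BV_0$-algebra $(H(A),\Delta_{\Bc,\zeta})$. By Remark \ref{rem:canonical} and the lemma following it, its bracket is the $P_0$-bracket just constructed. Reversing the grading gives the Gerstenhaber and Batalin--Vilkovisky statements.

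\emph{Main obstacle.} The delicate point is Step 2, namely the sign and degree bookkeeping for the shifts $\mathfrak{g}_A=A[-1]$ and for the degrees of $\mathcal{S}$ and $\mathcal{T}$. In particular one must check that the correction terms $\mathcal{S}_{d a}$ and $\mathcal{T}_{da,b}$ vanish for closed arguments. One must also check that $[\bc,X]$ induces zero on $H(M)$ only after using that the other operators involved preserve cocycles and coboundaries. I would handle this with the subquotient argument of Proposition \ref{prop:subquotients}.
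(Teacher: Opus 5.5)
Your argument is correct and follows the paper's outline. Show that the operator induced by $\Bc$ on $H(M)$ is a square-zero, degree $1$ Cartan operator of order $\leq 2$ for the $H(A)$-module $H(M)$, then apply Lemma \ref{lem:p0-module} and Proposition \ref{prop:bv0-module}.

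The difference is in how you get the order $\leq 2$ property. The paper reads \eqref{eq:B-NTT}--\eqref{eq:T-NTT} as saying that $(\Bc,\mathcal{L},\mathcal{S},\{\,,\,\},\mathcal{T})$ is a degree $1$ cocycle in the iterated homotopy pullback $\mathbb{C}\mathrm{artan}^{\leq 2}_A(M,M)$. It then uses the canonical map $H(\mathbb{C}\mathrm{artan}^{\leq 2}_A(M,M))\to\Cartan^{\leq 2}_{H(A)}(H(M),H(M))$, obtained by taking cohomology of the pullback square over a field and using the universal property of the strict pullback. You instead check the two commutator identities on $H(M)$ directly on closed representatives, which is exactly what that map does concretely.

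Your version is self-contained and does not depend on the homotopy-pullback formalism, which the paper only sketches. It also makes explicit, through Remark \ref{rem:canonical} and the lemma after it, why the $BV_0$-structure enriches the $P_0$-structure rather than just coexisting with it. The paper's version is built to scale to higher orders and to the $\mathrm{Comm}_\infty$ setting behind Menichi's theorem, where writing out the homotopies by hand gets unwieldy.

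The obstacle you anticipate is milder than you suggest. $\mathcal{S}_{da}$ and $\mathcal{T}_{da,b}$ vanish for closed $a$ by linearity alone. $[\bc,X]$ induces zero on $H(M)$ for every operator $X$, because it sends both cocycles and coboundaries to coboundaries. The only descent facts you need are that $\Bc$, $\iota_a$ and $\mathcal{L}_a$ commute with $\bc$ for closed $a$. These follow from \eqref{eq:B-NTT}, the dg-module structure, and \eqref{eq:lie2-NTT}.
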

\begin{remark}
    As in Remark \ref{rem:palladio}, one can express the condition that $\zeta$ is a basis element for the  rank 1 free $H(A)$-module $H(M)$ such that $\Bc(\zeta)=0$ in terms of a representative $\xi$ of $\zeta$ in $M$ by requiring $\xi$ is a Palladio element. 
\end{remark}
\begin{remark}
A natural generalization of the above construction consists in taking a more general $\mathrm{Comm}_\infty$-algebra $A$ in place of a dgca $A$ and a  $\mathrm{Comm}_\infty$-module $M$ for the  $\mathrm{Comm}_\infty$-algebra $A$, i.e., a dg-module $(M,\bc)$ equipped with an $A_\infty$-morphism $\rho\colon A\to (\mathrm{End}_k(M),[\bc,-])$, instead of a dg$-A$-module $M$. Doing so, the theory of derived Cartan operators is general enough to contain the noncommutative differential calculus on the Hochschild complex of an associative algebra and more generally of a cyclic operad with multiplication. The $\mathrm{Comm}_\infty$-version of Proposition \ref{prop:menichi-type} then reproduces Menichi's Batalin-Vilkovisky algebra structure on Hochschild cohomology of Calabi-Yau algebras \cite{Menichi2007} as well as the generalization of this result to the cohomology of a cyclic operad with multiplication. See \cite{fiorenza-kowalzig2020} for an overview of these results and additional details. The rigorous construction of derived Cartan operators for $\mathrm{Comm}_\infty$-algebras goes beyond the aim of the present article, so we content ourselves with this sketchy remark here.
\end{remark}

\section*{Acknowledgments}
The authors thank
the Mittag-Leffler Institute for its hospitality during the 2025 program {\it Cohomological Aspects of Quantum
Field Theory}. 
We wish to express our gratitude to P.~\v{S}evera, M.~Cueca Ten, J.~Pulmann, F.~Bonechi, A.~Cattaneo, E.~Getzler for stimulating interactions. We would like to thank the organisers of the "Fifty Years of BRST" conference in LMU Munich, especially I.~Sachs, for bringing us together in March 2026. E.B.~is supported by GA\v{C}R grant PIF-OUT 24-10634O. D.F.'s research is part of the activities of the Department of Excellence (Dipartimento di Eccellenza) Project CUP B83C23001390001. D.F.~is a member of the Gruppo Nazionale per le Strutture Algebriche,
Geometriche e le loro Applicazioni (GNSAGA–INdAM).
\bibliographystyle{alpha}
\bibliography{biblio}

\end{document}